\newenvironment{mytikz}[1][0em]{
\begin{tikzpicture}[>=latex,auto,node distance=6.75em,baseline={([yshift=#1]current bounding box.east)}]

  \tikzstyle{w}=[draw,circle,minimum size=3em]

  \tikzstyle{e}=[draw,minimum size=2.5em,node distance=3cm]
  
  \tikzstyle{d}=[densely dashed]

  \tikzstyle{every edge}=[draw,font=\footnotesize]
  
  \tikzstyle{every label}=[font=\footnotesize]

  \tikzstyle{zz}=[decorate,decoration={snake,post=lineto,post length=6pt}]
  
  \tikzstyle{ev}=[anchor=west,node distance=2em]

  \tikzstyle{bigger}=[minimum size=4em]

  \tikzstyle{bit-farther}=[node distance=2.3cm]

  \tikzstyle{farther}=[node distance=3cm]

  \tikzstyle{farthest}=[node distance=4cm]

  \tikzstyle{l}=[node distance=3.5em]
}{\end{tikzpicture}}
\newcommand{\imp}{\to}                   
\renewcommand{\iff}{\leftrightarrow}     
\newcommand{\from}{\leftarrow}           
\newcommand{\Prop}{\mathscr{P}}   
\newcommand{\Agnt}{\mathscr{A}}   
\newcommand{\Nat}{\mathbb{N}}     
\newcommand{\Actm}{\mathfrak{A}}   
\newcommand{\restricted}{\mathfrak{R}}  
\newcommand{\pre}{\mathsf{pre}} 
\newcommand{\DEL}{\mathsf{DEL}}   
\newcommand{\DETL}{\mathsf{DETL}} 
\newcommand{\YDEL}{\mathsf{YDEL}}  
\newcommand{\RDETL}{\mathsf{RDETL}} 
\newcommand{\lsetl}{{L_{\mathsf{SETL}}}}          
\newcommand{\ldetl}{{L_{\mathsf{DETL}}}}          
\newcommand{\lydel}{{L_{\mathsf{YDEL}}}}		
\newcommand{\lrdetl}{L_{\mathsf{RDETL}}}    
\newcommand{\leadsfrom}{%
  \mathrel{\raisebox{0em}{%
      \rotatebox[origin=c]{90}{%
        \reflectbox{\rotatebox[origin=c]{90}{$\leadsto$}}}}}}
\newcommand{\may}[1]{\langle{#1}\rangle} 
\newcommand{\pt}{{*}}              
\newcommand{\Sub}{\mathsf{Sub}}
\newcommand{\Cl}{\mathsf{Cl}}
\newcommand{\eqdef}{\stackrel{\mbox{\tiny def}}{=}}
\DeclareMathOperator{\mydepth}{d}                 
\theoremstyle{definition}
\newtheorem{definition}{Definition}[section]
\newtheorem{theorem}[definition]{Theorem}
\newtheorem{lemma}[definition]{Lemma}
\newtheorem{corollary}[definition]{Corollary}
\newtheorem{example}[definition]{Example}
\newtheorem{remark}[definition]{Remark}
\newcommand{\mytitle}{Logics of Temporal-Epistemic Actions}
\newcommand{\bryan}{Bryan Renne}
\newcommand{\bryanFunding}{Funded by an Innovational Research
  Incentives Scheme Veni grant from the Netherlands Organisation for
  Scientific Research (NWO).}
\newcommand{\joshua}{Joshua Sack}
\newcommand{\joshuaFunding}{Currently supported by the Netherlands
  Organisation for Scientific Research VIDI project 639.072.904.}
\newcommand{\audrey}{Audrey Yap}
\title{\mytitle{}}
\author{\bryan{}\footnote{\bryanFunding} \and
  \joshua{}\footnote{\joshuaFunding} \and \audrey{}}
\date{}
\begin{document}
\maketitle


\begin{abstract}
  We present \emph{Dynamic Epistemic Temporal Logic}, a framework for
  reasoning about operations on multi-agent Kripke models that contain
  a designated temporal relation. These operations are natural
  extensions of the well-known ``action models'' from Dynamic
  Epistemic Logic.  Our ``temporal action models'' may be used to
  define a number of informational actions that can modify the
  ``objective'' temporal structure of a model along with the agents'
  basic and higher-order knowledge and beliefs about this structure,
  including their beliefs about the time.  In essence, this approach
  provides one way to extend the domain of action model-style operations
  from atemporal Kripke models to temporal Kripke
  models in a manner that allows actions to control the flow of time.
  We present a number of examples to illustrate the subtleties
  involved in interpreting the effects of our extended action models
  on temporal Kripke models.  We also study preservation of important
  epistemic-temporal properties of temporal Kripke models under
  temporal action model-induced operations, provide complete
  axiomatizations for two theories of temporal action models, and
  connect our approach with previous work on time in Dynamic Epistemic
  Logic.
 \end{abstract}

\section{Introduction}

Anyone who has been late to an appointment or missed a deadline is
aware that it is often difficult to keep track of time. This basic
difficulty is the motivation for this paper, which presents a
framework called \emph{Dynamic Epistemic Temporal Logic} that allows
us to reason about epistemic agents' changing beliefs about time, from
one point in time to the next. We will develop this framework by
combining techniques from the traditions of \emph{Epistemic Temporal
  Logic} (ETL) \cite{ParRam03} and \emph{Dynamic Epistemic Logic}
(DEL) \cite{BalMos04,BalMosSol98,BalDitMos08,BenEijKoo06,DitHoeKoo07}.

Section \ref{section:system} introduces the syntax and semantics of
Dynamic Epistemic Temporal Logic (DETL). Section
\ref{section:examples} then highlights several features of this system
by presenting a number of examples that illustrate different ways of
measuring the time at a world in a model. The proof system and
completeness results for DETL appear in Section
\ref{section:proof-system}. In Section \ref{section:preservation}, we
study the preservation under updates of several model-theoretic
properties that one might wish to enforce so as to ensure models have
sensible temporal structure. Finally, we conclude in Section
\ref{section:connections} by connecting DETL to other work concerned
with adding time to DEL.


\section{Our System}
\label{section:system}

We begin with a nonempty finite set $\Agnt$ of \emph{agents} and a
disjoint nonempty set $\Prop$ of \emph{(propositional) letters}.  Our
semantics is based on \emph{Kripke models (with yesterday).}  These
are structures $M=(W^M,\to^M,\leadsfrom^M,V^M)$ consisting of a
nonempty set $W^M$ of \emph{(possible) worlds}, a binary epistemic
accessibility relation $\to^M_a$ for each $a\in\Agnt$ indicating the
worlds $w'\from^M_a w$ agent $a\in\Agnt$ considers possible at $w$, a
binary temporal accessibility relation $\leadsfrom^M$ indicating the
worlds $w' \leadsto^M w$ to be thought of as a ``yesterday'' of (i.e.,
fall one clock-tick before) world $w$,\footnote{For all structures
  $X$, let $\leadsto^X$ denote the converse of $\leadsfrom^X$ and let
  $\from^X_a$ denote the converse of $\to^X_a$. Our discussion of
  temporal issues will typically use $\leadsto$ rather than
  $\leadsfrom$ because the former follows the natural direction of
  time's flow.  Here and elsewhere, we will omit superscripts on
  relations when doing so ought not cause confusion.} and a
\emph{(propositional) valuation\/} $V^M:\Prop\to\wp(W^M)$ indicating
the set $V^M(p)$ of worlds at which propositional letter $p\in\Prop$
is true. For now, we do not place any restrictions on the behavior of
these relations, but later (in Definition
\ref{definition:model-properties}) we will introduce several desirable
properties that they will typically have in concrete examples. For a
binary relation $R$, a pair $(w,v)\in R$ is often called an ``$R$
arrow.'' A \emph{pointed Kripke model (with yesterday)}, sometimes
called a \emph{situation}, is a pair $(M,w)$ consisting of a Kripke
model and a world $w\in W^M$ called the \emph{point}.  To say that a
Kripke model (pointed or not) is \emph{atemporal} means that it
contains no $\leadsfrom$ arrows.

Pointed Kripke models $(M,w)$ describe fixed (i.e., ``static'')
epistemic-temporal situations in which agents have certain beliefs
about time, propositional truth, and the beliefs of other agents.  We
now define \emph{(epistemic-temporal) action models}, which transform
a situation $(M,w)$ into a new situation $(M[U],(w,s))$ according to a
certain ``product operation'' $M\mapsto M[U]$
defined in a moment (in Definition~\ref{definition:semantics}).

\begin{definition}[Action Models]
  \label{definition:update-frames}
  Let $F$ be a nonempty set of formulas.  An \emph{action model $U$
    over $F$} is a structure $(W^U,\to^U,\leadsfrom^U,\pre^U)$
  satisfying the following.
  \begin{itemize}
  \item $W^U$ is a nonempty finite set of informational \emph{events}
    the agents may experience.

  \item For each $a\in\Agnt$, the object $\to^U_a$ is a binary
    \emph{(epistemic) accessibility relation}. The relation $\to_a^U$
    designates the events $s'\from^U_a s$ that agent $a$ thinks are
    consistent with her experience of event $s$.

  \item $\leadsfrom^U$ is a binary \emph{temporal relation} indicating
    the events $s' \leadsto^U s$ that occur as a ``yesterday'' of
    (i.e., fall one time-step before) event $s$.

  \item $\pre^U:W^U\to F$ is a \emph{precondition function} assigning
    a \emph{precondition (formula)} $\pre^U(s)\in F$ to each event
    $s$.  The precondition $\pre^U(s)$ of event $s$ is the
    condition that must hold in order for event $s$ to occur.
  \end{itemize}
  A \emph{pointed action model over $F$}, sometimes called an
  \emph{action}, is a pair $(U,s)$ consisting of an action model $U$
  over $F$ and an event $s\in W^U$ called the \emph{point}.  To say
  that an action model (pointed or not) is \emph{atemporal} means that
  it contains no $\leadsfrom$ arrows. We define the following sets:
  \begin{itemize}
  \item $\Actm(F)$ is the set of action models over $F$,

  \item $\Actm^a(F)$ is the set of atemporal action models over $F$,

  \item $\Actm_*(F)$ is the set of pointed action models over $F$, and

  \item $\Actm^a_*(F)$ is the set of pointed atemporal action models
    over $F$.
  \end{itemize}
\end{definition}

Atemporal action models were developed by Baltag, Moss, and Solecki
\cite{BalMos04,BalMosSol98} and have been adapted or extended in
various ways in the Dynamic Epistemic Logic literature in order to
reason about knowledge and belief change; see the textbook
\cite{DitHoeKoo07} for details and references.  Our contribution here
is the inclusion of temporal arrows $\leadsfrom$ within action
models. To say more about this, we first introduce some additional
terminology.

\begin{definition}[Progressions, Histories, Depth $\mydepth(w)$]
  \label{definition:progression}\label{definition:run}\label{definition:depth}
  A \emph{progression} is a finite nonempty sequence
  $\may{w_i}_{i=0}^n$ of states having $w_i\leadsto w_{i+1}$ for each
  $i<n$.  We say that a progression $\may{w_i}_{i=0}^n$ \emph{begins
    at $w_0$ and ends at $w_n$}.  The \emph{length} of a progression
  $\may{w_i}_{i=0}^n$ is the number $n$, which is equal to the number
  of $\leadsto$ arrows it takes to link up the states making up the
  progression (i.e., one less than the number of states in the
  progression).  A \emph{past-extension} of a progression $\sigma$ is
  another progression obtained from $\sigma$ by adding zero or more
  extra states at the beginning of the sequence (i.e., in the
  ``past-looking direction'' from $x$ to $y$ in the arrow $x\leadsfrom
  y$).  A past-extension is \emph{proper} if more than zero states
  were added.  A \emph{history} is a progression that has no proper
  past-extension.  For each state $w$, we define $\mydepth(w)$ as
  follows: if there is a maximum $n\in\Nat$ such that there is a
  history of length $n$ that ends at $w$, then $\mydepth(w)$ is this
  maximum $n$; otherwise, if no such maximum $n\in\Nat$ exists, then
  $\mydepth(w) = \infty$.  We call $\mydepth(w)$ the \emph{depth of
    $w$}.  A state $w$ satisfying $\mydepth(w)=0$ is said to be
  \emph{initial}.
\end{definition}

We will present a number of examples shortly showing that the
inclusion of temporal $\leadsfrom$ arrows in both Kripke models and
action models allow us to reason about time in a Dynamic Epistemic
Logic-style framework.  The basic idea is this: if the depth
$\mydepth(w)$ of a world $w$ is finite, then the depth $\mydepth(w)$
of $w$ indicates the time at $w$; likewise, if the depth $\mydepth(s)$
of an event $s$ is finite, then the depth $\mydepth(s)$ of $s$
indicates the relative time at which event $s$ takes place.  To make
these notions coherent and useful, there are a number of things we
will do. First, we introduce our multi-modal language $\ldetl$ having
doxastic modalities $\Box_a\varphi$ (``agent $a$ believes $\varphi$'')
for each $a\in\Agnt$, the temporal modality $[Y]\varphi$ (``$\varphi$
was true `yesterday' (i.e., one time-step ago)''), and action model
modalities $[U,s]\varphi$ (``after action $(U,s)$ occurs, $\varphi$ is
true'').

\begin{definition}[Languages $\ldetl$ and $\lsetl$]
  \label{definition:language}
  The set $\ldetl$ of \emph{formulas $\varphi$ of Dynamic Epistemic
    Temporal Logic} and the set $\Actm_*(\ldetl)$ of pointed action
  models over $\ldetl$ are defined by the following recursion:
  \[
  \begin{array}{lcl}
    \varphi &::=&
    p \mid \lnot\varphi \mid \varphi\land\varphi \mid
    \Box_a\varphi \mid [Y]\varphi
    \mid [U,s]\varphi
    \\
    &&
    \text{\footnotesize
      $p\in\Prop$,
      $a\in\Agnt$,
      $(U,s)\in\Actm_\pt(\ldetl)$
    }
  \end{array}
  \]
  To say that a formula $\varphi$ is \emph{atemporal} means that every
  action model used in the formation of $\varphi$ according to the
  above recursion is atemporal.  We define the set $\lsetl$ of
  \emph{formulas of Simple Epistemic Temporal Logic} as the set of
  $\ldetl$-formulas that do not contain any action model modalities
  $[U,s]$.  We use the usual abbreviations from classical
  propositional logic to represent connectives other than those in the
  language, including those for the propositional constants $\top$
  (truth) and $\bot$ (falsehood); also,
  $\may{U,s}\eqdef\lnot[U,s]\lnot$, $\may{Y}\eqdef\lnot[Y]\lnot$, and
  $\diamondsuit_a\eqdef\lnot\Box_a\lnot$.
\end{definition}

\begin{definition}[Past State]
  \label{definition:past-state}
  Let $U$ be an action model.  A \emph{past state} is an event $s$ in
  $U$ that has no yesterday: there is no $s'\leadsto^U s$.
\end{definition}

Every history $s_0\leadsto s_1\leadsto s_2\leadsto\cdots\leadsto s_n$
begins with a past state (see
Definition~\ref{definition:progression}). This past state $s_0$ plays
a special role in the semantics by copying part or all of the initial
state of affairs before the remaining events in the history take
place.  The next definition shows how this is done.  The sequential
execution of successive events $s_1,\ldots,s_n$ then transform this
copy.

\begin{definition}[Semantics]
  \label{definition:semantics}
  We define the binary truth relation $\models$ between pointed Kripke
  models (written without delimiting parenthesis) and formulas by an
  induction on formula construction that has standard Boolean cases
  and the following non-Boolean cases.
  \begin{itemize}
  \item $M,w\models\Box_a\varphi$ means $M,v\models\varphi$ for each
    $v\from^M_aw$.

  \item $M,w\models[Y]\varphi$ means
    $M,v\models\varphi$ for each $v\leadsto^M w$.

  \item $M,w\models[U,s]\varphi$ means $M,w\models\pre^U(s)$ implies
    $M[U],(w,s)\models\varphi$, where
    \begin{itemize}
    \item $W^{M[U]} \eqdef \{ (v,t) \in W^M \times W^U \mid
      M,v\models\pre^U(t)\}$.

    \item We have $(v,t)\to^{M[U]}_a(v',t')$ if and only if both
      $v\to^M_av'$ and $t\to^U_a t'$.

    \item We have $(v',t') \leadsto^{M[U]} (v,t)$ if and only if we
      have at least one of the following:
      \begin{itemize}
      \item $v' \leadsto^M v$, $t'=t$, and $t$ is a past state; or

      \item $v'=v$ and $t' \leadsto^U t$.
      \end{itemize}

    \item $V^{M[U]}(p) \eqdef \{(v,t)\in W^{M[U]} \mid v \in V^M(p)\}$.
    \end{itemize}
  \end{itemize}
  Formula validity $\models\varphi$ means that $M,w\models\varphi$ for
  each pointed Kripke model $(M,w)$.  When it will not cause
  confusion, we will write the application of a function $f$ to a
  paired world $(v,t)\in W^{M[U]}$ as $f(v,t)$ instead of the more
  cumbersome $f((v,t))$. We may write $\models_\DETL$ in place of
  $\models$ later in the paper when other notions of truth are
  defined.
\end{definition}

After taking the update product $M\mapsto M[U]$, the epistemic
relation $\to_a$ behaves as it does in DEL \cite{DitHoeKoo07}; that
is, one pair is epistemically related to another iff they are related
componentwise. This is analogous to the notion of \emph{synchronous
  composition} in process algebra~\cite{Glabbeek_thelinear}. However,
our relations $\to_a$ are epistemic, whereas the relations in process
algebra are temporal.

The behavior of our temporal relation $\leadsto$ after the update
product $M\mapsto M[U]$ is analogous to the notion of \emph{asynchronous
  composition} from process
algebra~\cite{Aceto99structuraloperational}: one component of the pair
makes the transition while the other component remains fixed. However,
in our case, if $(v',t') \leadsto^{M[U]} (v,t)$, then the component it
is that makes the transition depends on whether $t$ is a past state.
If $t$ is indeed a past state, then the first component makes the
transition ($v'\leadsto v$) and the second component remains fixed
($t'=t$).  Otherwise, if $t$ is not a past state, then it is instead
the first component that remains fixed ($v'=v$) and the second
component that makes the transition ($t'\leadsto t$).

\begin{definition}[Epistemic Past State]
  \label{definition:epistemic-past-state}
  Let $U$ be an action model. An \emph{epistemic past state} is a past
  state $s$ in $U$ whose precondition is a validity (i.e.,
  $\models\pre^U(s)$) and whose only epistemic arrows are the
  reflexive arrows $s\to^U_a s$ for each agent $a\in\Agnt$.
\end{definition}

Like a past state (Definition~\ref{definition:past-state}), an
epistemic past state $s_0$ in a history $s_0\leadsto\cdots\leadsto
s_n$ plays the special role of copying the initial state of affairs
before the remaining events in the history take place.  However, there
is a key difference: a past state may copy only part of the initial
state of affairs, whereas an epistemic past state will always make a
complete copy.  A later result (Theorem~\ref{theorem:past-state}) will
explain this further. Therefore, a history $s_0\leadsto\cdots\leadsto
s_n$ beginning with an epistemic past state $s_0$ may be thought of as
describing the following construction: the epistemic past state $s_0$
makes a complete copy of the initial state of affairs (thereby
remembering the ``past'' just as it was) and then the remaining events
$s_1,\dots,s_n$ transform this copy (appending ``future'' states of
affairs one by one).  A series of examples in
Section~\ref{section:examples} will explain this in further detail.

In a significant departure from the temporal logic literature, our
language does not include a future operator $[T]$ (for ``tomorrow'')
that acts as a converse of our yesterday operator $[Y]$:
\begin{equation}
  M,w\models [T]\varphi \text{ means }
  M,v\models\varphi \text{ for each } v\leadsfrom^M w.
  \label{eq:T}
\end{equation}
The reason we omit the temporal operator $[T]$ is that the update
modal $[U,s]$ already functions as a forward-looking temporal operator
of a different sort. Such operators $[U,s]$ are parametrized future
operators.  A common approach (see \cite{Bal08}) to relating
parametrized and unparametrized operators is to have the
unparametrized operator quantify over the parametrized operators.  A
tomorrow operator defined that way would be significantly different
from $[T]$ as defined in \eqref{eq:T}.  Furthermore, having the $[T]$
as well as the update operators $[U,s]$ results in unintuitive
updates, which will be explained from various perspectives over the
course of the paper. For now, it suffices to say that our framework
has a static past (via the operator $[Y]$) and a dynamic future (via
the Kripke model-changing operators $[U,s]$).  To help make sense of
these notions of ``past'' and ``future,'' we will impose in all
concrete examples a number of restrictions on Kripke models and on
action models that allow us to provide a coherent and meaningful
account of time and of the flow of time within the setting of our
framework.  Many of the restrictions on Kripke models can be found
elsewhere in the literature of DEL and ETL
\cite{BenGerHos09,BenGerPac07,deglowwit11,Sac10,Sac08,Yap07}; however,
the identification of relevant action model-specific restrictions and
the use of Kripke model restrictions in action models having
$\leadsto$ arrows is, to the best of our knowledge, new (though it
builds off the authors' early work in \cite{RenSacYap09}).

\begin{definition}
  \label{definition:model-properties}
  The following properties may apply to Kripke models or to (pointed)
  action models.
  \begin{itemize}
  \item \emph{Persistence of Facts\/} (for Kripke models only):
    $w\leadsto w' \Rightarrow(w\in V(p)\Leftrightarrow w'\in V(p))$
    for all $w$ and $w'$.

    This says that propositional letters retain their values across
    temporal $\leadsto$ arrows.

  \item \emph{Depth-Definedness\/}: $\mydepth(w)\neq\infty$ for all
    $w$.

    This says that every world/event has a finite depth.

  \item \emph{Knowledge of the Past\/}: $(w'\leadsto w\to_a
    v)\Rightarrow\exists v'(v'\leadsto v)$ for all $a\in\Agnt$, $w'$,
    $w$, and $v$.

    This says that agents know if there is a past (i.e., a state
    reachable via a backward step along a $\leadsto$ arrow).

  \item \emph{Knowledge of the Initial Time\/}: $w\to_av\land
    \lnot\exists w'(w'\leadsto w)\Rightarrow \lnot\exists
    v'(v'\leadsto v)$ for all $a\in\Agnt$, $w$, and $v$.

    This says that agents know if there is no past.

  \item \emph{Uniqueness of the Past\/}: $(w'\leadsto w\land w''\leadsto
    w)\Rightarrow (w'=w'')$ for all $w'$, $w$, and $w''$.

    This says that there is only one possible past.

  \item \emph{Perfect Recall\/}: $(w\leadsto v\to_a v')\Rightarrow
    \exists w'(w\to_aw'\leadsto v')$ for all $a\in\Agnt$, $w$, $v$,
    and $v'$.

    This says that agents do not forget what they knew in the past.

  \item \emph{Synchronicity\/}: The structure is depth-defined and
    $w\to_a w'$ implies $\mydepth(w)=\mydepth(w')$ for all $a\in\Agnt$,
    $w$, and $w'$.

    This says that there is no uncertainty, disagreement, or
    mistakenness among the agents with regard to the depth of a
    world/event.

  \item \emph{History Preservation\/} (for action models $U$ only):
    $s'\leadsto^U s$ implies $\models\pre^U(s)\imp\pre^U(s')$ for all
    $s'$ and $s$; further, every past state in $U$ is an epistemic
    past state.

    This says that a non-initial event $s$ can take place only if its
    predecessor $s'$ can as well, and initial events can always take
    place (Definition~\ref{definition:epistemic-past-state}).

  \item \emph{Past Preservation\/} (for pointed action models $(U,s)$
    only): The action model $U$ is history preserving; further, every
    progression $s_n\leadsto^U\cdots\leadsto^Us_0=s$ that ends at $s$
    can be past-extended to a history
    \[
    s_{n+m}\leadsto^U 
    \cdots \leadsto^U s_{n+1} \leadsto^U
    s_n\leadsto^U\cdots\leadsto^Us_0=s
    \]
    that begins at an epistemic past state $s_{n+m}$. To say an event
    $t\in W^U$ is \emph{past preserving} means the action $(U,t)$ with
    point $t$ is past preserving.

    This says that there is a ``link to the past'' (i.e., an epistemic
    past state) via a sequence of $\leadsfrom$ arrows.  (The
    forthcoming Theorem~\ref{theorem:past-state} describes the
    consequences of this property.)

  \item \emph{Time-advancing\/} (for pointed action models $(U,s)$
    only): The action $(U,s)$ is past preserving and the point $s$ is
    not a past state.

    This says that the ``past'' is at least one time-step away.
  \end{itemize}
\end{definition}
For the moment, we do not require that our Kripke models or action
models satisfy any of the above properties. This will change in
Section~\ref{section:preservation}, where we study the preservation of
Kripke model properties under action models satisfying appropriate
properties, and in Section~\ref{section:connections}, where we impose
a number of these properties in order to study connections between our
framework and other approaches to the study of time in Dynamic
Epistemic Logic.

A note on the depth of worlds (Definition~\ref{definition:run}) in the
updated model $M[U]$: if world $w$ and state $t$ both have finite
depth, then the maximum depth of the resulting world $(w,t)\in
W^{M[U]}$ is $\mydepth(w)+\mydepth(t)$.  The reason: we can take at
most $\mydepth(t)$ backward steps in the second coordinate (fix the
first coordinate and proceed backward in the second until a past state
is reached), and we can take at most $\mydepth(w)$ backward steps in
the first coordinate (fix the second coordinate and proceed backward
in the first until an initial world is reached).  The actual depth of
$(w,t)$ need not obtain its maximum: when stepping backward in either
coordinate (with the other fixed), we may reach a pair $(w',s')$ whose
world $w'$ violates the precondition of the state $s'$ (i.e.,
$M,w'\not\models\pre^U(s')$) and therefore the pair $(w',s')$ will not
be a member of $W^{M[U]}$. However, if the action model $U$ is history
preserving, then this problem is avoided and hence
$\mydepth(w,t)=\mydepth(w)+\mydepth(t)$.

Finally, we note that we can express finite depth explicitly in our
language.

\begin{theorem}
  \label{theorem:time}
  For each non-negative integer $n$, define the formulas
  \[
  D_n \eqdef \may{Y}^n[Y]\bot \land [Y]^{n+1}\bot \quad\text{and}\quad
  D'_n \eqdef \may{Y}^n[Y]\bot \enspace.
  \]
  We have $M,w\models D_n$ if and only if $\mydepth(w)=n$.  Further,
  if $M$ satisfies uniqueness of the past, then we have $M,w\models
  D'_n$ if and only if $\mydepth(w)=n$.
\end{theorem}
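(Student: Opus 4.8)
The plan is to reduce the claim to two purely semantic lemmas characterizing the iterated operators $\may{Y}^k$ and $[Y]^k$, each proved by a routine induction on $k$, and then to read off the depth of $w$ from these characterizations. First I would prove, by induction on $k$, that $M,w\models\may{Y}^k\psi$ holds if and only if there is a backward path $w=u_0,u_1,\dots,u_k$ with $u_{i+1}\leadsto^M u_i$ for each $i<k$ and $M,u_k\models\psi$; dually, $M,w\models[Y]^k\psi$ holds if and only if $M,u_k\models\psi$ for every such backward path of length $k$. The base case $k=0$ is immediate, and the inductive step is a direct unfolding of the clause for $[Y]$ in Definition~\ref{definition:semantics} together with the abbreviation $\may{Y}=\lnot[Y]\lnot$. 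Instantiating $\psi$ with $\bot$ and using that $\bot$ is never satisfied yields the two facts I actually need: $M,w\models[Y]^{k}\bot$ iff there is no backward path of length $k$ ending at $w$, and $M,w\models\may{Y}^{k}[Y]\bot$ iff there is a backward path of length $k$ from $w$ terminating at a state with no yesterday, i.e.\ iff there is a history of length $k$ ending at $w$ (Definition~\ref{definition:progression}).

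With these in hand, the first equivalence follows by pinning down the longest backward path. The conjunct $\may{Y}^n[Y]\bot$ of $D_n$ supplies a history of length $n$ ending at $w$, while the conjunct $[Y]^{n+1}\bot$ forbids any backward path of length $n+1$, hence caps every backward path (in particular every history) at length $\le n$. Together these say that the longest backward path ending at $w$ has length exactly $n$ and terminates at a past state, which is precisely $\mydepth(w)=n$; the converse direction simply reverses this reading. The case $\mydepth(w)=\infty$ is handled uniformly: infinite depth means backward paths are unbounded or never reach a past state, so for every $n$ at least one conjunct of $D_n$ fails, and $D_n$ is accordingly false for all $n$, as required.

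For the second equivalence I would invoke uniqueness of the past. Under this property each world has at most one yesterday, so the backward paths from $w$ collapse to a single chain $w=u_0,u_1,u_2,\dots$ that either reaches a past state after exactly $n$ steps or runs on forever. In the first case the unique history ending at $w$ has length $n$, so $\may{Y}^k[Y]\bot$ holds iff $k=n$, giving $M,w\models D'_n$ iff $\mydepth(w)=n$; in the second case there is no history at all, $\mydepth(w)=\infty$, and every $D'_n$ is false. The upshot is that uniqueness makes the bounding conjunct $[Y]^{n+1}\bot$ redundant, since a single backward chain cannot simultaneously witness histories of two different lengths.

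The step I expect to be the main obstacle is the backward direction of the first equivalence in the presence of branching or infinite backward chains: one must ensure that a finite value of $\mydepth(w)$ genuinely forbids arbitrarily long backward paths, so that $[Y]^{n+1}\bot$ is forced. The conjunct $[Y]^{n+1}\bot$ is there precisely to exclude the configuration in which a short history coexists with a longer (or unbounded) backward chain, and the contrast with $D'_n$ makes clear why that conjunct may be dropped only after uniqueness of the past has ruled out branching toward the past.
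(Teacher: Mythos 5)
Your two characterization lemmas are correct and are the right skeleton: $M,w\models\may{Y}^n[Y]\bot$ iff some history of length $n$ ends at $w$, and $M,w\models[Y]^{n+1}\bot$ iff no backward path of length $n+1$ ends at $w$. These immediately give the left-to-right direction of the first equivalence, and your treatment of the second equivalence is sound: uniqueness of the past really does collapse the backward paths from $w$ into a single chain, which either terminates (and is then the unique history) or never does (and then no history ends at $w$, so $\mydepth(w)=\infty$). The problem is exactly the step you flag as ``the main obstacle'' and then never discharge: that $\mydepth(w)=n$ forbids backward paths of length $n+1$. Saying that the conjunct $[Y]^{n+1}\bot$ ``is there precisely to exclude'' the bad configuration describes its role in the forward direction; it is not an argument that the bad configuration is incompatible with $\mydepth(w)=n$. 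Under the paper's actual definition of depth (Definition~\ref{definition:depth}: the maximum length of a \emph{history} ending at $w$, where a history must begin at a state with no yesterday), it is not incompatible. Take $W=\{w,v,u\}$ with $v\leadsto w$, $u\leadsto w$, $u\leadsto u$, and $v$ having no yesterday. The only history ending at $w$ is $\langle v,w\rangle$ (progressions through $u$ never reach a past state, so they are not histories and do not affect $\mydepth$), whence $\mydepth(w)=1$; yet $u\leadsto u\leadsto w$ is a backward path of length $2$, so $M,w\not\models[Y]^2\bot$ and hence $M,w\not\models D_1$. Your proof tacitly replaces $\mydepth(w)$ by the supremum of the lengths of \emph{all} backward paths ending at $w$ --- that is what your gloss ``infinite depth means backward paths are unbounded or never reach a past state'' amounts to --- and it is that quantity, not the paper's $\mydepth$, which $D_n$ characterizes.

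To close the gap you must do one of two things. Either adopt the supremum-over-backward-paths reading of depth (with which your argument is complete, and with which the two readings agree on all models of actual interest), or keep the history-based definition and add a hypothesis ruling out the pathology, then prove the missing implication from it. For instance, if every world backward-reachable from $w$ along $\leadsto$ has some history ending at it (which holds in particular when $M$ is depth-defined, as in Definition~\ref{definition:model-properties}), the step follows by concatenation: given a backward path $u_{n+1}\leadsto\cdots\leadsto u_0=w$, prepend a history ending at $u_{n+1}$ to obtain a history of length at least $n+1$ ending at $w$, contradicting $\mydepth(w)=n$. Note finally why your uniqueness-of-the-past case needs no such repair: uniqueness forbids the two distinct yesterdays ($v$ and $u$ above) that the counterexample requires, so there a terminating chain and a non-terminating backward path cannot coexist, exactly as your argument assumes.
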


In the next section, we will discuss how the depth of a world $w$ can
be used as an explicit measure of the time at world $w$.  If one
adopts this measure of time, then it follows from
Theorem~\ref{theorem:time} that we can express the time of a world
within our language $\ldetl$: we say that ``the time at world $w$ is
$n$'' to mean that $w$ satisfies $D_n$.


\section{Examples}
\label{section:examples}

In this section, we will illustrate several features of our
system. This will demonstrate the way in which our system can
represent interesting epistemic situations as well as shed some light
on the ways in which time is treated in dynamic systems. First, we
will define \emph{explicit} and \emph{implicit} measures of time. An
\emph{explicit} measure of time is one in which the time of a world
$w$ in model $M$ can be determined solely by inspection of $M$.  As we
have seen, the depth of a world in a Kripke model (see Definition
\ref{definition:run}) can provide an explicit measure of the time at
that world, and this can be expressed explicitly in our language
(Theorem~\ref{theorem:time}).  In contrast, an \emph{implicit} measure of
time is one in which the time of a world $w$ in model $M$
\emph{cannot} be determined solely be inspection of $M$. For example,
if we measure the time at a world in $M$ by the number of updates that
have led up to $M$, this can provide an implicit measure of the time
at that world. These are not the only possible implicit and explicit
measures, but they are certainly natural ones within our $\DETL$
framework.

Now we will consider ways in which explicit and implicit
representations of time might differ. More specifically, we will
consider cases in which there is only a single update (implicitly
increasing the time by $1$) but at which the explicit time at the actual
world changes by a number other than $1$. Second, we will add in the
epistemic aspect, by demonstrating the ways in which agents can hold
differing beliefs about temporal and epistemic features of their
situation.

In all of these cases, we will begin with a basic Kripke model $(M,w)$
pictured in Figure~\ref{figure:M} in which agents $a$ and $b$ are
uncertain of the truth values of $p$ and $q$. The double outline
around world $w$ indicates that $w$ is the ``actual world'' (i.e.,
that $w$ is the point) and hence that $p$ and $q$ are actually
true. Throughout this section, we will assume that our epistemic
accessibility relations $\to_a$ and $\to_b$ are $\mathsf{S5}$ (i.e.,
that they are closed under reflexivity, transitivity, and symmetry);
however, for the sake of readability, we will not draw all transitive
arrows.

\begin{figure}
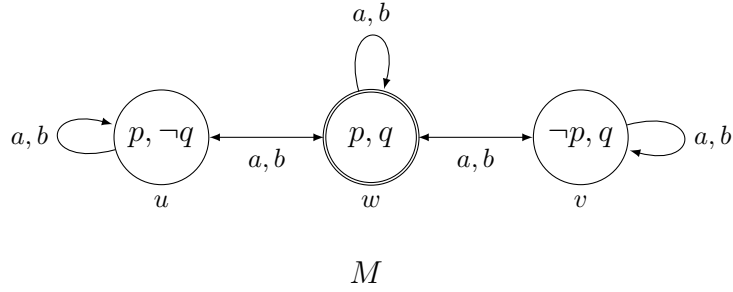

\begin{center}
  \begin{tabular}{c}
    \begin{mytikz}
      
      \node[w,double,label={below:$w$}] (w) {$p, q$};
      
      \node[w,right of=w,label={below:$v$}] (v) {$\lnot p, q$};
      
      \node[w,left of=w,label={below:$u$}] (u) {$p, \lnot q$};
      
      \path (w) edge[loop above] node{$a,b$} ();
      
      \path (w) edge[<->,swap] node{$a,b$} (v);
      
      \path (w) edge[<->] node{$a,b$} (u);
      
      \path (v) edge[loop right] node{$a,b$} ();
      
      \path (u) edge[loop left] node{$a,b$} ();
    \end{mytikz}
    \\\\
    $M$
  \end{tabular}
\end{center}
\caption{$(M,w)$, a situation in which $p, q\in\Prop$ are both true
  but the agents do not know that this is so. Agent arrows $\to_x$ are
  here implicitly closed under transitivity.}
\label{figure:M}
\end{figure}

\subsection{Explicit and Implicit Representations of Time}

We now discuss three examples related to discrepancies between the
number of updates that have occurred and the explicit time at a world
as determined by its depth. First, we will see ``standard'' behavior,
in which a single update increases the time of the actual world by $1$.

\begin{example}
\label{example:atemporal-update}
The action $(U_{\ref{figure:atemporal-update}},s)$
(Figure~\ref{figure:atemporal-update}) represents the public
announcement of $p$.  Applied to our initial situation $(M,w)$
(Figure~\ref{figure:M}), the result is
$(M[U_{\ref{figure:atemporal-update}}],(w,s))$.

\begin{figure}
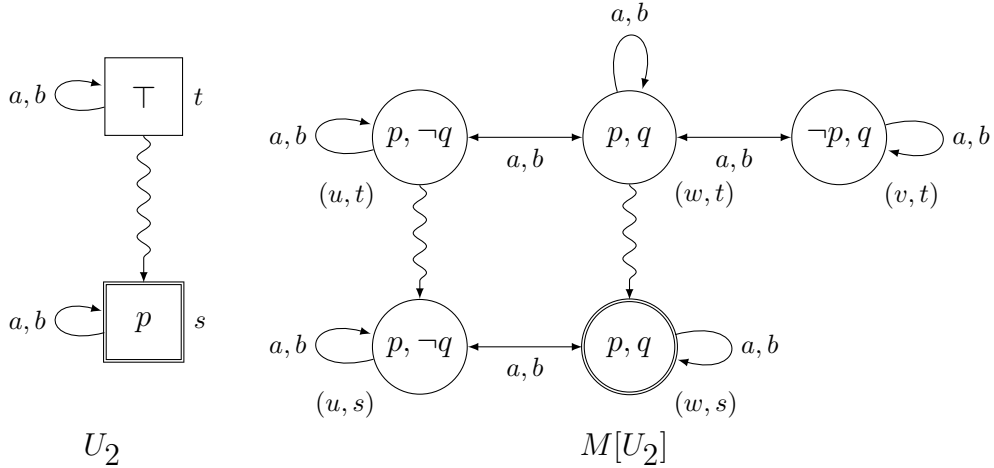

  \begin{center}
    \begin{tabular}{cc}
      \begin{mytikz}
        \node[e,label={right:$t$}] (t) {$\top$};
        
        \node[e,double,below of=t,label={right:$s$}] (s) {$p$};
        
        \path (t) edge[loop left] node{$a,b$} ();
        
        \path (t) edge[zz,->] node{} (s);
        
        \path (s) edge[loop left] node{$a,b$} ();
      \end{mytikz}
      &
      \begin{mytikz}
        \node[w,,xshift=2em,label={below left:$(u,t)$}] (ut) {$p, \lnot q$};
        
        \node[w,right of=ut,label={below right:$(w,t)$}] (wt) {$p, q$};
        
        \node[w,right of=wt,label={below right:$(v,t)$}] (vt) {$\lnot p, q$};
        
        \path (wt) edge[loop above] node{$a,b$} ();
        
        \path (wt) edge[<->,swap] node{$a,b$} (vt);
        
        \path (wt) edge[<->] node{$a,b$} (ut);
        
        \path (vt) edge[loop right] node{$a,b$} ();
        
        \path (ut) edge[loop left] node{$a,b$} ();
        
        \node[w,double,below of=wt,label={below right:$(w,s)$}] (ws) {$p, q$};
        
        \node[w,below of=ut,label={below left:$(u,s)$}] (us) {$p, \lnot q$};
        
        \path (ws) edge[loop right] node{$a,b$} ();
        
        \path (us) edge[loop left] node{$a,b$} ();
        
        \path (ws) edge[<->] node{$a,b$} (us);
        
        \path (wt) edge[zz,->] node{} (ws);
        
        \path (ut) edge[zz,->] node{} (us);
      \end{mytikz}
      \\
      $U_{\ref{figure:atemporal-update}}$ & 
      $M[U_{\ref{figure:atemporal-update}}]$
    \end{tabular}
  \end{center}
  \caption{Left: $(U_{\ref{figure:atemporal-update}},s)$, the public
    announcement of $p$. Right: The resulting situation
    $(M[U_{\ref{figure:atemporal-update}}], (w,s))$. Agent arrows
    $\to_x$ are here implicitly closed under transitivity. (Example
    \ref{example:atemporal-update})}
  \label{figure:atemporal-update}
\end{figure}

In the situation $(M,w)$ before the announcement, neither $a$ nor $b$
knew $p$; that is, $M,w\models \neg\Box_a p \wedge \neg \Box_b p$.  In
the situation $(M[U_{\ref{figure:atemporal-update}}],(w,s))$ after the
announcement, both know $p$; that is,
$M[U_{\ref{figure:atemporal-update}}],(w,s)\models \Box_a p \wedge
\Box_b p\enspace$.  But note that in the ``after'' model
$M[U_{\ref{figure:atemporal-update}}]$, we have a ``copy'' of the
original model $M$ consisting of the worlds $(u,t)$, $(w,t)$, and
$(v,t)$ and the arrows interconnecting these worlds.  As a result, we
can describe the agents' knowledge ``before'' and ``after'' all
together in the resultant situation:
\[
M[U_{\ref{figure:atemporal-update}}],(w,s)\models (\Box_a p \wedge
\Box_b p)\land \langle Y\rangle (\neg\Box_a p \wedge \neg \Box_b
p)\enspace.
\]
In words, ``$a$ and $b$ know $p$ but yesterday they did not.''
Further, we note that the (explicit) time at both the initial world
$w$ and its copy $(w,t)$ is $0$, and the time at the final world
$(w,s)$ is $1$.

The language of ordinary DEL is the atemporal fragment of $\ldetl$
without the $[Y]$ modality.  In this language, the only way to refer
to the agents' knowledge before the announcement is with reference to
the original situation $(M,w)$.  This is because ordinary DEL lacks
$\leadsto$ arrows, both in action models and in the underlying Kripke
models.
\end{example}

Example~\ref{example:atemporal-update} showed ``standard'' temporal
behavior: a single update increases the time of the actual world by
$1$.  Two obvious ways in which updates could be ``non-standard'' are
by not increasing the time when an update takes place or by increasing
the time by a number greater than $1$.

\begin{example}
  \label{example:no-time-increase}

  Here we consider the effect of the action
  $(U_{\ref{figure:no-time-increase}},t)$
  (Figure~\ref{figure:no-time-increase}) on our initial situation
  $(M,w)$ (Figure~\ref{figure:M}).  This action has a structure that
  is nearly identical to that of action
  $(U_{\ref{figure:atemporal-update}},s)$ in
  Example~\ref{example:atemporal-update}
  (Figure~\ref{figure:atemporal-update}); in fact, these actions are
  based on the same underlying action model (i.e.,
  $U_{\ref{figure:atemporal-update}}=U_{\ref{figure:no-time-increase}}$).
  However, the actual events of
  $(U_{\ref{figure:atemporal-update}},s)$ and
  $(U_{\ref{figure:no-time-increase}},t)$ are different.  The result
  of the update with the latter action is the pointed model
  $(M[U_{\ref{figure:no-time-increase}}],(w,t))$.  Note that the
  resultant actual world $(w,t)$ is among the worlds $(u,t)$, $(w,t)$,
  and $(v,t)$ that make up the ``copy'' of the initial model $M$.

  We have designed our system so that the initial world $w$ and its
  copy $(w,t)$ satisfy the same formulas.  In this way, we may
  identify each initial world with its copy, so that the collection of
  copied worlds (and the arrows interconnecting them) may be
  identified with the initial model itself.  This allows us to reason
  about what was the case in the initial model by evaluating formulas
  only within the resultant model.  In effect, we can ``forget'' the
  initial model because all of its information is copied over to the
  resultant model.

  To make this work, both $w$ and its copy $(w,t)$ must satisfy the
  same formulas.  We guarantee this by designing our system so that it
  ignores all ``future'' worlds, by which we mean the worlds
  accessible from the point only via a link $x\leadsto y$ from a
  ``past'' world $x$ to a ``future'' world $y$.\footnote{Formally, a
    $\ldetl$-formula $\varphi$ is true at a world $x$ if and only if
    $\varphi$ is true at $x$ even after we delete all worlds $y$
    satisfying the property that every path from $x$ to $y$ contains
    at least one $\leadsto$ arrow (followed in the ``forward''
    direction $z\leadsto z'$ from ``past'' $z$ to ``future'' $z'$).
    This is so because $\ldetl$ has no $[T]$ operator, as defined in
    \eqref{eq:T}.} So from the point of view of our theory, the
  time-$1$ worlds $(u,s)$ and $(w,s)$ in
  Figure~\ref{figure:no-time-increase} are ignored in the resultant
  time-$0$ situation $(M[U_{\ref{figure:no-time-increase}}],(w,t))$
  because the time-$1$ worlds can only be reached via a forward
  $\leadsto$ arrow.  This leaves only the ``copy'' of the initial
  model $M$ consisting of the worlds $(u,t)$, $(w,t)$, and $(v,t)$.
  The resultant situation
  $(M[U_{\ref{figure:no-time-increase}}],(w,t))$ is therefore
  equivalent to the initial situation $(M,w)$ from the point of view
  of our theory.  In other words, the action
  $(U_{\ref{figure:no-time-increase}},t)$ does not change the state of
  affairs at all.\footnote{This is similar to the way in which the
    ``do nothing'' Propositional Dynamic Logic (PDL) program
    $\mathit{skip}$ does not change the state of the system.  However,
    there is a difference: the PDL program does not change the
    structure of the model, though the action
    $(U_{\ref{figure:no-time-increase}},t)$ does.  Nevertheless, from
    the point of view of language equivalence, this change is
    inconsequential: the ``before'' situation and the ``after''
    situation satisfy the same $\ldetl$-formulas, and so our intention
    is that these situations are to be identified.}

  \begin{figure}
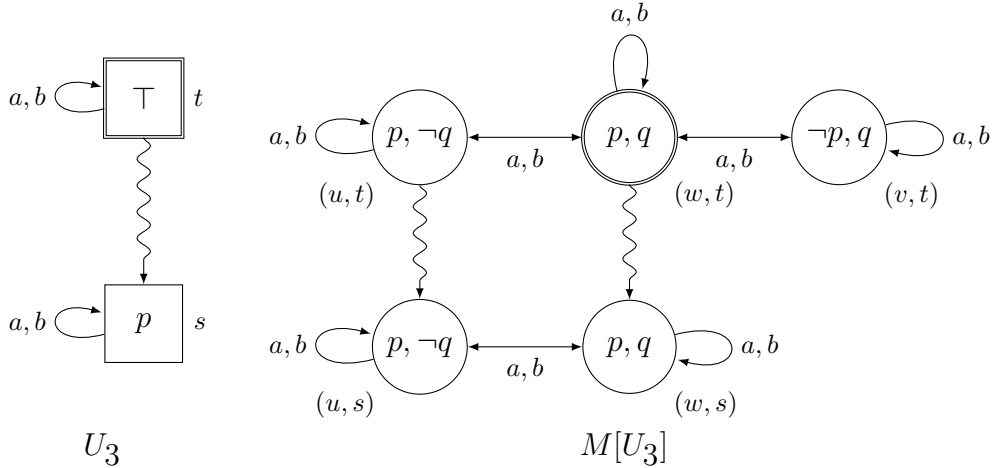

    \begin{center}
      \begin{tabular}{cc}
        \begin{mytikz}
          \node[e,double,label={right:$t$}] (t) {$\top$};
          
          \node[e,below of=t,label={right:$s$}] (s) {$p$};
          
          \path (t) edge[loop left] node{$a,b$} ();
          
          \path (t) edge[->,zz] node{} (s);
          
          \path (s) edge[loop left] node{$a,b$} ();
        \end{mytikz}
        &
        \begin{mytikz}
          \node[w,xshift=2em,label={below left:$(u,t)$}]
          (ut) {$p, \lnot q$};
          
          \node[w,double,right of=ut,label={below right:$(w,t)$}] (wt) {$p, q$};
          
          \node[w,right of=wt,label={below right:$(v,t)$}] (vt) {$\lnot p, q$};

          \path (wt) edge[loop above] node{$a,b$} ();
          
          \path (wt) edge[<->,swap] node{$a,b$} (vt);
          
          \path (wt) edge[<->] node{$a,b$} (ut);
          
          \path (vt) edge[loop right] node{$a,b$} ();
          
          \path (ut) edge[loop left] node{$a,b$} ();
          
          \node[w,below of=wt,label={below right:$(w,s)$}] (ws) {$p, q$};
          
          \node[w,below of=ut,label={below left:$(u,s)$}] (us) {$p, \lnot q$};
          
          \path (ws) edge[loop right] node{$a,b$} ();
          
          \path (us) edge[loop left] node{$a,b$} ();
          
          \path (ws) edge[<->] node{$a,b$} (us);
          
          \path (wt) edge[zz,->] node{} (ws);
          
          \path (ut) edge[zz,->] node{} (us);
          
        \end{mytikz}
        \\
        $U_{\ref{figure:no-time-increase}}$ &
        $M[U_{\ref{figure:no-time-increase}}]$
      \end{tabular}
    \end{center}
    \caption{Left: $(U_{\ref{figure:no-time-increase}},t)$, the future
      public announcement of $p$. Right: The resulting situation
      $(M[U_{\ref{figure:no-time-increase}}], (w,t))$. Agent arrows
      $\to_x$ are here implicitly closed under transitivity. (Example
      \ref{example:no-time-increase})}
    \label{figure:no-time-increase}
\end{figure}

\end{example}

\begin{remark}\label{remark:consequences-of-T}
  The previous two examples illustrate some motivation behind our
  choice not to include a $[T]$ operator, as defined in
  \eqref{eq:T}. If our language had included such an operator, two
  worlds that are intended to represent the exact same state of
  affairs could disagree about the truth of formulas. In Example
  \ref{example:no-time-increase}, a non-time-advancing update
  transforms $(M,w)$ into
  $(M[U_{\ref{figure:no-time-increase}}],(w,t))$, but the worlds $w$
  and $(w,t)$ are meant to represent the same situation and so should
  satisfy the same formulas.  However, $(M,w)$ and
  $(M[U_{\ref{figure:no-time-increase}}],(w,t))$ disagree on the truth
  of the formula $\langle T\rangle p$.  But as we have defined
  $\ldetl$ without the $[T]$-operator, we can easily show that for any
  $\ldetl$-formula $\varphi$, we have $M, w \models \varphi$ iff
  $M[U], (w,t) \models \varphi$ because $t$ is an epistemic past state
  (see Theorem~\ref{theorem:past-state}).  This avoids the problem
  illustrated here where two worlds that are supposed to represent the
  same situation disagree on the truth of formulas.

  Example \ref{example:atemporal-update} also illustrates the semantic
  difference between an action model operator $[U,s]$ and the operator
  $[T]$. The truth of $[U,s]\varphi$ is determined by evaluating
  $\varphi$ in a new model, while the truth of $[T]\varphi$ is
  determined by evaluating $\varphi$ within the model as it currently
  stands. In considering our initial model $(M,w)$ (Figure
  \ref{figure:M}), note that $M, w \not \models \may{T}p$. However, as
  we can see, $M, w \models
  \may{U_{\ref{figure:atemporal-update}},s}p$. So while we informally
  read the formula $\may{T}\varphi$ as claiming that $\varphi$ will
  hold ``tomorrow'' (and that there is at least one possible
  ``tomorrow'' world), this is only from the perspective of a static
  model---it does not consider all possible ways in which that model
  might evolve given different updates.

  The $[U,s]$ and $[T]$ operators are not the only options for
  ``tomorrow'' operators. Section \ref{section:system} mentions the
  way in which the update operators $[U,s]$ serve as dynamic
  parametrized operators. And it is also possible to define dynamic
  unparametrized operators that work by quantifying over the
  parametrized ones, and these also are different from $[T]$
  operators. Call our unparametrized operator $[N]$, and we can
  define: $M,w\models [N]\varphi$ if and only if $M,w\models
  [U,s]\varphi$ for all action models $(U,s)$ with only epistemic
  preconditions (this constraint is imposed on the action models, so
  as to ensure the well-foundedness of the semantics)
  \cite{Bal08}. Since all public announcement action models fall into
  this category, we have that $M,w\models \langle N\rangle p$ (since
  $M,w\models \langle U_{\ref{figure:atemporal-update}},s\rangle p$)
  and yet $M,w\not\models \langle T\rangle p$.

  We believe that these interpretive issues involving $[T]$ reflect
  its complex relationship with the update modalities. Indeed, $[T]$
  may not even have a clear interpretation in the context of our
  framework, which is part of the rationale for leaving it out of our
  language. But such a situation is not unusual in the epistemic logic
  tradition. For instance, a common system for modeling agents'
  beliefs is $\mathsf{KD45}$, whose epistemic accessibility relation
  does not have to be symmetric. In such systems, it is not clear that
  the converse of the accessibility relation has a clear semantic
  interpretation, but this is not viewed as problematic. So for us,
  the relation $\leadsfrom$ is one example of many from modal logic of
  a relation that has a corresponding modality but whose converse does
  not. (Also, the semantic asymmetry of having a $[Y]$ operator but no
  corresponding $[T]$ operator is analogous to an asymmetry in
  ordinary DEL, which has $[U,s]$ modalities but no converse
  $[U,s]^{-1}$ modalities.) As a result, our system $\DETL$ is one that
  has a dynamic parametrized future (accessed via the update modals
  $[U,s]$) and a static unparametrized past (accessed via the
  yesterday modal $[Y]$).
\end{remark}

\begin{example}
  \label{example:2-time-increase}
  In this example, the time at the actual world increases by $2$ even
  though only a single update $(U_{\ref{figure:2-time-increase}},r)$
  (Figure \ref{figure:2-time-increase}) takes place. With simple
  modifications of $(U_{\ref{figure:2-time-increase}},r)$, we could
  increase the time by any finite number.

  \begin{figure}
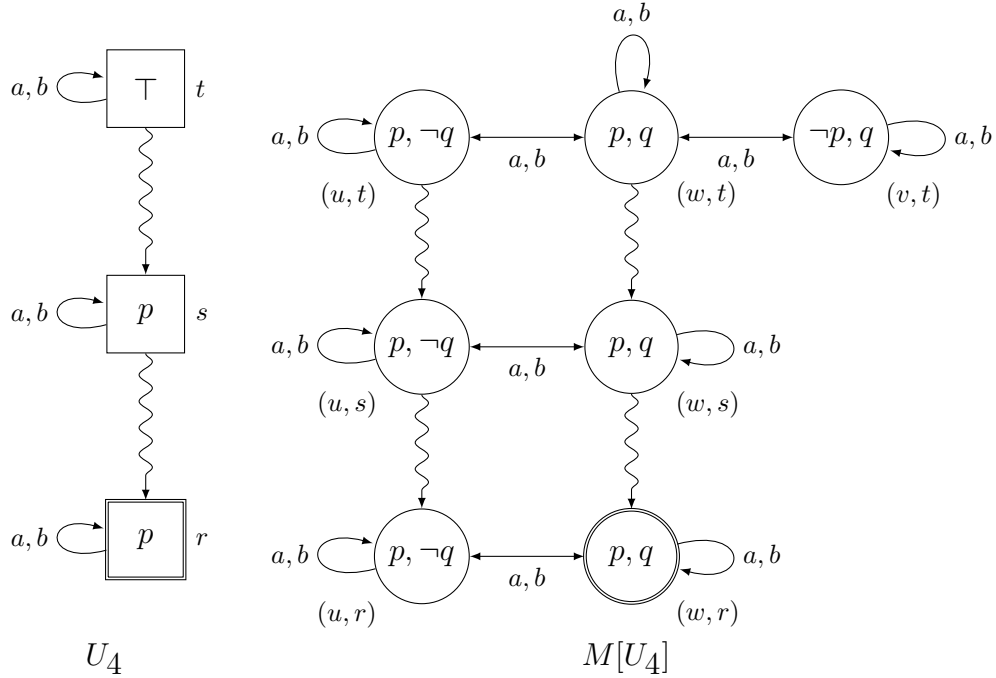

    \begin{center}
      \begin{tabular}{cc}
        \begin{mytikz}
          \node[e,label={right:$t$}] (t) {$\top$};
          
          \node[e,below of=t,label={right:$s$}] (s) {$p$};
          
          \node[e,double,below of=s,label={right:$r$}] (r) {$p$};
          
          \path (t) edge[loop left] node{$a,b$} ();
          
          \path (t) edge[->,zz] node{} (s);
          
          \path (s) edge[zz,->] node{} (r);
          
          \path (s) edge[loop left] node{$a,b$} ();
          
          \path (r) edge[loop left] node{$a,b$} ();
        \end{mytikz}
        &
        \begin{mytikz}
          
          \node[w,xshift=2em,label={below left:$(u,t)$}] (ut) {$p, \lnot q$};
          
          \node[w,right of=ut,label={below right:$(w,t)$}] (wt) {$p, q$};
          
          \node[w,right of=wt,label={below right:$(v,t)$}] (vt) {$\lnot p, q$};

          \path (wt) edge[loop above] node{$a,b$} ();
          
          \path (wt) edge[<->,swap] node{$a,b$} (vt);
          
          \path (wt) edge[<->] node{$a,b$} (ut);
          
          \path (vt) edge[loop right] node{$a,b$} ();
          
          \path (ut) edge[loop left] node{$a,b$} ();
          
          \node[w,below of=wt,label={below right:$(w,s)$}] (ws) {$p, q$};
          
          \node[w,below of=ut,label={below left:$(u,s)$}] (us) {$p, \lnot q$};
          
          \path (ws) edge[loop right] node{$a,b$} ();
          
          \path (us) edge[loop left] node{$a,b$} ();
          
          \path (ws) edge[<->] node{$a,b$} (us);
          
          \path (wt) edge[zz,->] node{} (ws);
          
          \path (ut) edge[zz,->] node{} (us);
          
          \node[w,double,below of=ws,label={below right:$(w,r)$}] (wr) {$p, q$};
          
          \node[w,below of=us,label={below left:$(u,r)$}] (ur) {$p, \lnot q$};
          
          \path (wr) edge[loop right] node{$a,b$} ();
          
          \path (ur) edge[loop left] node{$a,b$} ();
          
          \path (wr) edge[<->] node{$a,b$} (ur);
          
          \path (ws) edge[->,zz] node{} (wr);
          
          \path (us) edge[->,zz] node{} (ur);
          
        \end{mytikz}
        \\
        $U_{\ref{figure:2-time-increase}}$ &
        $M[U_{\ref{figure:2-time-increase}}]$
      \end{tabular}
    \end{center}
    \caption{Left: $(U_{\ref{figure:2-time-increase}},r)$, the double
      public announcement of $p$. Right: the resulting situation
      $(M[U_{\ref{figure:2-time-increase}}], (w,r))$. Agent arrows $\to_x$
      are here implicitly closed under transitivity. (Example
      \ref{example:2-time-increase})}
    \label{figure:2-time-increase}
  \end{figure}
\end{example}

These three examples demonstrate the differences between explicit and
implicit measures of time; in particular, the number of updates (the
implicit time) need not equal the depth of the actual world (the
explicit time). In this paper, we will adopt the convention that the
time at the actual world is measured by its depth (explicit
time). While this is by no means a necessary choice, it has the
advantage of allowing us to determine the time at a world solely by
inspection of the model to which it belongs.

\subsection{Agents Mistaken About Time}

Given that we are measuring the time at a world by its depth, we can
represent situations in which agents are unable to distinguish between
worlds that have different times. These situations can be brought
about by $\DETL$ actions, as the following example illustrates.

\begin{example}
\label{example:2-time-increase-multi}
In this example, $(U_{\ref{figure:2-time-increase-multi}},r)$
(Figure~\ref{figure:2-time-increase-multi}) represents the sequenced
public announcement of $p$ followed by the asynchronous semi-private
announcement of $q$ to agent $b$. This increases the time at the
actual world by $2$, as in
Example~\ref{example:2-time-increase}. However, in the present
example, agent $a$ is uncertain whether the time has increased by $1$
or by $2$.

\begin{figure}
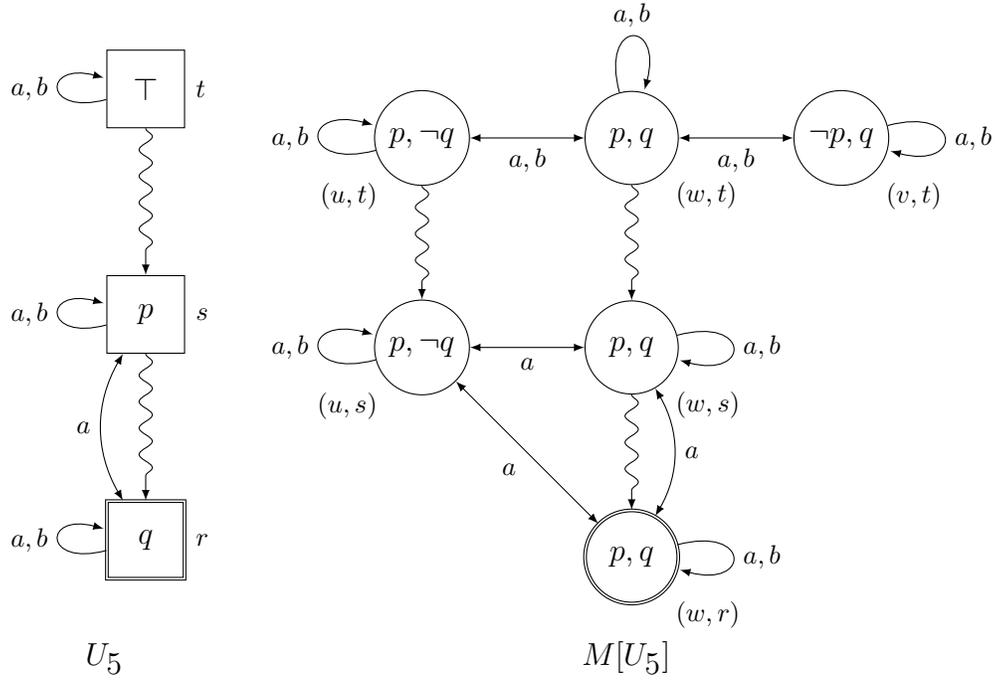

  \begin{center}
    \begin{tabular}{cc}
      \begin{mytikz}
        \node[e,label={right:$t$}] (t) {$\top$};
        
        \node[e,below of=t,label={right:$s$}] (s) {$p$};
        
        \node[e,double,below of=s,label={right:$r$}] (r) {$q$};
        
        \path (t) edge[loop left] node{$a,b$} ();
        
        \path (t) edge[->,zz] node{} (s);
        
        \path (s) edge[->,zz] node{} (r);
        
        \path (s) edge[<->,bend right=30, swap] node{$a$} (r);
        
        \path (s) edge[loop left] node{$a,b$} ();
        
        \path (r) edge[loop left] node{$a,b$} ();
      \end{mytikz}
      &
      \begin{mytikz}
        \node[w,xshift=2em,right of=t,label={below left:$(u,t)$}] (ut) {$p, \lnot q$};
        
        \node[w,right of=ut,label={below right:$(w,t)$}] (wt) {$p, q$};
        
        \node[w,right of=wt,label={below right:$(v,t)$}] (vt) {$\lnot p, q$};
        
        \path (wt) edge[loop above] node{$a,b$} ();
        
        \path (wt) edge[<->,swap] node{$a,b$} (vt);
        
        \path (wt) edge[<->] node{$a,b$} (ut);
        
        \path (vt) edge[loop right] node{$a,b$} ();
        
        \path (ut) edge[loop left] node{$a,b$} ();
        
        \node[w,below of=wt,label={below right:$(w,s)$}] (ws) {$p, q$};
        
        \node[w,below of=ut,label={below left:$(u,s)$}] (us) {$p, \lnot q$};
        
        \path (ws) edge[loop right] node{$a,b$} ();
        
        \path (us) edge[loop left] node{$a,b$} ();
        
        \path (ws) edge[<->] node{$a$} (us);
        
        \path (wt) edge[zz,->] node{} (ws);
        
        \path (ut) edge[zz,->] node{} (us);
        
        \node[w,double,below of=ws,label={below right:$(w,r)$}] (wr) {$p, q$};

        \path (wr) edge[loop right] node{$a, b$} ();
        
        \path (ws) edge[zz,->] node{} (wr);
        
        \path (wr) edge[<->,swap,bend right=30] node{$a$} (ws);
        
        \path (wr) edge[<->] node{$a$} (us);
        
      \end{mytikz}
      \\
      $U_{\ref{figure:2-time-increase-multi}}$ &
      $M[U_{\ref{figure:2-time-increase-multi}}]$
    \end{tabular}
  \end{center}
  \caption{Left: $(U_{\ref{figure:2-time-increase-multi}},r)$, the
    sequenced public announcement of $p$ followed by the asynchronous
    semi-private announcement of $q$ to agent $b$.  Right: the resulting
    situation $(M[U_{\ref{figure:2-time-increase-multi}}],
    (w,r))$. Agent arrows $\to_x$ are here implicitly closed under
    transitivity. (Example \ref{example:2-time-increase-multi})}
  \label{figure:2-time-increase-multi}
\end{figure}
\end{example}

We contrast Example~\ref{example:2-time-increase-multi} with the
following.

\begin{example}
  \label{example:1-time-increase-multi}
  $(U_{\ref{figure:1-time-increase-multi}},r)$
  (Figure~\ref{figure:1-time-increase-multi}) couples a public
  announcement of $p$ with the simultaneous semi-private announcement of
  $q$ to agent $b$.  When we compare this with Example
  \ref{example:2-time-increase-multi} (pictured in
  Figure~\ref{figure:2-time-increase-multi}), we note that the agents'
  respective knowledge gain is identical with respect to the
  propositional facts: $a$ learns that $p$ is true but not whether $q$
  is true, while $b$ learns both $p$ and $q$. However, in the current
  example (pictured in Figure~\ref{figure:1-time-increase-multi}), $b$
  learns $p$ and $q$ simultaneously instead of successively, and $a$'s
  knowledge differs accordingly.

  \begin{figure}
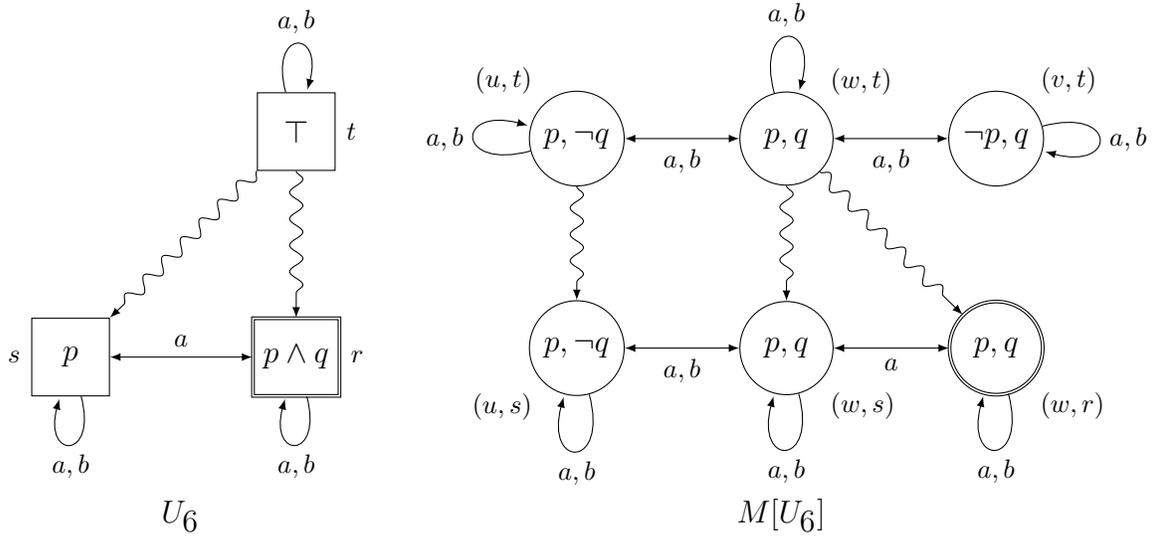

    \begin{center}
      \begin{tabular}{cc}
        \begin{mytikz}
          \node[e,label={right:$t$}] (t) {$\top$};
          
          \node[e,below of=t, double, label={right:$r$}] (r) {$p \wedge q$};
          
          \node[e,left of=r,label={left:$s$}] (s) {$p$};
          
          \path (t) edge[loop above] node{$a,b$} ();
          
          \path (t) edge[zz,->] node{} (s);
          
          \path (t) edge[zz,->] node{} (r);
          
          \path (r) edge[loop below] node{$a,b$} ();

          \path (s) edge[loop below] node{$a,b$} ();
          
          \path (s) edge[<->] node{$a$} (r);
        \end{mytikz}
        &
        \begin{mytikz}
          \node[w,xshift=2em,label={above left:$(u,t)$}] (ut) {$p, \lnot q$};
          
          \node[w,right of=ut,label={above right:$(w,t)$}] (wt) {$p, q$};
          
          \node[w,right of=wt,label={above right:$(v,t)$}] (vt) {$\lnot p, q$};
          
          \path (wt) edge[loop above] node{$a,b$} ();
          
          \path (wt) edge[<->,swap] node{$a,b$} (vt);
          
          \path (wt) edge[<->] node{$a,b$} (ut);
          
          \path (vt) edge[loop right] node{$a,b$} ();
          
          \path (ut) edge[loop left] node{$a,b$} ();
          
          \node[w,double,below of=vt,label={below right:$(w,r)$}] (wr) {$p, q$};
          
          \node[w,below of=wt,label={below right:$(w,s)$}] (ws) {$p, q$};
          
          \node[w,below of=ut,label={below left:$(u,s)$}] (us) {$p, \lnot q$};
          
          \path (ws) edge[loop below] node{$a,b$} ();
          
          \path (us) edge[loop below] node{$a,b$} ();
          
          \path (wr) edge[loop below] node{$a,b$} ();
          
          \path (ws) edge[<->] node{$a,b$} (us);
          
          \path (wr) edge[<->] node{$a$} (ws);
          
          \path (wt) edge[zz,->] node{} (ws);
          
          \path (wt) edge[zz,->] node{} (wr);
          
          \path (ut) edge[zz,->] node{} (us);
          
        \end{mytikz}
        \\
        $U_{\ref{figure:1-time-increase-multi}}$ &
        $M[U_{\ref{figure:1-time-increase-multi}}]$
      \end{tabular}
    \end{center}
    \caption{Left: $(U_{\ref{figure:1-time-increase-multi}},r)$, the
      public announcement of $p$ coupled with the simultaneous synchronous
      semi-private announcement of $q$ to agent $b$.  Right: the resulting
      situation $(M[U_{\ref{figure:1-time-increase-multi}}],
      (w,r))$. Agent arrows $\to_x$ are here implicitly closed under
      transitivity. (Example \ref{example:1-time-increase-multi})}
    \label{figure:1-time-increase-multi}
  \end{figure}
\end{example}

We now compare the truth values of several propositions in
$M[U_{\ref{figure:2-time-increase-multi}}]$ from
Example~\ref{example:2-time-increase-multi}
(Figure~\ref{figure:2-time-increase-multi}) and
$M[U_{\ref{figure:1-time-increase-multi}}]$ from
Example~\ref{example:1-time-increase-multi}
(Figure~\ref{figure:1-time-increase-multi}). First, it should be clear
from inspection that at both actual (i.e., double circled) worlds,
agent $a$ knows $p$ is true but does not know whether $q$ is true,
while agent $b$ knows both $p$ and $q$. Agent $a$ also considers $b$'s
current epistemic state \emph{vis-\`{a}-vis} $p$ and $q$ to be
possible, as both worlds satisfy $\diamondsuit_a \Box_b(p\land q)\land
\Box_b(p\land q)$; that is, agent $a$ considers it possible that $b$
knows $p$ and $q$, and this possibility is in fact the correct one,
since $b$ does in fact does know $p$ and $q$.  However, note that
$M[U_{\ref{figure:2-time-increase-multi}}], (w,r) \models
\diamondsuit_a [Y] \Box_b(p\land q)$ while
$M[U_{\ref{figure:1-time-increase-multi}}], (w,r) \not \models
\diamondsuit_a [Y] \Box_b(p\land q)$. The formula $\diamondsuit_a [Y]
\Box_b(p\land q)$ says that $a$ considers it possible that $b$ knew
$p$ and $q$ ``yesterday'' (i.e., one step in the past). So despite the
fact that agent $a$'s knowledge of $b$'s epistemic state
\emph{vis-\`{a}-vis} $p$ and $q$ is the same at the resultant
situations $(M[U_{\ref{figure:2-time-increase-multi}}],(w,r))$ and
$(M[U_{\ref{figure:1-time-increase-multi}}],(w,r))$, there is a key
difference: agent $a$'s knowledge of how these epistemic states
evolved over time is not the same.  In the first case
$(M[U_{\ref{figure:2-time-increase-multi}}],(w,r))$, agent $a$ thinks
$b$ may have learned $p$ before learning $q$.  However, in the second
case $(M[U_{\ref{figure:1-time-increase-multi}}],(w,r))$, though $a$
still thinks $b$ may have learned $q$, he believes that the only way
this could have happened is that $b$ learned both $q$ and $p$
simultaneously.


\section{Proof System and Completeness}
\label{section:proof-system}

\begin{figure}
  \begin{center}
    \textsc{Axiom Schemes}\\[.2em]
    \renewcommand{\arraystretch}{1.3}
    \begin{tabular}[t]{ll}
      CL. & Schemes for Classical Propositional Logic \\

      $K_a$. & $\Box_a(\varphi\imp\psi)\imp(\Box_a\varphi\imp\Box_a\psi)$ \\

      $K_Y$. &
      $[Y](\varphi\imp\psi)\imp([Y]\varphi\imp[Y]\psi)$ \\

      UA. & $[U,s]q \iff \bigl( \pre^U(s)\imp q \bigr)$
      for $q\in\Prop$ \\

      U$\land$. & $[U,s](\varphi\land\psi) \iff
      \bigl( [U,s]\varphi \land [U,s]\psi \bigr)$
      \\

      U$\lnot$. & $[U,s]\lnot\varphi \iff
      \bigl( \pre^U(s)\imp\lnot[U,s]\varphi \bigr)$ \\

      U$\Box_a$. & $[U,s]\Box_a\varphi \iff
      \bigl(
      \pre^U(s) \imp
      \bigwedge_{s'\from^U_a s }\Box_a[U,s']\varphi
      \bigr)$ \\

      U$[Y]$. & 
      $[U,s][Y]\varphi \iff  \bigl( \pre^U(s) \imp [Y][U,s]\varphi
      \bigr)$
      if $s$ is a past state
      \\
      &
      $[U,s][Y]\varphi \iff
      \bigl( \pre^U(s)
      \imp\bigwedge_{s'\leadsto^U s}[U,s']\varphi  \bigr)$
      if $s$ is not a past state
    \end{tabular}
    \\[1.3em]
    \textsc{Rules}
    \vspace{-1em}
    \[
    \begin{array}{c}
      \varphi\imp\psi \quad \varphi
      \\\hline
      \psi
    \end{array}
    \,\text{\footnotesize (MP)}
    \quad
    \begin{array}{c}
      \varphi
      \\\hline
       \Box_a\varphi
    \end{array}
    \,\text{\footnotesize (MN)}
    \quad
    \begin{array}{c}
      \varphi
      \\\hline
      [Y]\varphi
    \end{array}
    \,\text{\footnotesize (YN)}
    \quad
    \begin{array}{c}
      \varphi
      \\\hline
       [U,s]\varphi
    \end{array}
    \,\text{\footnotesize (UN)}
    \]
  \end{center}
  \caption{The theory $\DETL$}
  \label{figure:theory}
\end{figure}

\begin{definition}
  The \emph{axiomatic theory of Dynamic Epistemic Temporal Logic},
  $\DETL$, is defined in Figure~\ref{figure:theory}.
\end{definition}

Many axioms of $\DETL$ are the same as in Dynamic Epistemic
Logic.\footnote{\label{footnote:explaining-proof-system}The primary
  difference is with those concerning the $Y$-modality. There are two
  cases: $s$ is a past state, and $s$ is not a past state.  In both
  cases, U$[Y]$ is a simplification of U$\Box_a$, reflecting the
  involvement of asynchronous composition rather than synchronous (see
  the discussion after Definition~\ref{definition:semantics}). In
  U$\Box_a$, the conjunction reflects the transitions made in the
  action model, while the modality $\Box_a$ that follows reflects the
  transitions made in original model. Note that if $s$ is a past
  state, then there is no $s'\leadsto^U s$, so we can remove the
  conjunction. If $s$ not a past state, then it is the first
  coordinate rather than the second coordinate that must be fixed in a
  $\leadsto$ transition in the updated model. Hence we remove the
  modality $[Y]$ that would otherwise follow the conjunction.}  In
defining $\DETL$, we have not imposed any of the properties from
Definition~\ref{definition:model-properties} on action models, nor
have we designed the axiomatics to be sound for Kripke models having
properties from Definition~\ref{definition:model-properties} that one
might expect.  So $\DETL$ should be viewed as the \emph{minimal}
theory that brings update mechanisms to a basic Epistemic Temporal
Logic. However, we will study the preservation of these properties in
Section~\ref{section:preservation}, and we study a $\DETL$-based
theory satisfying a number of these properties in
Section~\ref{section:connections}.

\begin{theorem}[$\ldetl$ Reduction]
  \label{theorem:ldetl-reduction}
  For every $\ldetl$-formula $\varphi$, there is an action model-free
  $\lsetl$-formula $\varphi^\circ$ such that
  $\DETL\vdash\varphi\iff\varphi^\circ$.
\end{theorem}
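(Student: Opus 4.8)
The plan is to prove this by defining a translation $(\cdot)^\circ$ that pushes action-model modalities $[U,s]$ inward through the formula until they are eliminated entirely, using the reduction axioms UA, U$\land$, U$\lnot$, U$\Box_a$, and U$[Y]$ as left-to-right rewrite rules. Each of these axioms equates a formula of the form $[U,s]\psi$ (where $\psi$ has a Boolean, epistemic, or temporal connective as its principal operator) with a formula in which the $[U,s]$ either disappears (the atomic case UA) or is driven strictly closer to the atoms. Since each axiom is a provable biconditional in $\DETL$, and since $\DETL$ proves congruence (replacement of provable equivalents inside any context, which follows from the $K$-axioms together with the necessitation rules MN, YN, UN and modus ponens), rewriting a subformula by such an equivalence preserves provable equivalence of the whole formula. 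The output $\varphi^\circ$ will contain no action-model modalities, hence will be an $\lsetl$-formula, and we will have $\DETL\vdash\varphi\iff\varphi^\circ$.

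\textbf{Key steps.} First I would define the translation by recursion on formula structure, handling the action-free connectives ($p$, $\lnot$, $\land$, $\Box_a$, $[Y]$) homomorphically, and handling $[U,s]\psi$ by first recursively translating to expose the principal operator of $\psi$, then applying the appropriate reduction axiom. Concretely, to translate $[U,s]\psi$ one recurses on the \emph{structure of $\psi$}: if $\psi$ is atomic use UA; if $\psi=\lnot\chi$ use U$\lnot$ and recurse on $[U,s]\chi$; if $\psi=\chi_1\land\chi_2$ use U$\land$; if $\psi=\Box_a\chi$ use U$\Box_a$, which replaces $[U,s]\Box_a\chi$ by a conjunction of formulas $\Box_a[U,s']\chi$ over the finitely many $s'\from^U_a s$; and if $\psi=[Y]\chi$ use the appropriate case of U$[Y]$ depending on whether $s$ is a past state. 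If $\psi$ itself begins with an action modality $[U',s']$, one first translates $[U',s']\chi$ inward to eliminate it, and only then deals with the outer $[U,s]$. Second, I would verify that for each clause the equation is a $\DETL$-theorem and that congruence lets us substitute under arbitrary contexts; this makes $\DETL\vdash\varphi\iff\varphi^\circ$ immediate by induction once termination is established.

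\textbf{The main obstacle} is proving \emph{termination} of this rewriting: naively, applying U$\Box_a$ replaces one occurrence of $[U,s]$ by several occurrences $[U,s']$, and U$\lnot$ reintroduces $[U,s]$ on the right-hand side, so a crude induction on formula length or on the number of action modalities fails. The standard fix is to define a well-founded complexity measure on formulas that strictly decreases under each rewrite. The natural choice weights each $[U,s]$ by a quantity reflecting how deeply the post-modal formula must still be pushed; for instance, one assigns to $[U,s]\psi$ a complexity built from the complexity of $\psi$ in such a way that every reduction-axiom application lowers the measure. A clean route is to prove a lemma that the translation $(\cdot)^\circ$ is well-defined by a lexicographic induction — primarily on the number of action-model modalities in $\varphi$ (counting nested ones with appropriate multiplicity), and secondarily on the length of $\psi$ inside the outermost $[U,s]$ — and to check that each axiom strictly decreases this measure; in particular U$\Box_a$ increases the count of $[U,s']$ modalities but these each govern the strictly smaller $\chi$, and U$[Y]$ similarly trades $[U,s]$ for copies governing subformulas. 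Establishing this measure and confirming it decreases in every case is where essentially all the real work lies; once it is in hand, the provable-equivalence claim follows routinely by induction using congruence in $\DETL$.
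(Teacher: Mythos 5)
Your proposal is correct and takes essentially the same approach as the paper: the paper's entire proof is the remark that the result follows by ``a straightforward adaptation of the standard argument from Dynamic Epistemic Logic,'' and that standard argument is precisely what you spell out --- the reduction axioms UA, U$\land$, U$\lnot$, U$\Box_a$, U$[Y]$ used as inside-out rewrite rules, congruence (from the $K$-schemes, necessitation rules, and MP) to license replacement of provable equivalents, and a weighted rather than naive complexity measure to secure termination. One detail to make sure your measure covers: preconditions $\pre^U(s)$ are themselves $\ldetl$-formulas that may contain action modalities (and must also end up eliminated, e.g.\ after applying UA), so the complexity assigned to $[U,s]\psi$ must dominate the complexities of all preconditions occurring in $U$ --- this is why the standard measure is multiplicative in $\max_t c(\pre^U(t))$ rather than any count of modality occurrences, and it is exactly the ``appropriate multiplicity'' your sketch alludes to.
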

\begin{proof}
  The proof is a straightforward adaptation of the standard argument
  from Dynamic Epistemic Logic \cite{DitHoeKoo07}.
\end{proof}

\begin{theorem}[Soundness and Completeness]
  \label{theorem:soundness-completeness}
  $\DETL\vdash\varphi$ if and only if $\models\varphi$.
\end{theorem}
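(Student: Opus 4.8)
The plan is to prove soundness and completeness separately, using the reduction theorem as the main lever for completeness.

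For soundness, I would proceed by induction on the length of derivations. The rules MP, MN, YN, and UN preserve validity by standard modal reasoning, so the substantive work is checking that each axiom scheme is valid. The propositional and $K$-axioms for $\Box_a$ and $[Y]$ are routine. The interesting cases are the reduction axioms UA, U$\land$, U$\lnot$, U$\Box_a$, and U$[Y]$. For each, I would unfold both sides using the semantic clause for $[U,s]\varphi$ from Definition~\ref{definition:semantics} and check the biconditional holds at an arbitrary pointed model $(M,w)$. The only genuinely delicate case is U$[Y]$, which splits on whether $s$ is a past state. Here I would verify that the two disjuncts in the definition of $\leadsto^{M[U]}$ correspond exactly to the two cases of the axiom: when $s$ is a past state, a $\leadsto$ step in $M[U]$ must fix the second coordinate and move the first (matching the $[Y][U,s]\varphi$ form), whereas when $s$ is not a past state, the step fixes the first coordinate and moves the second along $\leadsto^U$ (matching the conjunction over $s'\leadsto^U s$). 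The footnote following the axiom table already sketches this correspondence, so soundness is essentially a matter of making that bookkeeping precise.

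For completeness, the plan is to reduce to completeness of $\lsetl$ and then invoke a standard canonical-model argument for the action-model-free fragment. Suppose $\models\varphi$. By Theorem~\ref{theorem:ldetl-reduction}, there is an $\lsetl$-formula $\varphi^\circ$ with $\DETL\vdash\varphi\iff\varphi^\circ$. By soundness (the direction just established), $\DETL\vdash\varphi\iff\varphi^\circ$ gives $\models\varphi\iff\varphi^\circ$, so from $\models\varphi$ we obtain $\models\varphi^\circ$. It therefore suffices to show that $\DETL\vdash\psi$ for every valid $\lsetl$-formula $\psi$, since then $\DETL\vdash\varphi^\circ$ and, combining with the provable equivalence $\DETL\vdash\varphi^\circ\imp\varphi$, we conclude $\DETL\vdash\varphi$ by MP.

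The remaining obligation is completeness for the $\lsetl$-fragment, which is a bimodal logic with the normal modalities $\Box_a$ (one per agent) and $[Y]$, axiomatized by CL, $K_a$, $K_Y$, and the rules MP, MN, YN. This is a minimal normal multimodal logic with no interaction axioms constraining how $\to_a$ and $\leadsto$ relate, so completeness follows from the textbook canonical-model construction: build the canonical model whose worlds are maximal $\DETL$-consistent sets of $\lsetl$-formulas, define $\to_a$ and $\leadsto$ by the usual ``all $\Box_a$-formulas (respectively $[Y]$-formulas) pass along the relation'' conditions, and prove a Truth Lemma by induction on formula structure. The main obstacle is bookkeeping rather than conceptual: one must be careful that the canonical accessibility relations are defined correctly for each modality and that the reduction in Theorem~\ref{theorem:ldetl-reduction} genuinely eliminates all action modalities so that the canonical-model argument need only treat $\Box_a$ and $[Y]$. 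Since $\DETL$ imposes no frame conditions from Definition~\ref{definition:model-properties}, there are no correspondence or canonicity subtleties to handle, and the Truth Lemma goes through without restricting the canonical frame.
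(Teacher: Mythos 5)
Your proposal is correct and follows essentially the same route as the paper's own proof: soundness by induction on derivations with U$[Y]$ as the only nontrivial axiom (handled, as in the paper, by the case split on whether $s$ is a past state, matching the two disjuncts in the definition of $\leadsto^{M[U]}$), and completeness by combining the Reduction Theorem, the standard canonical-model argument for the action model-free fragment $\lsetl$, and soundness to transfer provability back from $\varphi^\circ$ to $\varphi$. Your write-up is in fact somewhat more explicit than the paper's about how the reduction and soundness are chained together in the completeness direction, but the underlying argument is identical.
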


\begin{proof}
  Soundness ($\vdash\varphi$ implies $\models\varphi$) is by induction
  on the length of $\DETL$-derivations.  All cases except U$[Y]$
  soundness are straightforward adaptations of the standard atemporal
  Dynamic Epistemic Logic arguments \cite{DitHoeKoo07}, so we shall
  only prove $U[Y]$ soundness here.  Proceeding, we are to show that
  \begin{eqnarray*}
    \models [U,s][Y]\varphi &\iff&  
    \bigl( \pre^U(s) \imp [Y][U,s]\varphi \bigr)
    \text{ if $s$ is a past state, and}
    \label{eq:s-is-ps}
    \\{}
    \models [U,s][Y]\varphi &\iff&
    \textstyle
    \bigl( \pre^U(s)
    \imp\bigwedge_{s'\leadsto^U s}[U,s']\varphi  \bigr)
    \text{ if $s$ is not a past state.}
    \label{eq:s-not-ps}
    \label{eq:correctness-pf}
  \end{eqnarray*}
  Given $(M,w)$, we assume that $M,w\models\pre^U(s)$, for otherwise
  the result follows immediately.
  \begin{itemize}
  \item Case:  $s$ is a past state.

    Assume $M,w\models[U,s][Y]\varphi$.  By the definition of truth,
    $M[U],(w',s)\models\varphi$ for each $(w',s)\leadsto^{M[U]}(w,s)$.
    Therefore, if $v\leadsto^M w$ satisfies $M,v\models\pre^U(s)$,
    then $M[U],(v,s)\models\varphi$. Conclusion:
    $M,w\models[Y][U,s]\varphi$.

    Conversely, assume $M,w\models[Y][U,s]\varphi$ and
    $(w',s)\leadsto^{M[U]}(w,s)$.  The second assumption implies both
    that $w'\leadsto^M w$---and hence $M,w'\models[U,s]\varphi$ by the
    first assumption---and that $M,w'\models\pre^U(s)$.  But then
    $M[U],(w',s)\models\varphi$.  Conclusion:
    $M,w\models[U,s][Y]\varphi$.

  \item Case: $s$ is not a past state.

    Assume $M,w\models[U,s][Y]\varphi$. By the definition of truth,
    $M[U],(w,t)\models\varphi$ for each $(w,t)\leadsto^{M[U]}(w,s)$.
    If $s'\leadsto^Us$ satisfies $M,w\models\pre^U(s')$, then
    $(w,s')\leadsto^{M[U]}(w,s)$ and hence
    $M[U],(w,s')\models\varphi$. Conclusion:
    $M,w\models\bigwedge_{s'\leadsto^U s}[U,s']\varphi$.

    Conversely, assume $M,w\models\bigwedge_{s'\leadsto^U
      s}[U,s']\varphi$ and $(w,t)\leadsto^{M[U]}(w,s)$.  The second
    assumption implies both that $t\leadsto^U s$---and hence
    $M,w\models[U,t]\varphi$ by the first assumption---and that
    $M,w\models\pre^U(t)$.  But then
    $M[U],(w,t)\models\varphi$. Conclusion:
    $M,w\models[U,s][Y]\varphi$.
  \end{itemize}

  Completeness ($\nvdash\varphi$ implies $\not\models\varphi$) follows
  by $\ldetl$ Reduction (Theorem~\ref{theorem:ldetl-reduction}), the
  standard normal modal logic canonical model argument for the action
  model-free sublanguage $\lsetl$ \cite{BlaRijVen01}, and the
  combination of $\ldetl$ Reduction with soundness.
\end{proof}

\section{Preservation Results}
\label{section:preservation}

In this section, we study the preservation of properties of Kripke
models defined previously in
Definition~\ref{definition:model-properties}.  These properties have
been of interest in the study of time in Dynamic Epistemic Logic
\cite{BenGerHos09,BenGerPac07,deglowwit11,Sac10,Sac08,Yap07} and so it
will be useful for our purposes to understand the conditions under
which they are preserved within our $\DETL$ setting.  Theorem
\ref{theorem:past-state} concerns the behavior of past states in
action models, and Theorem \ref{theorem:preservation} concerns the
preservation of Kripke model properties.

\begin{theorem}[Past State]
  \label{theorem:past-state}
  Let $(U,s)$ be an action satisfying $M,w\models\pre^U(s)$.
  \begin{enumerate}
    \renewcommand{\theenumi}{(\alph{enumi})}
    \renewcommand{\labelenumi}{\theenumi}
  \item \label{item:theorem:past-state:bounded-morphism}\label{item:ps:a}
    If $s$ is an epistemic past state, then $(M[U],(w,s))$ and $(M,w)$
    satisfy the same $\ldetl$-formulas.

  \item
  \label{item:theorem:past-state:truth}\label{item:ps:b}
  If $(U,s)$ is past preserving, then there is a history
  \[
  (w,s_0) \leadsto^{M[U]} (w,s_1) \leadsto^{M[U]} \cdots
  \leadsto^{M[U]} (w,s_n) = (w,s)
  \]
  such that $(M[U],(w,s_0))$ and $(M,w)$ satisfy the same
  $\ldetl$-formulas.
  \end{enumerate}
\end{theorem}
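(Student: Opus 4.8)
The plan for part~\ref{item:ps:a} is to exhibit a bounded morphism from $M$ into $M[U]$ and then lift its truth-invariance from the static sublanguage $\lsetl$ to all of $\ldetl$. Define $f:W^M\to W^{M[U]}$ by $f(v)=(v,s)$. This is total: since $s$ is an epistemic past state we have $\models\pre^U(s)$, so $M,v\models\pre^U(s)$ for every $v$ and hence $(v,s)\in W^{M[U]}$. I would then verify the three bounded-morphism conditions with respect to both $\to_a$ and $\leadsto$. Atom-preservation is immediate from the definition of $V^{M[U]}$. The forth conditions hold because $s\to_a^U s$ is a reflexive loop (so $v\to_a^M v'$ gives $(v,s)\to_a^{M[U]}(v',s)$) and because $s$ is a past state (so $v'\leadsto^M v$ fires clause~1 of the $\leadsto^{M[U]}$ definition, giving $(v',s)\leadsto^{M[U]}(v,s)$). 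The back conditions hold because the \emph{only} epistemic arrow out of $s$ is the loop $s\to_a^U s$ (forcing the second coordinate of any $\to_a$-successor of $(v,s)$ to equal $s$) and because $s$ has no $U$-yesterday (so clause~2 never applies with target $(v,s)$, and clause~1 again forces the second coordinate to be $s$).

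Standard bounded-morphism invariance \cite{BlaRijVen01} then yields $M,v\models\psi$ iff $M[U],(v,s)\models\psi$ for every formula $\psi$ of the basic modal language $\lsetl$ and every $v$, in particular at $v=w$. To extend this to an arbitrary $\ldetl$-formula $\varphi$ I would invoke $\ldetl$ Reduction (Theorem~\ref{theorem:ldetl-reduction}) to get an $\lsetl$-formula $\varphi^\circ$ with $\DETL\vdash\varphi\iff\varphi^\circ$, and then soundness (Theorem~\ref{theorem:soundness-completeness}) to pass to semantic equivalence $\models\varphi\iff\varphi^\circ$. Chaining these gives $M,w\models\varphi$ iff $M,w\models\varphi^\circ$ iff $M[U],(w,s)\models\varphi^\circ$ iff $M[U],(w,s)\models\varphi$, which is the claim.

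For part~\ref{item:ps:b}, past preservation lets me past-extend the one-element progression $\langle s\rangle$ to a history $s_0\leadsto^U s_1\leadsto^U\cdots\leadsto^U s_n=s$ in $U$ whose first event $s_0$ is an epistemic past state. I would first check that every pair $(w,s_i)$ lies in $W^{M[U]}$: we have $M,w\models\pre^U(s_n)=\pre^U(s)$ by hypothesis, and history preservation gives $\models\pre^U(s_{i+1})\imp\pre^U(s_i)$ along each arrow $s_i\leadsto^U s_{i+1}$, so a downward induction on $i$ yields $M,w\models\pre^U(s_i)$ for all $i$. Each link $(w,s_i)\leadsto^{M[U]}(w,s_{i+1})$ then holds by clause~2 of the $\leadsto^{M[U]}$ definition (fixed first coordinate, with $s_i\leadsto^U s_{i+1}$ in the second), so the displayed sequence is a genuine progression in $M[U]$ ending at $(w,s)$ and beginning at $(w,s_0)$. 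Finally, because $s_0$ is an epistemic past state with $M,w\models\pre^U(s_0)$ (its precondition being valid), part~\ref{item:ps:a} applied to the action $(U,s_0)$ delivers exactly the conclusion that $(M[U],(w,s_0))$ and $(M,w)$ satisfy the same $\ldetl$-formulas.

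The main obstacle is conceptual rather than computational: bounded-morphism invariance is a theorem about the static language only, so the crux is recognizing that the dynamic operators must be eliminated via Theorem~\ref{theorem:ldetl-reduction} before invariance can be applied; the product construction does not interact with $f$ in any direct way. A secondary point to treat with care in part~\ref{item:ps:b} is the maximality underlying the word ``history'': the events $s_0,\dots,s_n$ form a history in $U$ (since $s_0$, being a past state, admits no proper past-extension in $U$), which is what justifies the terminology, while the precondition-chaining is precisely what guarantees that no pair $(w,s_i)$ drops out of $W^{M[U]}$, so that the lift remains an honest progression in the updated model.
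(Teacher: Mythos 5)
Your proposal is correct and takes essentially the same approach as the paper: your explicit bounded morphism $v\mapsto(v,s)$ is precisely the bisimulation between $w$ and $(w,s)$ that the paper asserts without detail, and the lift from $\lsetl$ to $\ldetl$ via the Reduction Theorem plus soundness is exactly the paper's step. Part (b) likewise mirrors the paper's one-line reduction to part (a) through the history supplied by past preservation, with your precondition-chaining argument (showing each $(w,s_i)$ survives into $W^{M[U]}$) merely filling in a detail the paper leaves implicit.
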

\begin{proof}
  \ref{item:ps:a} Since the language of $\ldetl$ does not include
  forward-looking tomorrow operators $[T]$, it follows by the standard
  argument in modal logic \cite{BlaRijVen01} that bisimilar worlds
  satisfy the same action model-free $\ldetl$-formulas (i.e., the same
  $\lsetl$-formulas).\footnote{In detail: a \emph{bisimulation} is a
    nonempty binary relation $B$ between the worlds of Kripke models
    $M=(W^M,\to^M,\leadsfrom^M,V^M)$ and
    $M'=(W^{M'},\to^{M'},\leadsfrom^{M'},V^{M'})$ with yesterday such
    that $wBw'$ implies $w$ and $w'$ satisfy the same propositional
    letters; and, for each binary relation symbol $R\in\{{\to_a}\mid
    a\in\Agnt\}\cup\{\leadsfrom\}$ specified by the structures, $wBw'$
    and $wR^Mv$ implies there is a $v'\in W^{M'}$ such that
    $w'R^{M'}v'$ and $vBv'$, and $wBw'$ and $w'R^{M'}v'$ implies there
    is a $v\in W^M$ such that $wR^Mv$ and $vBv'$.  Note that the
    definition of bisimulation only considers temporally reachable
    worlds ``in the past'' (i.e., in the direction from $x$ to $y$ in
    the arrow $x\leadsfrom y$).}  By $\ldetl$ Reduction
  (Theorem~\ref{theorem:ldetl-reduction}) and soundness
  (Theorem~\ref{theorem:soundness-completeness}), bisimilar worlds
  also satisfy the same $\ldetl$-formulas. Finally, one can show that
  there is a bisimulation between $(w,s)$ and $w$.  The result
  follows.

  \ref{item:ps:b} Past preservation of $(U,s)$ implies there exists a
  history $s_0\leadsto^U\cdots\leadsto^Us_n=s$ that begins at an
  epistemic past state $s_0$. The result therefore follows by part
  \ref{item:ps:a}.
\end{proof}

Theorem~\ref{theorem:past-state}\ref{item:ps:a} tells us that, from
the point of view of the language, an epistemic past state essentially
makes a copy of a given situation $(M,w)$ within the context of the
updated model $M[U]$.  Part \ref{theorem:past-state}\ref{item:ps:b}
tells us that after the execution of a past preserving action $(U,s)$,
the copy of the initial situation resides at the beginning of a
history
\begin{equation}
  (w,s_0)\leadsto^{M[U]}\cdots\leadsto^{M[U]}(w,s_n)=(w,s)
  \label{eq:produced-history}
\end{equation}
produced by the stepwise occurrence of a past state $s_0$ followed by
events $s_1,\ldots,s_n=s$.  Past states in past preserving actions
$(U,s)$ therefore play the role of ``maintaining a link to the past''
in virtue of the fact that there is a temporal linkage
\eqref{eq:produced-history} in the resultant model $M[U]$ leading back
to the initial situation (as it exists in its copied form).

Theorem~\ref{theorem:past-state} would fail if we were to include a
tomorrow operator $[T]$ in our language.\footnote{\label{footnote:consequence-of-T-to-past-state-thm}For 
  example, we have
  $M,w\not\models\may{T}p$ (Figure~\ref{figure:M}) and yet
  $M[U_{\ref{figure:no-time-increase}}],(w,t)\models\may{T}p$
  (Figure~\ref{figure:no-time-increase}).} This not only violates the
theorem but also our intuition about what it means for the occurrence
of an action to increment the time.  In particular, given that we
identify the time of a world with its depth, a time-incrementing
action must take an initial situation and successively add on
additional layers of ``future'' worlds:
\begin{equation}
  \text{(initial world)} \leadsto
  \text{($+1$ world)} \leadsto
  \text{($+2$ world)} \leadsto \cdots
  \leadsto
  \text{($+n$ world)}\enspace.
  \label{eq:progression}
\end{equation}
In order for us to view this process as a progression that began with
a particular situation $(M,w)$, the leftmost world in the temporal
sequence \eqref{eq:progression} should be identical to our initial
situation $(M,w)$ from the point of view of the language.
Furthermore, we should be able to ``trace backward in time'' from the
final resultant situation---made up of the rightmost world in the
temporal sequence \eqref{eq:progression}---to our initial situation.
And this is just what Theorem~\ref{theorem:past-state} says we can do.
So since adding the $[T]$ operator would falsify the theorem and hence
go against our intuition, we decided to leave this operator out.

We now examine the relationship between our conditions on action
models (Definition~\ref{definition:model-properties}) and the
preservation of certain Kripke models properties (also
Definition~\ref{definition:model-properties}) under the update
operation $M\mapsto M[U]$.

\begin{theorem}[Preservation]
  \label{theorem:preservation}
  Suppose $M,w^*\models\pre^U(s^*)$ for some $w^*\in W^M$ and $s^*\in
  W^U$.
  \begin{enumerate}
    \renewcommand{\theenumi}{(\alph{enumi})}
    \renewcommand{\labelenumi}{\theenumi}
  \item \label{item:pres-persist} If $M$ satisfies persistence of facts, then so does $M[U]$.

  \item \label{item:pres-DD} If $M$ and $U$ are depth-defined, then so is $M[U]$.

  \item \label{item:pres-pastK} If $M$ and $U$ satisfy knowledge of the past and $U$ is
    history preserving, then $M[U]$ satisfies knowledge of the past.

  \item \label{item:pres-initK} If $M$ satisfies knowledge of the initial time and $U$ is
    history preserving, then $M[U]$ satisfies knowledge of the initial
    time.

  \item \label{item:pres-past-uniq} If $M$ and $U$ satisfy uniqueness of the past, then so does
    $M[U]$.

  \item \label{item:pres-PR} If $M$ and $U$ satisfy perfect recall and $U$ is history
    preserving, then $M[U]$ satisfies perfect recall.

  \item \label{item:pres-sync} If $M$ and $U$ are synchronous and $U$ is history preserving,
    then $M[U]$ is synchronous.
  \end{enumerate}
\end{theorem}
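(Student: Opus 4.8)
The plan is to prove each of the seven parts by unwinding the definitions in Definition~\ref{definition:semantics}, exploiting the two-case structure of the temporal relation $\leadsto^{M[U]}$: either the second coordinate is fixed and the first makes a $\leadsto^M$ step (when the target's second coordinate is a past state), or the first coordinate is fixed and the second makes a $\leadsto^U$ step (otherwise). Throughout, I would work with a fixed $\leadsto^{M[U]}$ arrow $(v',t')\leadsto^{M[U]}(v,t)$ and argue by which of these two disjuncts holds. The hypothesis $M,w^*\models\pre^U(s^*)$ merely guarantees $W^{M[U]}$ is nonempty, so that the properties (several of which are vacuously-friendly universal statements) are being asserted of a genuine structure.

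For \ref{item:pres-persist}, \ref{item:pres-past-uniq}, and \ref{item:pres-DD}, the arguments are direct and require no history-preservation hypothesis. Persistence: since $V^{M[U]}(p)$ depends only on the first coordinate and every $\leadsto^{M[U]}$ arrow either fixes the first coordinate or moves it along a $\leadsto^M$ arrow, persistence of facts in $M$ transfers immediately. Uniqueness of the past: I would suppose $(v'_1,t'_1)\leadsto^{M[U]}(v,t)$ and $(v'_2,t'_2)\leadsto^{M[U]}(v,t)$ and split on whether $t$ is a past state; if it is, both arrows fix the second coordinate and move the first, so uniqueness in $M$ forces $v'_1=v'_2$ (and $t'_1=t'_2=t$); if it is not, both fix the first coordinate and move the second, so uniqueness in $U$ forces $t'_1=t'_2$ (and $v'_1=v'_2=v$). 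For depth-definedness, I would invoke the depth computation stated in the paragraph before Definition~\ref{definition:model-properties}: every backward $\leadsto^{M[U]}$ chain decomposes into backward steps in the second coordinate (bounded by $\mydepth(t)$) followed by backward steps in the first (bounded by $\mydepth(v)$), so $\mydepth(v,t)\le\mydepth(v)+\mydepth(t)<\infty$.

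The epistemically-flavored parts \ref{item:pres-pastK}, \ref{item:pres-initK}, \ref{item:pres-PR}, and \ref{item:pres-sync} are where history preservation does the work, because the subtlety is that a backward temporal step in one model-coordinate must be matched against an epistemic step that operates componentwise (recall $(v,t)\to^{M[U]}_a(v',t')$ iff $v\to^M_a v'$ and $t\to^U_a t'$). For knowledge of the past, given $(v'',t'')\leadsto^{M[U]}(v,t)\to^{M[U]}_a(u,r)$, I would produce a predecessor of $(u,r)$ by combining the hypothesis that $M$ and $U$ each satisfy knowledge of the past with history preservation to guarantee the needed predecessor satisfies its precondition and hence lies in $W^{M[U]}$. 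Perfect recall is the analogous but forward-epistemic/backward-temporal diagram chase: given $(v,t)\leadsto^{M[U]}(u,r)\to^{M[U]}_a(u',r')$, I case-split on whether $r$ is a past state, apply perfect recall in the active coordinate's model to obtain the witness, and use history preservation (the implication $\pre^U(r')\imp\pre^U(\text{predecessor})$) to certify membership in $W^{M[U]}$. Synchronicity follows by combining \ref{item:pres-DD} with an induction showing $\mydepth(v,t)=\mydepth(v)+\mydepth(t)$ under history preservation (as noted in the text), together with the componentwise form of $\to^{M[U]}_a$, so that $a$-related pairs agree coordinatewise in depth and hence in total depth.

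I expect the main obstacle to be the perfect-recall and knowledge-of-the-past cases, specifically the bookkeeping needed to certify that the witness world produced by the $M$- or $U$-level property actually survives into $W^{M[U]}$, i.e.\ that its first coordinate satisfies the precondition of its second coordinate. This is exactly the role of history preservation ($s'\leadsto^U s\Rightarrow{\models}\pre^U(s)\imp\pre^U(s')$, plus every past state being an epistemic past state): it ensures that stepping backward temporally in $U$ never strands us at a pair violating its precondition, and that the componentwise epistemic relation lines up with the temporal witnesses. The remaining work is then routine diagram-chasing, and in each part I would be careful to treat the past-state versus non-past-state disjunction of $\leadsto^{M[U]}$ separately, since the coordinate that ``moves'' differs between the two cases.
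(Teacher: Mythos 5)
Your plan matches the paper's proof essentially step for step: the same case split on the two disjuncts of $\leadsto^{M[U]}$, the same dual use of history preservation (the precondition implication $\models\pre^U(s)\imp\pre^U(s')$ to certify that second-coordinate temporal witnesses land in $W^{M[U]}$, and the epistemic-past-state clause to force $t\to^U_a t'$ to be reflexive so that first-coordinate temporal witnesses line up), and the same reduction of synchronicity to the depth-additivity $\mydepth(v,t)=\mydepth(v)+\mydepth(t)$ on top of part (b). The only cosmetic difference is in (b), where you bound $\mydepth(v,t)$ by decomposing backward chains coordinatewise (second coordinate first, then first coordinate once a past state is reached), whereas the paper establishes the same bound $\mydepth(v,t)\leq\mydepth(v)+\mydepth(t)$ by a double induction on $\mydepth(s)$ and $\mydepth(w)$; both yield the claim without any history-preservation hypothesis.
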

\begin{proof}
  We prove each item in turn. \vspace{1.5ex}
  
 \emph{\ref{item:pres-persist} If $M$ satisfies persistence of facts, then so does $M[U]$.}
    Suppose that $M$ satisfies persistence of facts and $(w,s)
    \leadsto^{M[U]} (w',s')$.  It follows that $w=w'$ or $w\leadsto^M
    w'$.  Now $M$ satisfies persistence of facts, so $w\in V^M(p)$ iff
    $w'\in V^M(p)$.  Applying the fact that $(v,t)\in V^{M[U]}(p)$ iff
    $v\in V^M(p)$, we have $(w,s)\in V^{M[U]}(p)$ iff $(w',s')\in
    V^{M[U]}(p)$.\vspace{1.7ex}

  \emph{\ref{item:pres-DD} If $M$ and $U$ are depth-defined, then so is $M[U]$.}
    Assume that $M$ and $U$ are depth-defined. Notice that if
    \begin{equation}
      \forall(w,s)\in W^{M[U]}: \mydepth(w,s)\leq
      \mydepth(w)+\mydepth(s)\enspace,
      \label{eq:pres:dd}
    \end{equation}
    then since we have $\mydepth(w)<\infty$ and $\mydepth(s)<\infty$
    by the depth-definedness of $M$ and $U$, it follows that $M[U]$ is
    depth-defined.  It therefore suffices to prove \eqref{eq:pres:dd}
    by induction on $\mydepth(s)$.
    \begin{itemize}
    \item Base case: $\mydepth(s)=0$.
      
      It follows that $s$ is a past state.  Therefore
      $(w',s')\leadsto^{M[U]}(w,s)$ implies $w'\leadsto^M w$ and
      $s'=s$. We now prove \eqref{eq:pres:dd} by a sub-induction on
      $\mydepth(w)$.  In the sub-induction base case, $\mydepth(w)=0$
      and therefore there is no $w'\leadsto^M w$, which implies there
      is no $(w',s)\leadsto^{M[U]}(w,s)$.  But then
      $\mydepth(w,s)=0=\mydepth(w)+\mydepth(s)$, which completes the
      sub-induction base case.  For the sub-induction step, we assume
      that \eqref{eq:pres:dd} holds for all worlds $v$ having
      $0\leq\mydepth(v)<\mydepth(w)$ (the ``sub-induction
      hypothesis'') and we prove \eqref{eq:pres:dd} holds for world
      $w$ itself.  If $\mydepth(w,s)=0$, then \eqref{eq:pres:dd}
      follows immediately because depths are non-negative integers.
      So let us assume that $\mydepth(w,s)>0$.  Then we may choose an
      arbitrary $(w',s)\leadsto^{M[U]}(w,s)$, from which it follows
      that $w'\leadsto^M w$. Since $M$ is depth-defined,
      $\mydepth(w')\leq\mydepth(w)-1$ and so we may apply the
      sub-induction hypothesis:
      \[
      \mydepth(w',s) \leq \mydepth(w')+\mydepth(s) \leq
      \mydepth(w)-1+\mydepth(s)\enspace.
      \]
      Hence
      \begin{eqnarray*}
        \mydepth(w,s) &=&
        1+\max\{\mydepth(w',s)\mid 
        (w',s)\leadsto^{M[U]}(w,s)\}
        \\
        &\leq&
        1+\max\{\mydepth(w)-1+\mydepth(s) \mid
        (w',s)\leadsto^{M[U]}(w,s)\}
        \\
        &=&
        \mydepth(w)+\mydepth(s)
      \end{eqnarray*}

    \item Induction step: we suppose \eqref{eq:pres:dd} holds for all
      events $t$ having $0\leq\mydepth(t)<\mydepth(s)$ (the
      ``induction hypothesis'') and prove that \eqref{eq:pres:dd}
      holds for event $s$ itself.

      $s$ is not a past state because $\mydepth(s)>0$.  Therefore
      $(w',s')\leadsto^{M[U]}(w,s)$ implies $w'=w$ and $s'\leadsto^U
      s$.  If $\mydepth(w,s)=0$, then \eqref{eq:pres:dd} follows
      immediately because depths are non-negative integers.  So let us
      assume $\mydepth(w,s)>0$. Then we may choose an arbitrary
      $(w,s')\leadsto^{M[U]}(w,s)$, from which it follows that
      $s'\leadsto^U s$. Since $U$ is depth-defined, we have
      $\mydepth(s')\leq\mydepth(s)-1$ and so we may apply the
      induction hypothesis:
      \[
      \mydepth(w,s') \leq \mydepth(w)+\mydepth(s') \leq
      \mydepth(w)+\mydepth(s)-1\enspace.
      \]
      Hence
      \begin{eqnarray*}
        \mydepth(w,s) &=&
        1+\max\{\mydepth(w,s')\mid 
        (w,s')\leadsto^{M[U]}(w,s)\}
        \\
        &\leq&
        1+\max\{\mydepth(w)+\mydepth(s)-1 \mid
        (w,s')\leadsto^{M[U]}(w,s)\}
        \\
        &=&
        \mydepth(w)+\mydepth(s)
      \end{eqnarray*}
    \end{itemize}
\vspace{1ex}

  \emph{\ref{item:pres-pastK} If $M$ and $U$ satisfy knowledge of the past and $U$ is
    history preserving, them $M[U]$ satisfies knowledge of the past.}
    Suppose $M$ and $U$ satisfy knowledge of the past, $U$ is history
    preserving, and
    \[
    (w',s')\leadsto^{M[U]} (w,s)\to^{M[U]}_a (v,t)\enspace.
    \]
    We want to show that there exists $(v',t') \leadsto^{M[U]}
    (v,t)$. Given $(w,s) \to^{M[U]}_a (v,t)$, we have $w \to^M_a v$
    and $s \to^U_a t$.  Given $(w',s')\leadsto^{M[U]} (w,s)$, one of
    two cases obtains.
    \begin{itemize}
    \item Case: $w' \leadsto^M w$ and $s'=s$ is a past state.

      Since $w'\leadsto^M w \to^M_a v$ and $M$ satisfies knowledge of
      the past, there exists $v' \leadsto^M v$. Since $U$ is history
      preserving and $s$ is a past state, $s$ is an epistemic past
      state.  From this we obtain two things.  First, $(v',s)\in
      W^{M[U]}$ because epistemic past states have valid
      preconditions.  Second, applying the fact that $s \to^U_a t$, it
      follows that $s=t$ because $\to_a$ arrows leaving epistemic past
      states are all reflexive.  Since $v' \leadsto^M v$ and $s$ is a
      past state, we conclude that $(v',s)\leadsto^{M[U]}
      (v,s)=(v,t)$.

    \item Case: $w'=w$ and $s'\leadsto^U s$.

      Since $s'\leadsto^U s \to^U_a t$ and $U$ satisfies knowledge of
      the past, there exists $t' \leadsto^U t$. Applying this to the
      assumption that $U$ is history preserving and the fact that
      $(v,t)\in W^{M[U]}$, it follows that $(v,t') \in W^{M[U]}$. But
      then $(v,t') \leadsto^{M[U]} (v,t)$.
    \end{itemize}
\vspace{1ex}

  \emph{\ref{item:pres-initK} If $M$ satisfies knowledge of the initial time and $U$ is
    history preserving, then $M[U]$ satisfies knowledge of the initial
    time.}
    Suppose $M$ satisfies knowledge of the initial time, $U$ is
    history preserving, $(w,s) \to^{M[U]}_a (v,t)$, and
    $\mydepth(w,s)=0$. We wish to show that $\mydepth(v,t)=0$ as well.
    Toward a contradiction, assume $\mydepth(v,t)\neq0$, which implies
    there exists $(v',t') \leadsto^{M[U]} (v,t)$.  It follows from
    $(w,s) \to^{M[U]}_a (v,t)$ that $w \to^M_a v$ and $s \to^U_a t$.
    We consider two cases.
    \begin{itemize}
    \item Case: $s$ is not a past state.

      Since $s$ is not a past state, there exists $s' \leadsto^U
      s$. Since $U$ is history preserving and $(w,s)\in W^{M[U]}$, it
      follows that $(w,s')\in W^{M[U]}$. But then $(w,s')
      \leadsto^{M[U]} (w,s)$, which contradicts our assumption that
      $\mydepth(w,s)=0$.

    \item Case: $s$ is a past state.

      Since $U$ is history preserving and $s$ is a past state, $s$ is
      in fact an epistemic past state. Applying the fact that $s
      \to^U_a t$, it follows that $s=t$ because $\to_a$ arrows leaving
      epistemic past states are all reflexive.  Since
      $(v',t')\leadsto^{M[U]} (v,t)$ and $t=s$ is a past state, it
      follows that $t'=t=s$ and $v\leadsto^M v'$.  Also, it follows
      from the fact that $s$ is a past state and $\mydepth(w,s)=0$
      that we have $\mydepth(w)=0$.  Since $M$ satisfies knowledge of
      the initial time, it follows from $\mydepth(w)=0$ and $w\to^M_a
      v$ that $\mydepth(v)=0$, but this contradicts $v\leadsto^M v'$.
    \end{itemize}
    Since both cases lead to a contradiction, we conclude that
    $\mydepth(v,t)=0$, as desired. \vspace{1.7ex}

  \emph{\ref{item:pres-past-uniq} If $M$ and $U$ satisfy uniqueness of the past, then so does
    $M[U]$.}
    Suppose $M$ and $U$ satisfy uniqueness of the past and we have
    $(v_1,t_1)\leadsto^{M[U]}(w,s)$ and
    $(v_2,t_2)\leadsto^{M[U]}(w,s)$.  We wish to show that
    $(v_1,t_1)=(v_2,t_2)$.  There are two cases to consider.
    \begin{itemize}
    \item Case:  $s$ is a past state.

      Since $s$ is a past state, if follows from
      $(v_1,t_1)\leadsto^{M[U]}(w,s)$ and
      $(v_2,t_2)\leadsto^{M[U]}(w,s)$ that we have $s=t_1=t_2$,
      $v_1\leadsto^M w$, and $v_2\leadsto^M w$.  Since $M$ satisfies
      uniqueness of the past, it follows that $v_1=v_2$.  Hence
      $(v_1,t_1)=(v_2,t_2)$.

    \item Case: $s$ is not a past state.

      Since $s$ is not a past state, if follows from
      $(v_1,t_1)\leadsto^{M[U]}(w,s)$ and
      $(v_2,t_2)\leadsto^{M[U]}(w,s)$ that we have $w=v_1=v_2$,
      $t_1\leadsto^U s$, and $t_2\leadsto^U s$.  Since $U$ satisfies
      uniqueness of the past, it follows that $t_1=t_2$. Hence
      $(v_1,t_1)=(v_2,t_2)$.
    \end{itemize}
    \vspace{1ex}

  \emph{ \ref{item:pres-PR} If $M$ and $U$ satisfy perfect recall and $U$ is history
    preserving, then $M[U]$ satisfies perfect recall.}
    Suppose $M$ and $U$ satisfy perfect recall, $U$ is history
    preserving, and $(w',s') \leadsto^{M[U]} (w,s) \to^{M[U]}_a
    (v,t)$.  We wish to prove that there exists $(v',t')\in W^{M[U]}$
    such that $(w',s')\to^{M[U]}_a (v',t')\leadsto^{M[U]}(v,t)$.
    Proceeding, it follows from $(w,s) \to^{M[U]}_a (v,t)$ that
    $w\to^M_a v$ and $s\to^U_a t$.  It follows from
    $(w',s')\leadsto^{M[U]}(w,s)$ that one of two cases obtains.
    \begin{itemize}
    \item Case: $w'=w$ and $s'\leadsto^U s$.

      Since $s' \leadsto^U s \to^U_a t$ and $U$ satisfies perfect
      recall, there exists $t'$ satisfying $s' \to^U_a t' \leadsto^U
      t$. Since $U$ is history preserving and $(v,t)\in W^{M[U]}$, it
      follows that $(v,t')\in W^{M[U]}$.  But then
      $(v,t')\leadsto^{M[U]}(v,t)$.  Since $w\to^M_av$ and
      $s'\to^U_at'$, we have $(w',s')=(w,s')\to^{M[U]}_a(v,t')$.

    \item Case: $w'\leadsto^M w$ and $s'=s$ is a past state.

      Since $w' \leadsto^M w \to^M_a v$ and $M$ satisfies perfect
      recall, there exists $v'$ satisfying $w' \to^M_a v' \leadsto^M
      v$. Further, since $s$ is a past state and $U$ is history
      preserving, $s$ is an epistemic past state.  From this two
      things follow.  First, $(v',s)\in W^{M[U]}$ because epistemic
      past states have valid preconditions.  Second, applying the fact
      that $s\to^U_a t$, it follows that $s=t$ because $\to_a$ arrows
      leaving epistemic past states are all reflexive.  Since
      $v'\leadsto^M v$ and $s$ is a past state, we have
      $(v',s)\leadsto^{M[U]}(v,s)=(v,t)$.  Further, since
      $w'\to^M_av'$ and $s\to^U_at=s$, we have
      $(w',s')=(w',s)\to^{M[U]}_a(v',s)$.
    \end{itemize}
    \vspace{1ex}

  \emph{\ref{item:pres-sync} If $M$ and $U$ are synchronous and $U$ is history preserving,
    then $M[U]$ is synchronous.}
    Assume $M$ and $U$ are synchronous and $U$ is history preserving.
    It follows by a previous item that $M[U]$ is depth-defined.  So
    all that remains is to prove that $(w,s)\to^{M[U]}_a(w',s')$
    implies $\mydepth(w,s)=\mydepth(w',s')$.  To prove this,
    we for the moment assume the following:
    \begin{equation}
      \forall(v,t)\in W^{M[U]}:
      \mydepth(v,t)=\mydepth(v)+\mydepth(t)\enspace.
      \label{eq:pres:sync}
    \end{equation}
    Proceeding under this assumption, it follows from
    $(w,s)\to^{M[U]}_a(w',s')$ that $w \to^M_a w'$ and $s \to^U_a
    s'$. Since $M$ and $U$ are synchronous, it follows that
    $\mydepth(w)=\mydepth(w')$ and $\mydepth(s)=\mydepth(s')$.
    But then
    \[
    \mydepth(w,s)=\mydepth(w)+\mydepth(s)=
    \mydepth(w')+\mydepth(s')=\mydepth(w',s')
    \]
    by \eqref{eq:pres:sync}, completing the argument.  So all that
    remains is to prove \eqref{eq:pres:sync}.  The proof proceeds by
    induction on $\mydepth(t)$.
    \begin{itemize}
    \item Base case: $\mydepth(t)=0$.  

      It follows that $t$ is a past state.  Since $U$ is history
      preserving, $t$ is an epistemic past state.  Therefore, $u\in
      W^M$ implies $(u,t)\in W^{M[U]}$ because epistemic past states
      have valid preconditions, and hence $u\leadsto^M u'$ implies
      $(u,t)\leadsto^{M[U]}(u',t)$.  It follows by an easy
      sub-induction on $\mydepth(v)$ that $\mydepth(v,t)=\mydepth(v)$.
      Since $\mydepth(t)=0$, this proves \eqref{eq:pres:sync}.

    \item Induction step: assume the result holds for events $s$ with
      $\mydepth(s)<\mydepth(t)$ (the ``induction hypothesis'') and
      prove the result holds for event $t$ with $\mydepth(t)>0$.

      Since $\mydepth(t)>0$, there exists $t'\leadsto^U t$ with
      $\mydepth(t')=\mydepth(t)-1$.  Since $U$ is history preserving,
      we have $\models\pre^U(t)\imp\pre^U(t')$. But then it follows
      from $(v,t)\in W^{M[U]}$ that $(v,t')\in W^{M[U]}$.  Since
      $\mydepth(t')<\mydepth(t)$, we may apply the induction
      hypothesis, from which it follows that
      \[
      \mydepth(v,t')=\mydepth(v)+\mydepth(t')=
      \mydepth(v)+\mydepth(t)-1\enspace.
      \]
      Further, we note that from $t'\leadsto^U t$ we obtain
      $(v,t')\leadsto^{M[U]}(v,t)$.  Now take an arbitrary
      $(u,s)\leadsto^{M[U]}(v,t)$.  Since $\mydepth(t)>0$, event $t$
      is not a past state and it therefore follows that $u=v$ and
      $s\leadsto^U t$.  Hence $\mydepth(s)\leq \mydepth(t)-1$.
      Applying the induction hypothesis,
      \[
      \mydepth(u,s)=\mydepth(v,s)=
      \mydepth(v)+\mydepth(s)\leq\mydepth(v)+\mydepth(t)-1\enspace.
      \]
      Note that this holds for all $(u,s)\leadsto^{M[U]}(v,t)$.
      Finally, since
      \[
      \mydepth(v,t)=1+\max\{\mydepth(u,s)\mid
      (u,s)\leadsto^{M[U]}(v,t)\}\enspace,
      \]
      it follows by what we have shown above that
      \[
      \mydepth(v,t)=1+\mydepth(v,t')=\mydepth(v)+\mydepth(t)\enspace,
      \]
      which completes the proof. \qedhere
    \end{itemize}
\end{proof}

Theorem~\ref{theorem:preservation} describes conditions under which
properties of epistemic temporal models are preserved under updates.
We will use this theorem later to show that a well-studied atemporal
Dynamic Epistemic Logic approach to reasoning about time is limited to
the class of Kripke models that \emph{necessarily satisfy all of the
  properties we have defined}.  This highlights one of the key
advantages of our $\DETL$ framework: it may be used to reason about
situations that do or do not satisfy these (or other) properties.  The
choice is left to the user.

\section{Connections with Previous Work}
\label{section:connections}

\subsection{$\RDETL$}
\label{section:rdetl}

Theorem~\ref{theorem:preservation} studied the preservation of Kripke
model properties under certain actions.  We chose these properties
because they have been of interest in many studies of time in Dynamic
Epistemic Logic
\cite{BenGerHos09,BenGerPac07,deglowwit11,Sac10,Sac08,Yap07}.  We now
focus our attention on the class of Kripke models that satisfy these
properties.  This provides a paradigmatic example demonstrating how
our $\DETL$ framework can be used to reason about a well-studied account
of time in Dynamic Epistemic Logic.

\begin{definition}[Restricted (forest-like) models]
\label{def:restricted}
A Kripke model $M$ is \emph{restricted} (or \emph{forest-like}) if it
satisfies persistence of facts, depth-definedness, knowledge of the
past, knowledge of the initial time, uniqueness of the past, and
perfect recall (Definition~\ref{definition:model-properties}).  Let
$\restricted$ be the class of all the restricted Kripke models and
$\restricted_\pt$ the class of all pointed restricted Kripke models.
\end{definition}
The restricted models satisfy all the constraints on Kripke models
given in Definition~\ref{definition:model-properties}. Although
synchronicity was not explicitly named as one of the properties of a
restricted model, it is not hard to show that synchronicity does
follow from the other properties (argue by induction on the depth of
worlds, making use of perfect recall, knowledge of the past, and
knowledge of the initial time).

We now define a fragment of $\ldetl$ whose update modals preserve
these restricted models.

\begin{definition}[Language $\lrdetl$]
  \label{definition:lrdetl}
  The language $\lrdetl$ of restricted $\DETL$ is the sublanguage of
  $\ldetl$ obtained by removing all actions $[U,s]$ that are based on
  an action model $U$ that fails to satisfy one or more of persistence
  of facts, depth-definedness, knowledge of the past, history
  preservation, knowledge of the initial time, uniqueness of the past,
  or perfect recall.  This removal applies recursively to
  preconditions as well.
\end{definition}

The restrictions on the action models in $\lrdetl$ are those that
appear in the Preservation Theorem
(Theorem~\ref{theorem:preservation}). Hence updating a restricted
model by an action model in $\Actm(\lrdetl)$ yields another restricted
model. 

\begin{definition}[$\RDETL$ Semantics]
  We write $M,w\models_{\RDETL}\varphi$ to mean that
  $(M,w)\in\restricted_*$ and $M,w\models\varphi$.  We write
  $\models_{\RDETL}\varphi$ to mean that $M,w\models\varphi$ for every
  $(M,w)\in \restricted_*$.
\end{definition}

\subsubsection{Proof system for $\RDETL$}
\label{section:soundness-completeness-YDEL-DETLY}

\begin{definition}[$\RDETL$ Theory]
  The \emph{axiomatic theory of Restricted Dynamic Epistemic Temporal
    Logic}, $\RDETL$, is defined in Figure~\ref{figure:RDETL}.  We
  write $\vdash_{\RDETL}\varphi$ to mean that $\varphi$ is a theorem
  of $\RDETL$.
\end{definition}

\begin{figure}[h]
  \begin{center}
    \textsc{The Axiomatic Theory $\RDETL$}\\[.2em]
    \renewcommand{\arraystretch}{1.3}
    \begin{tabular}[t]{rl}
      Restricted $\DETL$ Theory: &
      Schemes and rules from $\DETL$ restricted to $\lrdetl$\\
      
      Persistence of Facts: &
      $[Y]p\iff (\lnot[Y]\bot\imp p)$ \\
      
      Uniqueness of the Past: &
      $\lnot [Y]\varphi\imp [Y]\lnot\varphi$ \\

      Perfect Recall: &
      $[Y]\Box_a\varphi\imp\Box_a[Y]\varphi$ \\

      Knowledge of the Past: &
      $\lnot[Y]\bot\imp \Box_a\lnot[Y]\bot$ \\

      Knowledge of the Initial Time: &
      $[Y]\bot\imp \Box_a[Y]\bot$ \\
    \end{tabular}
  \end{center}
  \caption{The theory $\RDETL$; formulas and actions all come from
    $\lrdetl$}
  \label{figure:RDETL}
\end{figure}

\begin{theorem}[Soundness and Completeness for $\RDETL$]
\label{theorem:rdetl-completeness}
$\vdash_{\RDETL}\varphi$ iff $\models_{\RDETL}\varphi$ for each
$\varphi\in\lrdetl$.
\end{theorem}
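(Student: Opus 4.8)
The plan is to establish soundness directly and to get completeness by a reduction to the update-free fragment followed by a canonical-model argument, with depth-definedness as the one genuinely hard ingredient. For soundness I would split the schemes. Those inherited from $\DETL$ are already $\models$-sound by the soundness half of Theorem~\ref{theorem:soundness-completeness}, and since the restricted pointed models form a subclass of all pointed models this soundness is inherited on $\restricted_\pt$; the only point needing care is that each reduction axiom is instantiated with an action drawn from $\Actm(\lrdetl)$, so that the reduct $M[U]$ implicit in its justification is again restricted by the Preservation Theorem (Theorem~\ref{theorem:preservation}) and the $\DETL$ argument transfers unchanged. The five added schemes are verified semantically on a restricted frame: $\lnot[Y]\varphi\imp[Y]\lnot\varphi$ uses uniqueness of the past, $\lnot[Y]\bot\imp\Box_a\lnot[Y]\bot$ and $[Y]\bot\imp\Box_a[Y]\bot$ use knowledge of the past and of the initial time, and $[Y]p\iff(\lnot[Y]\bot\imp p)$ uses persistence of facts together with uniqueness of the past. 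The perfect-recall scheme $[Y]\Box_a\varphi\imp\Box_a[Y]\varphi$ is the subtle case: its first-order correspondent is the commutation $(w\to_a v\land v'\leadsto v)\Rightarrow\exists w'(w'\leadsto w\land w'\to_a v')$ rather than the literal statement of perfect recall in Definition~\ref{definition:model-properties}, but the two are interderivable on a restricted model once uniqueness of the past and the derived synchronicity are in hand, so the scheme is sound. The rules visibly preserve $\restricted_\pt$-validity.

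For completeness I would first remove the update modalities by proving an $\lrdetl$-analogue of Theorem~\ref{theorem:ldetl-reduction}: every $\lrdetl$-formula $\varphi$ is $\RDETL$-provably equivalent to an update-free $\lsetl$-formula $\varphi^\circ$. The rewriting is the usual one from Dynamic Epistemic Logic, driven by UA, U$\land$, U$\lnot$, U$\Box_a$ and U$[Y]$, all of which lie in $\RDETL$; the one thing to check is that it never leaves $\lrdetl$, which holds because by Definition~\ref{definition:lrdetl} the preconditions and the component actions $(U,s')$ produced by U$\Box_a$ and U$[Y]$ are themselves built from action models satisfying the preservation properties. Together with soundness this gives $\vdash_{\RDETL}(\varphi\iff\varphi^\circ)$ and the equivalence of $\models_{\RDETL}\varphi$ with $\models_{\RDETL}\varphi^\circ$, so it remains to show that every $\RDETL$-consistent $\lsetl$-formula is satisfied at the point of some restricted model.

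For that static claim I would run the canonical-model construction for the normal bimodal logic generated by $K_a$, $K_Y$ and the five added schemes. The four relational schemes are Sahlqvist (the last two being variable-free), so the canonical frame satisfies uniqueness of the past, knowledge of the past, knowledge of the initial time and the perfect-recall commutation displayed above, while the persistence scheme forces persistence of facts on the canonical valuation; the usual truth lemma then satisfies any consistent $\lsetl$-formula in this model.

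The property the canonical model does \emph{not} supply is depth-definedness, and I expect this to be the main obstacle, as it is not modally definable: the canonical model carries infinite backward $\leadsfrom$-chains, and even after filtering through the closure of the target formula one may be left with backward $\leadsfrom$-cycles, every world of which has $\mydepth=\infty$. To repair this I would pass to a \emph{finite} model and then cut the temporal relation so that it becomes well-founded backward. Here uniqueness of the past is the lever: it makes $\leadsfrom$ a partial function, so each world's backward trace is a single path that either reaches a past state or cycles, and one redirects the cycling traces to terminate at depth-$0$ worlds, using persistence of facts (and Theorem~\ref{theorem:time}, which lets depth be read off syntactically via $D'_n$) to keep the valuation coherent along the new arrows. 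The delicate part, where the real work lies, is to arrange this surgery so that uniqueness of the past, knowledge of the past, knowledge of the initial time and the perfect-recall commutation all survive on the finite acyclic result; once they do, depth-definedness holds by finiteness and acyclicity, synchronicity follows by the induction on depth noted after Definition~\ref{def:restricted}, perfect recall in the sense of Definition~\ref{definition:model-properties} follows from its commutation form, and the resulting model lies in $\restricted$ and still satisfies $\varphi^\circ$, establishing $\not\models_{\RDETL}\varphi$ from $\nvdash_{\RDETL}\varphi$.
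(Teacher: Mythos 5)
Your soundness treatment and your first completeness stage (reduction to $\lsetl$ via the reduction axioms, then a canonical-model argument for the static fragment) match the paper's proof, and your observation that the Perfect Recall axiom corresponds to the ``backward'' commutation property, interderivable with the property of Definition~\ref{definition:model-properties} on restricted models, is correct. You have also correctly identified the crux: the canonical model fails depth-definedness, and this is not a modally definable (hence not a canonical or filtration-stable) property.

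But that crux is exactly where your proposal stops being a proof. Your plan---filtrate to a finite model, then perform ``surgery'' redirecting backward $\leadsfrom$-cycles to terminate at depth-$0$ worlds---is left entirely unexecuted, and as described it faces an obstacle you do not resolve: rewiring $\leadsfrom$-arrows changes the truth values of $[Y]$-formulas, so after the surgery both the truth of the target formula at the designated point and the four relational properties (uniqueness of the past, knowledge of the past, knowledge of the initial time, perfect recall) would have to be re-established from scratch, and in a filtrated model there is no record of ``how many steps back'' a world sits, which is precisely the information the rewiring needs. The paper resolves this by imposing that information \emph{before} filtration, in the opposite order from yours: it unravels the canonical model temporally over $\mathbb{Z}$ (so every world carries an integer level and $\leadsto$-arrows increase level by exactly $1$), takes the generated submodel of the point placed at level $m=d_Y(\lnot\psi)$, trims to levels $0,\dots,m$ using depth-stratified closure sets $\Cl_k(\lnot\psi)$ that track which formulas remain meaningful at each level, and only then filtrates by an equivalence that refines levels. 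Depth-definedness is then automatic from the bounded level structure, and---the second ingredient your sketch lacks---each remaining restricted property is verified on the quotient $F$ not by appeal to frame-property preservation but by showing that a violation would falsify an instance of the corresponding axiom that was deliberately packed into $\Cl_k(\lnot\psi)$, contradicting the Truth Lemma. Without either the level stratification or the axiom-loaded closure sets, your surgery step is a genuine gap, not a routine detail.
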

\begin{proof}
  Theorem~\ref{theorem:soundness-completeness} already establishes the
  soundness of the $\DETL$ schemes and rules.  Soundness for the
  remaining schemes is straightforward to prove.

  The completeness proof can be divided into two stages.  First, prove
  the Reduction Theorem: every $L_\RDETL$-formula is $\RDETL$-provably
  equivalent to an action model-free formula in $\lsetl$.  This
  follows by the proof of Theorem~\ref{theorem:ldetl-reduction}.
  Second, prove completeness of action model-free formulas with
  respect to the class of restricted Kripke models:
  $\nvdash_\RDETL\psi$ for a given $\psi\in\lsetl$ implies there is a
  restricted situation $(M,w)$ for which $M,w\not\models\psi$.  We
  outline a proof of the second stage.

  To begin, fix $\psi\in\lsetl$ satisfying $\nvdash_\RDETL\psi$.  For
  each $\lsetl$-formula $\chi$, we define the \emph{$[Y]$-nesting
    depth} $d_Y(\chi)$ of $\chi$ by the following induction on the
  construction of $\chi$:
  \begin{eqnarray*}
    d_Y(\bot) &\eqdef& 0 \\
    d_Y(p) &\eqdef& 0 \\
    d_Y(\lnot\varphi) &\eqdef& d_Y(\varphi) \\
    d_Y(\varphi\land\psi) &\eqdef& \max\{d_Y(\varphi),d_Y(\psi)\} \\
    d_Y(\Box_a\varphi) &\eqdef& d_Y(\varphi) \\
    d_Y([Y]\varphi) &\eqdef& 1+d_Y(\varphi)
  \end{eqnarray*}
  Let $m\eqdef d_Y(\lnot\psi)$ be the $[Y]$-nesting depth of
  $\lnot\psi$.  We construct the canonical model
  $\Omega=(W^\Omega,\to^\Omega,\leadsfrom^\Omega,V^\Omega)$ for the
  theory $\RDETL$: the worlds in $W^\Omega$ are the maximal
  $\RDETL$-consistent sets of formulas, the binary relations are
  defined canonically (i.e., $\Gamma\to^\Omega_a\Delta$ iff
  $\Gamma^a\eqdef\{\chi\mid \Box_a\chi\in\Gamma\}\subseteq\Delta$ and
  $\Gamma\leadsfrom^\Omega\Delta$ iff $\Gamma^Y\eqdef\{\chi\mid
  [Y]\chi\in\Gamma\}\subseteq\Delta$), and the valuation is defined
  canonically as well (i.e., $V^\Omega(p)\eqdef\{\Gamma\in
  W^\Omega\mid p\in\Gamma\}$).  The Truth Lemma (i.e., the statement
  that $\varphi\in\Gamma$ iff $\Omega,\Gamma\models\varphi$ for each
  $\varphi\in\lsetl$) is proved in the usual way, and hence we have
  $\Omega,\Gamma_{\lnot\psi}\not\models\psi$ for a world
  $\Gamma_{\lnot\psi}\in W^\Omega$ obtained by a Lindenbaum
  construction as a maximal $\RDETL$-consistent extension of the
  $\RDETL$-consistent set $\{\lnot\psi\}$.  However, we cannot
  guarantee that $(\Omega,\Gamma_{\lnot\psi})\in\restricted_*$.  So to
  complete the argument, we perform a sequence of stepwise
  truth-preserving transformations on the canonical model in order to
  construct a pointed model $(F,A)\in\restricted_*$ satisfying the
  property that $F,A\not\models\psi$.
  \vspace{1.5ex}  

  \textsc{Unraveling:} We define the Kripke model $\Omega\times
  \mathbb{Z}$ as the following unraveling of $\Omega$ in the temporal
  direction:
    \begin{itemize}
    \item $W^{\Omega\times \mathbb{Z}}\eqdef W^\Omega\times \mathbb{Z}$.
      
    \item For each $a\in\Agnt$: $(w,k)\to^{\Omega\times \mathbb{Z}}_a
      (w',k')$ if and only if $w\to_a^\Omega w'$ and $k=k'$.

    \item $(w,k)\leadsfrom^{\Omega\times \mathbb{Z}} (w',k')$ if and
      only if $w \leadsfrom^\Omega w'$ and $k'=k-1$.

    \item $V^{\Omega\times \mathbb{Z}}(p) \eqdef V^\Omega(p) \times
      \mathbb{Z}$.

    \end{itemize}
    By induction on the construction of $\lsetl$-formulas, we have
    $\Omega\times \mathbb{Z}, (w,k)\models \varphi$ if and only if
    $\Omega, w\models \varphi$ for each $k\in\mathbb{Z}$ and each
    $\varphi\in \lsetl$.
    \vspace{1.5ex}

    \textsc{Generated submodel:} Let $M$ be the model generated by the
    world $(\Gamma_{\lnot\psi},m)$ in $\Omega\times\mathbb{Z}$ using
    the relations $\leadsfrom^{\Omega\times\mathbb{Z}}$ and
    $\to^{\Omega\times\mathbb{Z}}_a$ for each $a\in\Agnt$:
    \begin{itemize}
    \item $W^M\eqdef\{(w,k)\in\Omega\times\mathbb{Z}\mid
      (\Gamma_{\lnot\psi},m)(\leadsfrom^{\Omega\times\mathbb{Z}}\cup
      \bigcup_{a\in\Agnt}\to^{\Omega\times\mathbb{Z}}_a)^*(w,k)\}$,
      where $R^*$ denotes the reflexive-transitive closure of a binary
      relation $R$.

    \item For each $a\in\Agnt$: ${\to_a^M}\eqdef
      {\to_a^{\Omega\times\mathbb{Z}}}\cap(W^M\times W^M)$.

    \item
      ${\leadsfrom^M}\eqdef{\leadsfrom^{\Omega\times\mathbb{Z}}}\cap(W^M\times
      W^M)$.

    \item $V^M(p)\eqdef V^{\Omega\times\mathbb{Z}}(p)\cap W^M$.
    \end{itemize}
    By induction on the construction of $\lsetl$-formulas, we have
    $M,(w,k)\models\varphi$ if and only if
    $\Omega\times\mathbb{Z},(w,k)\models\varphi$ for each $(w,k)\in
    W^M$ and $\varphi\in\lsetl$.
    \vspace{1.5ex}

    \textsc{Trimming:} For each $\lsetl$-formula $\varphi$, we let
    $\Sub(\varphi)$ be the set of all subformulas of
    $\varphi$. If $S$ and $S'$ are sets of $\lsetl$-formulas, we
    define the following sets of $\lsetl$-formulas:
    \begin{eqnarray*}
      \Sub(S) &\eqdef& \textstyle\bigcup_{\chi\in S}\Sub(\chi)\\
      \lnot S &\eqdef& \{\lnot\chi\mid \chi\in S\} \\
      \Box S &\eqdef& \textstyle\bigcup_{a\in\Agnt}\{\Box_a\chi\mid\chi\in S\} \\{}
      [Y]S &\eqdef& \{[Y]\chi\mid\chi\in S\} \\
      S\imp S' &\eqdef& \{\chi\imp\chi'\mid \chi\in S\land \chi'\in S'\}\\
      S\iff S' &\eqdef& \{\chi\iff\chi'\mid \chi\in S\land \chi'\in S'\}
    \end{eqnarray*}
    For each $k\in\mathbb{N}$ and $\lsetl$-formula $\varphi$, we let
    $\Sub_k(\varphi)$ be the set of subformulas of $\varphi$ with
    $[Y]$-nesting depth at most $k$ and we define the set of
    $\lsetl$-formulas $\Cl_k(\varphi)$ by the following induction:
    \begin{eqnarray*}
      \Cl_0(\varphi) &\eqdef& 
      \begin{array}[t]{ll}
        \textstyle\Sub_0(\varphi)\cup\{\bot\}
      \end{array}
      \\
      \Cl_{k+1}(\varphi) &\eqdef&
      \renewcommand{\arraystretch}{1.3}
      \begin{array}[t]{ll}
        \Sub_{k+1}(\varphi) & \cup
        \\{}
        \Sub([Y]\Cl_k(\varphi) \iff
        (\lnot[Y]\Cl_k(\varphi)\imp\Cl_k(\varphi)))
        & \cup
        \\{}
        \Sub(\lnot[Y]\Cl_k(\varphi)\imp
        [Y]\lnot\Cl_k(\varphi))
        & \cup
        \\{}
        \Sub(\lnot[Y]\Cl_k(\varphi)\imp
        \Box\lnot[Y]\Cl_k(\varphi))
        & \cup
        \\{}
        \Sub([Y]\Cl_k(\varphi)\imp
        \Box[Y]\Cl_k(\varphi))
      \end{array}
    \end{eqnarray*}
    Observe that $\Cl_k(\varphi)$ contains $\lsetl$-formulas
    of $[Y]$-nesting depth at most $k$.  
    Note also that $\bot\in \Cl_k(\varphi)$ for each $k$.
    We define a Kripke model $M'$
    consisting of all worlds $(w,k)\in W^M$ satisfying $0\leq k\leq m$
    with all other components relativized to this set of worlds:
    \begin{itemize}
    \item $W^{M'}\eqdef\{(w,k)\in W^M\mid 0\leq k\leq m\}$.

    \item For each $a\in\Agnt$: ${\to_a^{M'}}\eqdef
      {\to_a^M}\cap(W^{M'}\times W^{M'})$.

    \item ${\leadsfrom^{M'}}\eqdef{\leadsfrom^M}\cap(W^{M'}\times
      W^{M'})$.

    \item $V^{M'}(p)\eqdef V^M(p)\cap W^{M'}$.
    \end{itemize}
    By an induction on $k$ satisfying $0\leq k\leq m$ with a
    subinduction on the construction of $\lsetl$-formulas of
    $[Y]$-nesting depth at most $k$, we can show that for each
    $\varphi\in\Cl_k(\lnot\psi)$ and $(w,k)\in W^{M'}$, we have
    $M',(w,k)\models\varphi$ if and only if $M,(w,k)\models\varphi$.
    \vspace{1.5ex}

    \textsc{Filtration:} We define an equivalence relation $\equiv$ on
    elements of $W^{M'}$ by setting $(w,k)\equiv (w',k')$ if and only
    if $k=k'$ and for all $\varphi\in \Cl_k(\lnot\psi)$, we
    have that $M',(w,k)\models\varphi$ iff $M',(w',k')\models\varphi$.
    For a world $(w,k)\in W^{M'}$, we write $[w,k]$ to denote the
    equivalence class
    \[
    [w,k]\eqdef \{(w',k')\in W^{M'}\mid (w',k')\equiv (w,k)\}
    \]
    of $(w,k)$ under $\equiv$.  We define a Kripke model $F$ by the
    equivalence relation $\equiv$ as follows:
    \begin{itemize}
    \item $W^F\eqdef \{[w,k]\mid (w,k)\in W^{M'}\}$.

    \item For each $a\in\Agnt$: $A\to^F_a B$ iff $\exists (w,k)\in A$
      and $\exists(v,j)\in B$ with $(w,k)\to^{M'}_a(v,j)$.

    \item $A\leadsfrom^F B$ iff $\exists(w,k)\in A$ and $\exists(v,j)\in B$
      with $(w,k)\leadsfrom^{M'}(v,j)$.

    \item $V^F(p)\eqdef \{A\in W^F\mid p\in\Cl_m(\lnot\psi) \text{
        and } \forall(w,k)\in A:M',(w,k)\models p\}$.
    \end{itemize}
    By induction on $k$ satisfying $0\leq k\leq m$ with a subinduction
    on the construction of $\lsetl$-formulas of $[Y]$-nesting depth at
    most $k$, we can show that for each
    $\varphi\in\Cl_k(\lnot\psi)$ and $[w,k]\in W^F$, we have
    $F,[w,k]\models\varphi$ if and only if $M',(w',k')\models\varphi$
    for each $(w',k')\in[w,k]$.
    \vspace{1.7ex}

    \textsc{Truth preservation:} By what we have shown above, it
    follows that for each $(w,k)\in W^\Omega\times\{0,\dots,m\}$ and
    each $\varphi\in\Cl_k(\lnot\psi)$, we have
    $\Omega,w\models\varphi$ if and only if
    $F,[w,k]\models\varphi$.  In particular, we have
    $F,[\Gamma_{\lnot\psi},m]\not\models\psi$.  So to complete the
    proof, it suffices for us to show that $F\in\restricted$ (i.e.,
    $F$ is a restricted model).
    \vspace{1.7ex}

    \textsc{$F$ is a restricted model:} Before we proceed, note that
    $(x,k)\to_a^{M'} (x',k')$ implies $k=k'$ and that
    $(x,k)\leadsto^{M'} (x',k')$ implies $k=k'-1$.  Thus if $A \to^F_a
    B$, then all pairs in $A$ and $B$ have the same second coordinate.
    Also, if $A\leadsto^F B$, then all pairs in $A$ have a second
    coordinate one less than the second coordinate of the pairs in
    $B$.
    \begin{itemize}
    \item \emph{$F$ satisfies uniqueness of the past:}
      \begin{center}
        \footnotesize $([w',k']\leadsto^F[w,k] \land
        [w'',k'']\leadsto^F[w,k])\Rightarrow([w',k']=[w'',k''])
        \enspace.$
      \end{center}

      Suppose not.  From $[w',k']\leadsto^F[w,k]$ and
      $[w'',k'']\leadsto^F[w,k]$, we have $k'=k''=k-1$.  From
      $[w',k-1]\neq[w'',k-1]$, it follows that there exists
      $\varphi\in\Cl_{k-1}(\lnot\psi)$ such that, without loss
      of generality, $F,[w',k-1]\models\varphi$ and
      $F,[w'',k-1]\not\models\varphi$.  Defining
      \[
      \chi\eqdef \lnot[Y]\varphi\imp[Y]\lnot\varphi\enspace,
      \]
      we have $\chi\in\Cl_k(\lnot\psi)$ and
      $F,[w,k]\not\models\chi$ and hence that
      $\Omega,w\not\models\chi$ by \textsc{Truth preservation}.  But
      then it follows by the Truth Lemma that the maximal consistent
      set $w$ fails to contain an instance $\chi$ of the Uniqueness of
      the Past axiom, a contradiction.

    \item \emph{$F$ satisfies persistence of facts:}
      \begin{center}
        \footnotesize $[w,k]\leadsto^F[v,j] \Rightarrow ([w,k]\in
        V^F(p)\Leftrightarrow[v,j]\in V^F(p))\enspace.$
      \end{center}

      Suppose not.  Then we have
      \begin{equation}
        \lnot([w,k]\in V^F(p)\Leftrightarrow[v,j]\in V^F(p))\enspace.
        \label{eq:RDETL:p}
      \end{equation}
      Let $\chi\eqdef [Y]p\iff(\lnot[Y]\bot\imp p)$. Since $F$
      satisfies uniqueness of the past, it follows from
      \eqref{eq:RDETL:p} that $F,[v,j]\not\models \chi$.  Since
      $[w,k]\leadsto^F[v,j]$, we have $j\geq 1$.  Further, it follows
      by \eqref{eq:RDETL:p} that $[w,k]\in V^F(p)$ or $[v,j]\in
      V^F(p)$.  Applying the definition of $V^F(p)$, we have that
      $p\in\Cl_m(\lnot\psi)$ and therefore that
      $p\in\Cl_0(\lnot\psi)$. So since $j\geq 1$, we have
      $\chi\in\Cl_j(\lnot\psi)$.  But then it follows from
      $F,[v,j]\not\models \chi$ by \textsc{Truth preservation} that
      $\Omega,v\not\models\chi$.  Applying the Truth Lemma, the
      maximal consistent set $v$ fails to contain an instance $\chi$
      of the Persistence of Facts axiom, a contradiction.

    \item \emph{$F$ is depth-defined:} $\mydepth([w,k])\neq\infty$.

      For each $[v,j]\in W^F$, we have $0\leq j\leq m$.  Further,
      $[w',k']\leadsfrom^F[v',j']$ implies $j'=k'-1$.  It follows that
      $\mydepth([w,k])\neq\infty$.

    \item \emph{$F$ satisfies knowledge of the past:}
      \begin{center}
        \footnotesize $([w',k']\leadsto^F[w,k]\to^F_a[v,j])\Rightarrow
        \exists[v',j']([v',j']\leadsto^F[v,j])\enspace.$
      \end{center}

      Suppose not.  Letting
      $\chi\eqdef\lnot[Y]\bot\imp\Box_a\lnot[Y]\bot$, it follows that
      $F,[w,k]\not\models\chi$.  Further, from
      $[w',k']\leadsto^F[w,k]$, we have $k\geq 1$ and therefore that
      $\chi\in\Cl_k(\lnot\psi)$.  But then it follows by
      \textsc{Truth preservation} that $\Omega,w\not\models\chi$.
      Applying the Truth Lemma, the maximal consistent set $w$ fails
      to contain an instance $\chi$ of the Uniqueness of the Past
      axiom, a contradiction.

    \item \emph{$F$ satisfies knowledge of the initial time:}
      \begin{center}
        \footnotesize $[w,k]\to^F_a[v,j]\land
        \lnot\exists[w',k']([w',k']\leadsto^F[w,k])\Rightarrow
        \lnot\exists[v',j']([v',j']\leadsto^F[v,j])\enspace.$
      \end{center}
      
      Suppose $[w,k]\to^F_a[v,j]$ and
      $\lnot\exists[w',k']([w',k']\leadsto^F[w,k])$.  If $k=0$, then
      it follows by $[w,k]\to^F_a[v,j]$ that $j=0$ and therefore that
      $\lnot\exists[v',j']([v',j']\leadsto^F[v,j])$ by
      \textsc{Trimming}.  So let us assume that $k>0$.  Further,
      toward a contradiction, we assume that
      $\exists[v',j']([v',j']\leadsto^F[v,j])$.  Letting
      $\chi\eqdef[Y]\bot\imp\Box_a[Y]\bot$, it follows that
      $F,[w,k]\not\models\chi$.  But since $k>0$, we have
      $\chi\in\Cl_k(\lnot\psi)$ and hence it follows by
      \textsc{Truth preservation} that $\Omega,w\not\models\chi$.
      Applying the Truth Lemma, the maximal consistent set $w$ fails
      to contain an instance $\chi$ of the Knowledge of the Initial
      Time axiom, a contradiction.

    \item \emph{$F$ satisfies perfect recall:}
      \begin{center}
        \footnotesize
        $([w,k]\leadsto^F[v,j]\to^F_a[v',j'])\Rightarrow
        \exists[w',k']([w,k]\to^F_a[w',k']\leadsto^F[v',j'])
        \enspace.$
      \end{center}

      Suppose $[w,k]\leadsto^F[v,j]\to^F_a[v',j']$.  Then $k+1=j=j'$.
      Now $[v,k+1]=[v,j]\to^F_a[v',j']=[v',k+1]$ implies
      $\exists(v_*,k+1)\in[v,k+1]$ and $\exists(v'_*,k+1)\in[v',k+1]$
      with $(v_*,k+1)\to^{M'}_a(v'_*,k+1)$.  Hence
      $v_*\to^\Omega_av'_*$.  Now from $[w,k]\leadsto^F[v,j]=[v,k+1]$,
      we have that $F,[v,k+1]\models\lnot[Y]\bot$.  Since $k+1\geq 1$,
      it follows that $\lnot[Y]\bot\in\Cl_{k+1}(\lnot\psi)$
      and therefore we have by \textsc{Truth preservation} that
      $\Omega,v_*\models\lnot[Y]\bot$.  By the definition of truth, it
      follows that there is a $w_*\in W^\Omega$ with
      $w_*\leadsto^\Omega v_*$.  But then
      $[w_*,k]\leadsto^F[v,j]=[v,k+1]$, from which it follows by
      uniqueness of the past for $F$ that $[w_*,k]=[w,k]$ and
      therefore that $(w_*,k)\in[w,k]$. Now $\Omega$ satisfies perfect
      recall (for if it did not, we could find a violation of an
      instance of the Perfect Recall axiom at a maximal consistent
      set, a contradiction). 
      Therefore, since we have
      $w_*\leadsto^\Omega v_*\to^\Omega_a v'_*$, it follows by perfect
      recall for $\Omega$ that there is a $w'_*\in W^\Omega$
      satisfying $w_*\to^\Omega_a w'_*\leadsto^\Omega v'_*$.  
      But then
      $[w,k]=[w_*,k]\to^\Omega_a[w'_*,k]\leadsto^\Omega[v'_*,k+1]=[v',k+1]=[v',j]$.
      \qedhere
    \end{itemize}
  \end{proof}

\subsection{$\YDEL$}

In this section, we relate $\DETL$ to a more conservative approach to
adding time to $\DEL$, which, as is generally studied in the
literature
\cite{BalMos04,BalMosSol98,BalDitMos08,BenEijKoo06,DitHoeKoo07}, does
not use $\leadsto$ arrows in its action models. Further, the semantics
of $\DEL$ does not use Kripke models with designated time-keeping
arrows $\leadsto$. In order to draw this comparison, we will define an
extension of $\DEL$ called ``Yesterday Dynamic Epistemic Logic,'' or
$\YDEL$ (see \cite{Sac08,Sac10,Yap07}), that records a history of the
updates made. We will then show that $\YDEL$ reasoning can be done
within the $\DETL$ setting, since (modulo translation) $\YDEL$ is
sound and complete with respect to a particular class of $\DETL$
models.

\begin{definition}
  \label{definition:lydel}
  $\lydel$ is the atemporal fragment of $\ldetl$.  For reasons
  explained in a moment, we assume that the special symbol $\flat$ is
  used neither as a world nor as an event in $\lydel$.
\end{definition}

We will evaluate $\YDEL$ formulas on restricted Kripke models
(Definition~\ref{def:restricted}).

\begin{definition}[$\YDEL$ Semantics]
  \label{def:YDEL-semantics}
  We define a relation $\models_\YDEL$ between pointed models in
  $\restricted_\pt$ and $\lydel$-formulas using standard Boolean cases
  and the following modal cases.
  \[
  \renewcommand{\arraystretch}{1.3}
  \begin{array}{lcl}
    M,w \models_\YDEL \Box_a\varphi &\text{iff}&
    M,v\models_\YDEL\varphi
    \text{ for each }
    v\from^M_a w
    \\
    M,w \models_\YDEL [Y]\varphi &\text{iff}&
    M,v\models_\YDEL\varphi
    \text{ for each }
    v\leadsto^M w
    \\
    M,w \models_\YDEL [U,s]\varphi &\text{iff}&
    M,w\models_\YDEL\pre^U(s) \text{ implies }
    M\oplus U,(w,s)\models_\YDEL\varphi
  \end{array}
  \]
  where
  \[
  \renewcommand{\arraystretch}{1.3}
  \begin{array}{rcl}
    W^{M\oplus U}
    &\eqdef&
    (W^M\times\{\flat\}) \;\cup
    \\
    &&
    \{(v,t)\in W^M\times W^U\mid M,v\models_\YDEL\pre^U(t)\}
    \\
    (v,t)\to^{M\oplus U}_a(v',t')
    &\text{iff}&
    ((t,t'\neq\flat)\;\&\; v\to^M_a v' \;\&\; t\to^U_a t') \text{ or }
    \\
    &&
    ((t=t'=\flat) \;\&\; v\to^M_a v')
    \\
    (v,t)\leadsto^{M\oplus U}(v',t')
    &\text{iff}&
    (t=\flat\;\&\; t'\neq\flat \;\&\; v=v' )
    \text{ or }
    \\
    &&
    ((t=t'=\flat) \;\&\; v\leadsto^M v')
    \\
    (v,t)\in V^{M\oplus U}(p) &\text{iff}&
    v\in V^M(p)
  \end{array}
  \]
\end{definition}

The forthcoming Corollary~\ref{corollary:ydel-closure} shows that
$M\oplus U\in\restricted$ whenever $M\in \restricted$ and $U$ is
atemporal.  The function of the symbol $\flat$ is to serve as an
epistemic past state, preserving a copy of $M$ in $M \oplus U$ (Lemma
\ref{lemma:flat-bisim}). Since $\YDEL$ uses atemporal action models,
which contain no $\leadsto$ arrows, the mechanism for preserving the
previous model $M$ is ``hardcoded'' in the semantics. However, by
defining a translation from $\lydel$ to $\ldetl$ (Definition
\ref{definition:sharp}), we will be able to show that ``atemporal''
$\YDEL$ reasoning can be captured in $\DETL$.

\begin{lemma}
  \label{lemma:flat-bisim}
  Let $(M,w)$ be a situation and $(U,s)$ be an atemporal action
  satisfying $M,w\models\pre^U(s)$.  The function $f:W^M\to W^{M\oplus
    U}$ defined by $f(w)\eqdef(w,\flat)$ is a bisimulation.
\end{lemma}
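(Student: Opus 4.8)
The plan is to exhibit the graph of $f$, namely $B\eqdef\{(w,(w,\flat))\mid w\in W^M\}$, as a bisimulation between $M$ and $M\oplus U$ in the sense recalled in the proof of Theorem~\ref{theorem:past-state}, where the relevant relation symbols are $\to_a$ (for each $a\in\Agnt$) and $\leadsfrom$. First I would note that $f$ is total and well defined: the ``$\flat$-layer'' $W^M\times\{\flat\}$ is always a subset of $W^{M\oplus U}$ (independently of the hypothesis $M,w\models\pre^U(s)$), so $f(w)=(w,\flat)$ is a legitimate world for every $w$, and every $\flat$-target encountered below is automatically a member of $W^{M\oplus U}$. The atomic-harmony clause is immediate, since $V^{M\oplus U}$ is defined so that $(w,\flat)\in V^{M\oplus U}(p)$ iff $w\in V^M(p)$.

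The heart of the argument is a single observation about how arrows behave on the $\flat$-layer, which I would isolate first: for all $w,v\in W^M$, every agent $a$, and every $t\in W^U\cup\{\flat\}$,
\[
(w,\flat)\to^{M\oplus U}_a (v,t) \text{ iff } \bigl(t=\flat \text{ and } w\to^M_a v\bigr),
\quad
(w,\flat)\leadsfrom^{M\oplus U} (v,t) \text{ iff } \bigl(t=\flat \text{ and } v\leadsto^M w\bigr).
\]
Both equivalences follow by inspecting the two-disjunct definitions of $\to^{M\oplus U}_a$ and $\leadsto^{M\oplus U}$. In the epistemic case, an arrow out of a $\flat$-world cannot use the first disjunct (which demands both second coordinates be non-$\flat$), so it must use the second, forcing the target into the $\flat$-layer and reducing to $w\to^M_a v$. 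In the temporal case, $(w,\flat)\leadsfrom^{M\oplus U}(v,t)$ means $(v,t)\leadsto^{M\oplus U}(w,\flat)$; since the \emph{target} second coordinate is $\flat$, the first disjunct (which requires the target second coordinate to be $\neq\flat$) is unavailable, so the second disjunct applies and forces $t=\flat$ with $v\leadsto^M w$. The point to watch here is precisely the forward ``copy link'' arrow $(v,\flat)\leadsto^{M\oplus U}(v,t')$ produced by the first disjunct: this arrow leaves the $\flat$-layer into a past state of $U$, but because bisimulation only follows $\leadsfrom$ (the past-looking direction) and never its converse, it is never traversed, and the displayed equivalences confirm it yields no spurious yesterday into the $\flat$-layer.

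Granting the two equivalences, the forth and back clauses for both relation symbols fall out mechanically, and I would dispatch them in one stroke. For $\to_a$: forth turns $w\to^M_a v$ into $(w,\flat)\to^{M\oplus U}_a(v,\flat)=f(v)$ with $vBf(v)$, while back shows every $\to^{M\oplus U}_a$-successor of $(w,\flat)$ has the form $(v,\flat)$ with $w\to^M_a v$, again matched by $vBf(v)$. The clauses for $\leadsfrom$ are identical after replacing $w\to^M_a v$ by the relation $v\leadsto^M w$ (equivalently $w\leadsfrom^M v$): forth produces $(w,\flat)\leadsfrom^{M\oplus U}(v,\flat)$ from $w\leadsfrom^M v$, and back recovers $w\leadsfrom^M v$ from any $\leadsfrom^{M\oplus U}$-successor of $(w,\flat)$. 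Since $B$ is nonempty (as $W^M$ is) and all clauses hold, $B$, and hence $f$, is a bisimulation. The only genuinely substantive step is the arrow-behavior observation; the main obstacle is simply keeping the $\leadsto/\leadsfrom$ converse conventions straight while verifying that the copy-link arrow does not interfere.
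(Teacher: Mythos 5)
Your proof is correct and follows essentially the same route as the paper's: exhibit the graph of $f$ and verify the bisimulation clauses directly from the definition of $M\oplus U$. In fact, yours is more thorough than the paper's own proof, which checks atomic harmony and the forth/back clauses only for the epistemic relations $\to_a$ and leaves the $\leadsfrom$ clauses implicit---including precisely the subtle point you isolate, that the forward copy-link arrows $(v,\flat)\leadsto^{M\oplus U}(v,t')$ are never traversed because bisimulations only follow $\leadsfrom$ in the past-looking direction.
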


\begin{proof}
  $w$ and $(w,\flat)$ have the same valuation.  If $f(w)=(w,\flat)
  \to^{M \oplus U}_a (v,t)$, then $t=\flat$ and $w \to^M_a v$. If $w
  \to^M_a v$, then $(w, \flat) \to^{M \oplus U}_a (v,\flat)$.
\end{proof}

Before defining the translation from $\YDEL$ to $\DETL$, we will first
show how $\YDEL$ works by illustrating the way in which $M \oplus U$
is constructed.

\begin{example}
  Figure~\ref{figure:oplus-initial} pictures an initial situation and
  a $\YDEL$ action.  In the initial situation, neither agent knows
  whether $p$ is true.  The action informs $a$ that $p$ is true but
  tells $b$ only that $a$ was either informed of $p$ or provided with
  trivial information.  After applying the action to the situation, we
  obtain the resultant situation in Figure~\ref{figure:oplus}.

  \begin{figure}[ht]
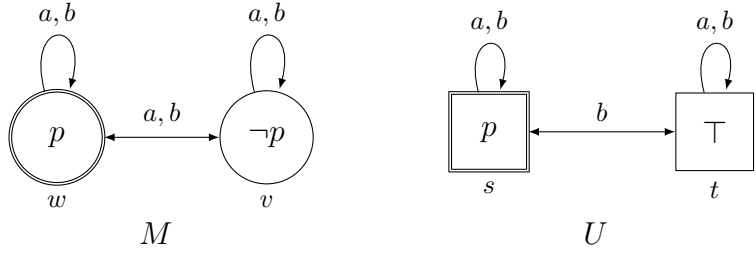

    \begin{center}
      \begin{tabular}{c@{\qquad\qquad}c}
        
        \begin{mytikz}
          \node[w,double,label={below:$w$}] (w) {$p$};
          
          \node[w,right of=w,label={below:$v$}] (v) {$\lnot p$};
                    
          \path (w) edge[loop above] node{$a,b$} ();
          
          \path (w) edge[<->] node{$a,b$} (v);

          \path (v) edge[loop above] node{$a,b$} ();
        \end{mytikz}
        &
        \begin{mytikz}
          
          \node[e,below of=w,double,label={below:$s$}] (s) {$p$};
                              
          \node[e,right of=s,label={below:$t$}] (t) {$\top$};
          
          \path (s) edge[loop above] node{$a,b$} ();
          
          \path (s) edge[<->] node{$b$} (t);
          
          \path (t) edge[loop above] node{$a,b$} ();
        \end{mytikz}
        \\
        $M$ &
        $U$
      \end{tabular}
    \end{center}
    \caption{A situation $(M,w)$ and a $\YDEL$ action model $(U,s)$.}
    \label{figure:oplus-initial}
  \end{figure}
  
  \begin{figure}
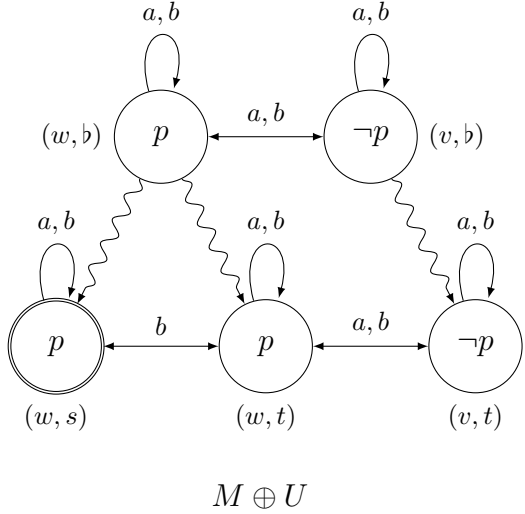

    \begin{center}
      \begin{tabular}{c}
        \begin{mytikz}
          \node[w,label={left:$(w,\flat)$}] (wb) {$p$};
          
          \node[w,right of=wb,label={right:$(v,\flat)$}] (vb) {$\lnot p$};
          
          \node[w,below of=wb,xshift=3.375em,label={below:$(w,t)$}] (wt) {$p$};

          \node[w,right of=wt,label={below:$(v,t)$}] (vt) {$\lnot p$};

          \node[w,double,left of=wt,label={below:$(w,s)$}] (ws) {$p$};
                                        
          \path (wb) edge[loop above] node{$a,b$} ();
          
          \path (vb) edge[loop above] node{$a,b$} ();
          
          \path (ws) edge[loop above] node{$a,b$} ();
          
          \path (wt) edge[loop above] node{$a,b$} ();
          
          \path (vt) edge[loop above] node{$a,b$} ();
          
          \path (wb) edge[<->] node{$a,b$} (vb);
          
          \path (ws) edge[<->] node{$b$} (wt);
          
          \path (wt) edge[<->] node{$a,b$} (vt);
                    
          \path (wb) edge[zz,->] node{} (ws);
          
          \path (wb) edge[zz,->] node{} (wt);
          
          \path (vb) edge[zz,->] node{} (vt);
        \end{mytikz}
        \\\\
        $M\oplus U$
      \end{tabular}
    \end{center}
    \caption{The situation $(M \oplus U, (w,s))$ resulting from
      application of $(U,s)$ to $(M,w)$, both from
      Figure~\ref{figure:oplus-initial}. Agent arrows $\to_x$ are here
      implicitly closed under transitivity.}
    \label{figure:oplus}
  \end{figure}
  
\end{example}

We now show how $\YDEL$ reasoning is captured in $\DETL$.

\subsubsection{Translation of $L_\YDEL$ into $L_\DETL$}

We define a translation from $\YDEL$ formulas and action models to
$\DETL$ formulas and action models.  This translation acts on action
models by adding a new epistemic past state $\flat$ along with an
arrow $\flat\leadsto s$ to each action $s$.  See
Figure~\ref{figure:sharpPicture} for an example.

\begin{definition}[$\sharp$ Translation]
\label{definition:sharp}
  We define a function
  \[
  \sharp:\lydel\cup\Actm^a(\lydel)\to \ldetl\cup\Actm(\ldetl)
  \]
  as follows: $\bot^\sharp=\bot$, $p^\sharp=p$, $\sharp$ commutes with
  unary Boolean connectives and with non-$[U,s]$ modal connectives,
  $\sharp$ distributes over binary Boolean connectives,
  $([U,s]\varphi)^\sharp=[U^\sharp,s]\varphi^\sharp$, and $U^\sharp$
  is defined by taking
  \[
  \renewcommand{\arraystretch}{1.3}
  \begin{array}{lcl}
    W^{U^\sharp} &\eqdef& W^U \cup \{\flat\}
    \\
    s\to^{U^\sharp}_a s' & \text{iff} &
    s\to^U_a s' \text{ or } s=s'=\flat
    \\
    s\leadsto^{U^\sharp} s' & \text{iff} & s=\flat \text{ and } s'\in W^U
    \\
    \pre^{U^\sharp}(s) & \eqdef & \pre^U(s)^\sharp \text{ for } s\in W^U \\
    \pre^{U^\sharp}(\flat) & \eqdef & \top
  \end{array}
  \]

\end{definition}

\begin{figure}
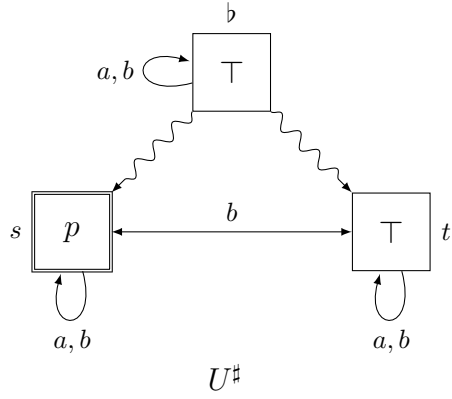

  \begin{center}
    \begin{tabular}{cc}
      \begin{mytikz}
                
        \node[e,label={above:$\flat$}] (flat) {$\top$};
        
        \node[e,double,below left of=flat,label={left:$s$}] (s) {$p$};
        
        \node[e,below right of=flat,label={right:$t$}] (t) {$\top$};
        
        \path (s) edge[loop below] node{$a,b$} ();
        
        \path (s) edge[<->] node{$b$} (t);
        
        \path (t) edge[loop below] node{$a,b$} ();
        
        \path (flat) edge[loop left] node{$a,b$} ();
        
        \path (flat) edge[zz,->] node{} (s);
        
        \path (flat) edge[zz,->] node{} (t);
      \end{mytikz}
      \\
      $U^\sharp$
    \end{tabular}
  \end{center}
  \caption{The action $(U^\sharp,s)$ obtained by applying the
    translation $\sharp$ to the $\YDEL$ action model $U$ from
    Figure~\ref{figure:oplus-initial}.}
  \label{figure:sharpPicture}
\end{figure}

The function $\sharp$ transforms the atemporal action models used by
$\YDEL$ into $\DETL$ action models having epistemic past states.  As
it turns out, such action models are in fact $\RDETL$ action models
(Definition~\ref{definition:lrdetl}).

\begin{lemma}
  \label{lemma:usharp-rdetl}\label{lemma:usharp-properties}
  If $U$ is an atemporal action model, then
  $U^\sharp\in\Actm(\lrdetl)$.
\end{lemma}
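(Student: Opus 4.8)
The plan is to unpack the assertion $U^\sharp\in\Actm(\lrdetl)$ into two obligations. First, $U^\sharp$ must satisfy each restricting property from Definition~\ref{definition:model-properties} named in Definition~\ref{definition:lrdetl} --- depth-definedness, knowledge of the past, history preservation, knowledge of the initial time, uniqueness of the past, and perfect recall (persistence of facts is vacuous here, as action models carry preconditions rather than a valuation). Second, every precondition of $U^\sharp$ must itself be a $\lrdetl$-formula. I would discharge the first obligation by direct inspection of the shape of $U^\sharp$ fixed in Definition~\ref{definition:sharp}, and the second by a mutual induction with the companion claim that $\varphi^\sharp\in\lrdetl$ for every $\lydel$-formula $\varphi$.

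For the structural obligation, the governing observations (recall $\flat\notin W^U$) are: $\flat$ is the unique past state of $U^\sharp$, since every $s\in W^U$ has the single yesterday $\flat$ and $\flat$ has none; $\flat$ is in fact an \emph{epistemic} past state, as $\pre^{U^\sharp}(\flat)=\top$ is valid and its only epistemic arrows are the loops $\flat\to^{U^\sharp}_a\flat$; and epistemic arrows out of any $s\in W^U$ stay within $W^U$, while those out of $\flat$ stay at $\flat$. From these, $\mydepth(\flat)=0$ and $\mydepth(s)=1$ for $s\in W^U$, giving depth-definedness; each state has at most the one yesterday $\flat$, giving uniqueness of the past; and the only temporal arrows $\flat\leadsto^{U^\sharp}s$ satisfy $\models\pre^{U^\sharp}(s)\imp\pre^{U^\sharp}(\flat)$ trivially since $\pre^{U^\sharp}(\flat)=\top$, so with the epistemic-past-state status of $\flat$ we obtain history preservation. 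Knowledge of the past, knowledge of the initial time, and perfect recall each reduce to a short case split driven by the same facts: a backward temporal step always originates at $\flat$ and lands in $W^U$, every element of which has $\flat$ as a yesterday. Thus $w'\leadsto^{U^\sharp}w\to^{U^\sharp}_a v$ forces $v\in W^U$ and hence $\flat\leadsto^{U^\sharp}v$ (knowledge of the past); $w\leadsto^{U^\sharp}v\to^{U^\sharp}_a v'$ forces $v'\in W^U$, so $w'=\flat$ witnesses $w\to^{U^\sharp}_a w'\leadsto^{U^\sharp}v'$ (perfect recall); and the only past state is $\flat$, whose only epistemic successor is $\flat$ (knowledge of the initial time).

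For the precondition obligation, note that $\sharp$ is defined by simultaneous recursion on $\lydel$-formulas and atemporal action models, with $\pre^{U^\sharp}(s)=\pre^U(s)^\sharp$ for $s\in W^U$ and $\pre^{U^\sharp}(\flat)=\top$. Since $\top\in\lrdetl$ trivially, it remains only to see that $\pre^U(s)^\sharp\in\lrdetl$, which is the formula-level claim applied to the strictly simpler formula $\pre^U(s)$. I would therefore prove, by induction on total symbol count, the two statements ``$\varphi^\sharp\in\lrdetl$ for every $\lydel$-formula $\varphi$'' and ``$U^\sharp\in\Actm(\lrdetl)$ for every atemporal action model $U$'' together: the formula claim for $[V,t]\psi$ calls the action-model claim for the smaller $V$ (to license the modality $[V^\sharp,t]$ inside $\lrdetl$, using precisely the structural properties just verified) along with the formula claim for the smaller $\psi$, whereas the action-model claim for $U$ calls only the formula claim for the smaller preconditions $\pre^U(s)$. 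The one point demanding care is the well-foundedness of this mutual recursion --- checking that each appeal really is to a strictly smaller object so that the induction closes --- since the structural verifications are themselves routine once the observations about $\flat$ are in place.
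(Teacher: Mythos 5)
Your proposal is correct. The paper in fact omits the proof entirely, stating only that it is ``straightforward,'' and your verification is precisely the routine check that claim presupposes: the inspection of $U^\sharp$ showing that $\flat$ is the unique past state, is an epistemic past state, and carries all temporal arrows, from which depth-definedness, uniqueness of the past, history preservation, knowledge of the past, knowledge of the initial time, and perfect recall all follow by short case splits. The one point where you go beyond what the paper even implicitly gestures at is the mutual induction pairing the action-model claim with the formula-level claim that $\varphi^\sharp\in\lrdetl$ for every $\varphi\in\lydel$ (needed because Definition~\ref{definition:lrdetl} applies recursively to preconditions); this is a genuine and correctly handled obligation --- the paper only acknowledges it obliquely in the remark after the lemma that the image of $\sharp$ is contained in $\lrdetl\cup\Actm(\lrdetl)$ --- and your well-foundedness check (preconditions of $U$ are strictly smaller than $U$, and both $V$ and $\psi$ are strictly smaller than $[V,t]\psi$) is exactly what makes that recursion close.
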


The proof of this lemma is straightforward.  It follows that the image
of $\sharp$ is contained in $\lrdetl\cup\Actm(\lrdetl)$.  This
containment is strict: every history in $U^\sharp$ has length $1$,
while the length of histories in $\RDETL$ action models is unbounded.

The situation $(M\oplus U,(w,s))$ from Figure~\ref{figure:oplus} was
produced by applying the $\YDEL$ action $(U,s)$ to the initial
situation $(M,w)$, both from Figure~\ref{figure:oplus-initial}.  It is
not difficult to verify that we obtain the same final situation by
applying the $\RDETL$ action $(U^\sharp,s)$; that is,
\[
(M\oplus U,(w,s)) = (M[U^\sharp],(w,s))\enspace.
\]
The following theorem shows that this result holds in general.

\begin{theorem}
  \label{theorem:ydel-rdetl}
  For each $(M,w)\in\restricted_*$, each $\varphi\in\lydel$, and each
  $U\in\Actm^a(\lydel)$:
  \begin{enumerate}
     \renewcommand{\theenumi}{(\alph{enumi})} 
    \renewcommand{\labelenumi}{\theenumi}
  \item $M\oplus U=M[U^\sharp]$, and
    
  \item $M,w\models_\YDEL\varphi$ iff $M,w\models_\RDETL\varphi^\sharp$.
  \end{enumerate}
\end{theorem}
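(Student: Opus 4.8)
The plan is to prove \ref{item:pres-persist}-style parts (a) and (b) together by a single well-founded induction on the simultaneous recursion of Definition~\ref{definition:language} (restricted to the atemporal fragment) that generates $\lydel$-formulas and atemporal action models. The key organizational point is that a measure must be fixed under which both the body $\varphi$ of a modal formula $[U,s]\varphi$ and every precondition $\pre^U(s')$ of the action model $U$ count as strictly simpler than $[U,s]\varphi$ itself; such a measure exists because each formula and action model is built in finitely many steps, with preconditions appearing strictly earlier. The two parts are genuinely interdependent: matching the carrier sets in (a) requires (b) for the preconditions, while the modal case of (b) requires (a) together with the fact that $M\oplus U$ remains a restricted model. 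Everything else is bookkeeping against the two sets of semantic clauses.

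For part (a), I would check that the four components of $M\oplus U$ and $M[U^\sharp]$ coincide. The carrier sets match once one observes that $\flat$ contributes exactly the worlds $W^M\times\{\flat\}$ (its precondition is $\top$ on both sides) and that, for $t\in W^U$, the defining conditions $M,v\models_\YDEL\pre^U(t)$ and $M,v\models\pre^{U^\sharp}(t)=\pre^U(t)^\sharp$ agree by the induction hypothesis (b) applied to the strictly simpler formula $\pre^U(t)$ at the point $(M,v)\in\restricted_\pt$. The epistemic relation is a routine case split on whether each second coordinate equals $\flat$, using that $t\to^{U^\sharp}_a t'$ unwinds to $t\to^U_a t'$ or $t=t'=\flat$. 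For the temporal relation the crucial observation is that $\flat$ is the unique past state of $U^\sharp$, so the DETL clause ``$t$ is a past state'' collapses to ``$t=\flat$''; substituting this together with the equivalence $t'\leadsto^{U^\sharp}t \Leftrightarrow (t'=\flat \wedge t\in W^U)$ into the DETL definition reproduces exactly the two YDEL disjuncts. The valuation clause is literally identical on both sides.

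For part (b), I would induct on formula structure. The atomic, Boolean, $\Box_a$, and $[Y]$ cases are immediate since $\sharp$ fixes propositional letters and commutes with these connectives, and the YDEL and DETL clauses for $\Box_a$ and $[Y]$ quantify over the same $\from^M_a$- and $\leadsto^M$-predecessors in the unchanged model $M$ (the hypothesis is applied at each $(M,v)\in\restricted_\pt$, legitimate because $M\in\restricted$). The only substantive case is $[U,s]\varphi$, where $([U,s]\varphi)^\sharp=[U^\sharp,s]\varphi^\sharp$. I would first match the two preconditions via (b) for $\pre^U(s)$, so the two implications have equivalent antecedents; then, assuming the antecedent, invoke part (a) to identify $M\oplus U=M[U^\sharp]$, note that this common model lies in $\restricted$ (by Lemma~\ref{lemma:usharp-rdetl} and the Preservation Theorem~\ref{theorem:preservation}, equivalently by the forthcoming Corollary~\ref{corollary:ydel-closure}), and finally apply the induction hypothesis (b) to the strictly simpler $\varphi$ at $(M\oplus U,(w,s))\in\restricted_\pt$ to conclude $M\oplus U,(w,s)\models_\YDEL\varphi \Leftrightarrow M[U^\sharp],(w,s)\models\varphi^\sharp$.

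The hardest part is not any single calculation but this interlocking of the two parts: because the world-matching in (a) feeds on (b) for preconditions, and the modal case of (b) feeds on (a) and on the closure of $\restricted$, one must verify that every appeal to the inductive hypothesis is to a strictly smaller object under one fixed complexity measure, and in particular that $M\oplus U$ is already known to be restricted before the hypothesis for $\varphi$ may be applied at $(w,s)$.
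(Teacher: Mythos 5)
Your proposal is correct and follows essentially the same route as the paper: the paper implements your ``fixed complexity measure'' as an explicit stratification $L_0\subseteq L_1\subseteq\cdots$ of $\lydel$ (induction on the level $i$, with preconditions confined to lower levels and a sub-induction on formula structure inside each level), and then proves (a) by exactly your componentwise comparison and (b) by your treatment of the $[U,s]\varphi$ case. Your explicit verification that $M\oplus U\in\restricted$ before applying the inductive hypothesis at $(w,s)$ --- justified via Lemma~\ref{lemma:usharp-rdetl} and Theorem~\ref{theorem:preservation}, which is the right route, since appealing instead to Corollary~\ref{corollary:ydel-closure} would be circular because that corollary is derived from this very theorem --- makes explicit a step the paper leaves tacit but genuinely needs.
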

\begin{proof}
  Set $L_0\eqdef\lsetl$.  Once $L_i$ is defined, define the language
  $L_{i+1}$ and the set $\Actm^a_*(L_{i+1})$ of pointed atemporal
  action models over $L_{i+1}$ by the following grammar:
  \[
  \begin{array}{lcl}
    \varphi &::=&
    \psi \mid \lnot\varphi \mid \varphi\land\varphi \mid
    \Box_a\varphi \mid [Y]\varphi
    \mid [U,s]\varphi
    \\
    &&
    \text{\footnotesize
      $\psi\in L_i$,
      $a\in\Agnt$,
      $(U,s)\in\Actm^a_\pt(L_i)$
    }
  \end{array}
  \]
  Note that the preconditions of any $U\in\Actm^a(L_{i+1})$ all come
  from $L_i$.  Further, $\lydel=\bigcup_{i\in\Nat}L_i$.  We prove by
  induction on $i$ that we have $S_i$, which we define as the
  conjunction of the following two statements:
  \begin{enumerate}
  \item \label{ydel:actm} For each $M\in\restricted$ and each
    $U\in\Actm^a(L_i)$: $M\oplus U=M[U^\sharp]$.

  \item \label{ydel:flma} For each $(M,w)\in\restricted_*$ and each
    $\varphi\in L_i$:
    \begin{center}
      $M,w\models_\YDEL\varphi$ iff
      $M,w\models_\RDETL\varphi^\sharp$\enspace.
    \end{center}
  \end{enumerate}
  The base case $i=0$ is immediate.  For the induction step, we assume
  that $S_i$ holds (the ``induction hypothesis'') and prove that
  $S_{i+1}$ holds. We begin with Statement~\ref{ydel:actm}. We must
  show that $W^{M\oplus U}=W^{M[U^\sharp]}$, that ${\to_a^{M\oplus
      U}}={\to_a^{M[U^\sharp]}}$, that ${\leadsfrom^{M\oplus
      U}}={\leadsfrom^{M[U^\sharp]}}$, and that $V^{M\oplus
    U}=V^{M[U^\sharp]}$.  Proceeding, we have
  \begin{eqnarray*}
    W^{M\oplus U} &=&
    (W^M\times\{\flat\}) \cup
    \{(v,t)\in W^M\times W^U\mid M,v\models_\YDEL\pre^U(t)\}
    \\
    &=&
    (W^M\times\{\flat\}) \cup
    \{(v,t)\in W^M\times W^U\mid M,v\models_\RDETL\pre^U(t)^\sharp\}
    \\
    &=&
    \{(v,t)\in W^M\times W^{U^\sharp}\mid
    M,v\models_\RDETL\pre^U(t)^\sharp\}
    \\
    &=&
    \{(v,t)\in W^M\times W^{U^\sharp}\mid
    M,v\models_\RDETL\pre^{U^\sharp}(t)\}
    \\
    &=&
    W^{U^\sharp}
  \end{eqnarray*}
  The first equality follows by definition of $W^{M\oplus U}$.  The
  second follows by the induction hypothesis.  The third follows by
  the definition of $U^\sharp$; in particular, $\flat\in W^{U^\sharp}$
  has precondition $\top$. The fourth equality follows by the
  definition of $U^\sharp$.  The fifth equality follows by the
  definition of $W^{M[U^\sharp]}$.

  Next, we have $(v,t)\to^{M\oplus U}_a(v',t')$ by definition if and
  only if $v\to^M_av'$ and either
  \begin{itemize}
  \item $t,t'\neq\flat$ and $t\to^U_at'$; or  

  \item $t=t'=\flat$.
  \end{itemize}
  Further, we have $(v,t)\to^{M[U^\sharp]}_a(v',t')$ by definition if
  and only if $v\to^M_av'$ and $t\to^{U^\sharp}_at'$.  Thus
  ``$(v,t)\to^{M[U^\sharp]}_a(v',t')$ if and only if
  $(v,t)\to^{M\oplus U}_a(v',t')$'' (which we aim to show) is
  equivalent to ``$t\to^{U^\sharp}_at'$ if and only if $t=t'=\flat$ or
  both $t,t'\neq\flat$ and $t\to^U_at'$.''  But this follows because
  $\flat\notin W^U$ by assumption (Definition~\ref{definition:lydel})
  and we have $t\to^{U^\sharp}_at'$ if and only if $t\to^U_at'$ or
  $t=t'=\flat$ (Definition~\ref{definition:sharp}).  Conclusion:
  ${\to_a^{M\oplus U}}={\to_a^{M[U^\sharp]}}$.

  Next, we have $(v,t)\leadsto^{M\oplus U}(v',t')$ by definition if
  and only if
  \begin{itemize}
  \item $t=\flat$, $t'\neq\flat$, and $v=v'$; or

  \item $t=t'=\flat$ and $v\leadsto^M v'$.
  \end{itemize}
  Further, we have $(v,t)\leadsto^{M[U^\sharp]}(v',t')$ by definition
  if and only if 
  \begin{itemize}
  \item $v\leadsto^Mv'$, $t=t'$, and $t$ is a past state; or

  \item $v=v'$ and $t\leadsto^{U^\sharp} t'$.
  \end{itemize}
  In the action model $U^\sharp$, event $\flat$ is the unique past
  state, the only event with an outgoing $\leadsto$ arrow, and has an
  outgoing $\leadsto$ arrow to every other event.  It follows that we
  have $(v,t)\leadsto^{M[U^\sharp]}(v',t')$ if and only if
  ``$v\leadsto^Mv'$ and $t=t'=\flat$'' or ``$v=v'$, $t=\flat$, and
  $t'\neq\flat$.''  But this is equivalent to the conditions defining
  $(v,t)\leadsto^{M\oplus U}(v',t')$.  Conclusion:
  ${\leadsfrom^{M\oplus U}}={\leadsfrom^{M[U^\sharp]}}$.

  Finally, we have $(v,t)\in V^{M\oplus U}(p)$ if and only if $v\in
  V^M(p)$ if and only if $(v,t)\in V^{M[U^\sharp]}(p)$.  Here we made
  tacit use of the fact that $W^{M\oplus U}=W^{M[U^\sharp]}$.
  Conclusion: $V^{M\oplus U}=V^{M[U^\sharp]}$.  

  This completes the proof of Statement~\ref{ydel:actm}.  The proof of
  Statement~\ref{ydel:flma} then proceeds by a sub-induction on the
  construction of $L_{i+1}$-formulas.  Most cases are obvious, so we
  only address the case for $L_{i+1}$-formulas $[U,s]\varphi$.
  Proceeding, we have $M,w\models_\YDEL[U,s]\varphi$ if and only if
  $M,w\not\models_\YDEL\pre^U(s)$ or $M\oplus
  U,(w,s)\models_\YDEL\varphi$.  By Statement~\ref{ydel:flma} of the
  induction hypothesis, the latter is equivalent to
  ``$M,w\not\models_\RDETL\pre^U(s)^\sharp$ or $M\oplus
  U,(w,s)\models_\RDETL\varphi^\sharp$,'' which is itself equivalent
  to ``$M,w\not\models_\RDETL\pre^{U^\sharp}(s)$ or
  $M[U^\sharp],(w,s)\models_\RDETL\varphi^\sharp$'' (by the definition
  of $U^\sharp$ for the left disjunct and Statement~\ref{ydel:actm} of
  the induction hypothesis for the right).  But this is equivalent to
  $M,w\models_\RDETL[U^\sharp,s]\varphi^\sharp$ by the $\RDETL$
  semantics.  Since
  $[U^\sharp,s]\varphi^\sharp=([U,s]\varphi)^\sharp$, the result
  follows.
\end{proof}

A corollary of Theorem~\ref{theorem:ydel-rdetl} is that restricted
models are closed under the $\YDEL$ update operation $M\mapsto M\oplus
U$.

\begin{corollary}
  \label{corollary:ydel-closure}
  If $M\in\restricted$, $U\in\Actm(\lydel)$, and $W^{M\oplus
    U}\neq\emptyset$, then $M\oplus U\in\restricted$.
\end{corollary}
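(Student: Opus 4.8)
The plan is to reduce the claim to the Preservation Theorem (Theorem~\ref{theorem:preservation}) by routing through the $\sharp$-translation, rather than verifying the six restrictedness conditions directly from the $\oplus$-semantics.

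First I would observe that the $\oplus$-operation does not depend on the temporal relation of $U$. Inspecting Definition~\ref{def:YDEL-semantics}, every clause defining $W^{M\oplus U}$, $\to^{M\oplus U}_a$, $\leadsto^{M\oplus U}$, and $V^{M\oplus U}$ refers only to $\to^U_a$, the preconditions of $U$, the relations of $M$, and the symbol $\flat$---never to $\leadsto^U$. Hence if $U^a$ denotes the atemporal action model obtained from $U$ by discarding all $\leadsto^U$ arrows, then $M\oplus U = M\oplus U^a$, and $U^a\in\Actm^a(\lydel)$ since deleting temporal arrows leaves the preconditions unchanged. This lets me replace the hypothesis $U\in\Actm(\lydel)$ by an \emph{atemporal} action model, so that Theorem~\ref{theorem:ydel-rdetl}\ref{ydel:actm} applies and yields $M\oplus U = M\oplus U^a = M[(U^a)^\sharp]$.

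Next I would invoke Lemma~\ref{lemma:usharp-rdetl} to conclude $(U^a)^\sharp\in\Actm(\lrdetl)$. By Definition~\ref{definition:lrdetl}, membership in $\Actm(\lrdetl)$ means precisely that $(U^a)^\sharp$ satisfies depth-definedness, knowledge of the past, history preservation, knowledge of the initial time, uniqueness of the past, and perfect recall---exactly the action-model side-conditions appearing in the hypotheses of the Preservation Theorem. The remaining precondition hypothesis of that theorem is also met trivially: the event $\flat$ has precondition $\top$, so $(v,\flat)\in W^{M[(U^a)^\sharp]}$ for every $v\in W^M$, and $W^M\neq\emptyset$ by definition of a Kripke model. (This is likewise why the stated hypothesis $W^{M\oplus U}\neq\emptyset$ is in fact automatic.)

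Finally, since $M\in\restricted$ satisfies all six Kripke-model properties of Definition~\ref{def:restricted} and $(U^a)^\sharp$ supplies every matching action-model condition, I would apply Theorem~\ref{theorem:preservation}\ref{item:pres-persist}--\ref{item:pres-PR} one clause at a time: persistence of facts, depth-definedness, knowledge of the past, knowledge of the initial time, uniqueness of the past, and perfect recall each transfer to $M[(U^a)^\sharp]$. Thus $M[(U^a)^\sharp]\in\restricted$, and since $M\oplus U = M[(U^a)^\sharp]$, the model $M\oplus U$ is restricted as well. The only real care needed---the main obstacle, such as it is---lies in the bookkeeping of the second paragraph: confirming that $\oplus$ genuinely ignores $\leadsto^U$ so that the atemporal-input hypotheses of Theorem~\ref{theorem:ydel-rdetl} and Lemma~\ref{lemma:usharp-rdetl} are legitimately available, and then checking that each preservation clause's action-model hypothesis is indeed listed among the defining properties of $\lrdetl$.
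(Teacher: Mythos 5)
Your proposal is correct and takes essentially the same route as the paper's proof: apply Lemma~\ref{lemma:usharp-rdetl} to get an $\Actm(\lrdetl)$ action model, use Theorem~\ref{theorem:ydel-rdetl}(a) to identify $M\oplus U$ with the corresponding $\DETL$ update, and then invoke the Preservation Theorem (Theorem~\ref{theorem:preservation}) clause by clause. Your additional care in passing to the atemporal reduct $U^a$ (since $\oplus$ ignores $\leadsto^U$) and your observation that $W^{M\oplus U}\neq\emptyset$ holds automatically are sensible refinements of hypotheses the paper glosses over---it simply reads $U\in\Actm(\lydel)$ as ``$U$ is atemporal''---but they do not change the substance of the argument.
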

\begin{proof}
  Fix $M\in\restricted$ and $U\in\Actm(\lydel)$ with $W^{M\oplus
    U}\neq\emptyset$.  Since $U\in\Actm(\lydel)$ if and only if $U$ is
  atemporal (Definition~\ref{definition:lydel}), it follows by
  Lemma~\ref{lemma:usharp-rdetl} that $U^\sharp\in\Actm(\lrdetl)$.
  Applying Preservation (Theorem~\ref{theorem:preservation}) and the
  definition of $\lrdetl$ (Definition~\ref{definition:lrdetl}), we
  have $M[U^\sharp]\in\restricted$.  By
  Theorem~\ref{theorem:ydel-rdetl}, $M\oplus
  U=M[U^\sharp]\in\restricted$.
\end{proof}

\subsubsection{Connecting the theories of $\YDEL$ and $\RDETL$}

\begin{definition}
  The theory $\YDEL$ is defined in Figure~\ref{figure:YDEL}.  Note that
  all axioms and rules refer to formulas in $\lydel$ and hence to
  action models in $\Actm(\lydel)$.
\end{definition}

\begin{figure}[h]
  \begin{center}
    \textsc{Basic Schemes}\\[.2em]
    \renewcommand{\arraystretch}{1.3}
    \begin{tabular}[t]{rl}
      Classical Logic: &
      Schemes for Classical Propositional Logic \\

      Normality for $\Box_a$: &
      $\Box_a(\varphi\imp\psi)\imp(\Box_a\varphi\imp\Box_a\psi)$ \\

      Normality for $Y$: &
      $[Y](\varphi\imp\psi)\imp([Y]\varphi\imp[Y]\psi)$ \\

      Persistence of Facts: &
      $[Y]p\iff (\lnot[Y]\bot\imp p)$ \\

      Uniqueness of the Past: &
      $\lnot [Y]\varphi\imp [Y]\lnot\varphi$ \\

      Perfect Recall: &
      $[Y]\Box_a\varphi\imp\Box_a[Y]\varphi$ \\

      Knowledge of the Past: &
      $\lnot[Y]\bot\imp \Box_a\lnot[Y]\bot$ \\

      Knowledge of the Initial Time: &
      $[Y]\bot\imp \Box_a[Y]\bot$ \\

      Reduction Axioms: &
      $[U,s]q \iff \bigl( \pre^U(s)\imp q \bigr)$
      for $q\in\Prop$ \\

      & $[U,s](\varphi\land\psi) \iff
      \bigl( [U,s]\varphi \land [U,s]\psi \bigr)$
      \\

      & $[U,s]\lnot\varphi \iff
      \bigl( \pre^U(s)\imp\lnot[U,s]\varphi \bigr)$ \\

      & $[U,s]\Box_a\varphi \iff
      \bigl(
      \pre^U(s) \imp
      \bigwedge_{s'\from^U_as}\Box_a[U,s']\varphi
      \bigr)$ \\

      & $[U,s][Y]\varphi \iff
      (\pre^U(s)\imp\varphi)$
    \end{tabular}
    \\[.7em]
    \textsc{Rules}\vspace{-1em}
    \[
    \begin{array}{c}
      \varphi\imp\psi \quad \varphi
      \\\hline
      \psi
    \end{array}
    \,\text{\footnotesize (MP)}
    \quad
    \begin{array}{c}
      \varphi
      \\\hline
       \Box_a\varphi
    \end{array}
    \,\text{\footnotesize ($\Box_a$N)}
    \quad
    \begin{array}{c}
      \varphi
      \\\hline
       [Y]\varphi
    \end{array}
    \,\text{\footnotesize ($[Y]$N)}
    \quad
    \begin{array}{c}
      \varphi
      \\\hline
       [U,s]\varphi
    \end{array}
    \,\text{\footnotesize (UN)}
    \]
  \end{center}
  \caption{The theory $\YDEL$; formulas are all in $\lydel$}
  \label{figure:YDEL}
\end{figure}

\begin{theorem}
  \label{theorem:ydel-completeness}
  The theory of $\YDEL$ is sound and complete with respect to
  $\restricted_\pt$.
\end{theorem}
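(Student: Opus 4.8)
The plan is to deduce soundness and completeness of $\YDEL$ from the already-established soundness and completeness of $\RDETL$ (Theorem~\ref{theorem:rdetl-completeness}) by passing through the $\sharp$-translation of Definition~\ref{definition:sharp}. Two facts do most of the work. First, part (b) of Theorem~\ref{theorem:ydel-rdetl} says $M,w\models_\YDEL\varphi$ iff $M,w\models_\RDETL\varphi^\sharp$ on every $(M,w)\in\restricted_\pt$; since both validity notions quantify over $\restricted_\pt$, this upgrades at once to $\models_\YDEL\varphi$ iff $\models_\RDETL\varphi^\sharp$ (and $\varphi^\sharp\in\lrdetl$ by Lemma~\ref{lemma:usharp-rdetl}). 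Second, Theorem~\ref{theorem:rdetl-completeness} gives $\models_\RDETL\varphi^\sharp$ iff $\vdash_\RDETL\varphi^\sharp$. Chaining these, the whole theorem reduces to the purely syntactic correspondence
\[
\vdash_\YDEL\varphi \quad\text{iff}\quad \vdash_\RDETL\varphi^\sharp \qquad (\dagger)
\]
whose forward half (with $\RDETL$ soundness) yields $\YDEL$ soundness and whose backward half (with $\RDETL$ completeness) yields $\YDEL$ completeness.

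For the forward direction of $(\dagger)$ I would translate a $\YDEL$-derivation of $\varphi$ into an $\RDETL$-derivation of $\varphi^\sharp$, checking that each axiom's $\sharp$-image is an $\RDETL$-theorem and that the rules are preserved (routine, as $\sharp$ commutes with $\Box_a$ and $[Y]$ and sends $[U,s]$ to $[U^\sharp,s]$). The classical, normality, and five frame axioms (persistence of facts, uniqueness of the past, perfect recall, knowledge of the past, knowledge of the initial time) are fixed by $\sharp$ and are literally axioms of $\RDETL$ (Figure~\ref{figure:RDETL}). The reduction axioms for atoms, $\land$, $\lnot$, and $\Box_a$ translate to the matching $\DETL$ reduction axioms for $U^\sharp$, using $\pre^{U^\sharp}(s)=\pre^U(s)^\sharp$ and the fact that for $s\in W^U$ the arrows $s\to^{U^\sharp}_a s'$ coincide with $s\to^U_a s'$, since the new event $\flat$ carries only its reflexive loop.

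The one genuinely new point---and the main obstacle---is the $\YDEL$ yesterday-reduction axiom $[U,s][Y]\varphi\iff(\pre^U(s)\imp\varphi)$, which collapses ``yesterday after the update'' in one step, whereas $\DETL$ routes $[Y]$ through the two-case axiom U$[Y]$. Its $\sharp$-image is $[U^\sharp,s][Y]\varphi^\sharp\iff(\pre^{U^\sharp}(s)\imp\varphi^\sharp)$. In $U^\sharp$ the point $s\in W^U$ is \emph{not} a past state, so the non-past-state case of U$[Y]$ rewrites the left side to $\pre^{U^\sharp}(s)\imp\bigwedge_{s'\leadsto^{U^\sharp}s}[U^\sharp,s']\varphi^\sharp$; but $\flat$ is the unique $\leadsto^{U^\sharp}$-predecessor of $s$, so the conjunction is just $[U^\sharp,\flat]\varphi^\sharp$. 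It therefore suffices to prove $\vdash_\RDETL[U^\sharp,\flat]\chi\iff\chi$. I would first $\RDETL$-reduce $\chi=\varphi^\sharp$ to an action-model-free $\lsetl$-formula (the Reduction Theorem, exactly as in Theorem~\ref{theorem:ldetl-reduction}) and then show that $[U^\sharp,\flat]$ commutes provably with every static connective and so pushes inward until it vanishes on atoms: $\pre^{U^\sharp}(\flat)=\top$ removes the precondition guards in the atom, $\lnot$, and $\land$ axioms; the only $\to_a$-arrow out of $\flat$ is reflexive, so U$\Box_a$ gives $[U^\sharp,\flat]\Box_a\alpha\iff\Box_a[U^\sharp,\flat]\alpha$; and $\flat$ \emph{is} a past state, so the past-state case of U$[Y]$ gives $[U^\sharp,\flat][Y]\alpha\iff[Y][U^\sharp,\flat]\alpha$. (Semantically this is precisely Theorem~\ref{theorem:past-state}\ref{item:ps:a}, since $\flat$ is an epistemic past state; one could instead invoke that theorem with $\RDETL$ completeness, but the syntactic push keeps soundness independent of completeness.)

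For the backward direction of $(\dagger)$, and hence for completeness, I would apply the $\YDEL$ Reduction Theorem (again the argument of Theorem~\ref{theorem:ldetl-reduction}) to obtain an $\lsetl$-formula $\varphi^\circ$ with $\vdash_\YDEL\varphi\iff\varphi^\circ$; feeding this equivalence through the forward half of $(\dagger)$ gives $\vdash_\RDETL\varphi^\sharp\iff\varphi^\circ$ (as $\sharp$ fixes $\varphi^\circ$). So if $\vdash_\RDETL\varphi^\sharp$ then $\vdash_\RDETL\varphi^\circ$, and since $\varphi^\circ\in\lsetl$---where the $\YDEL$ and $\RDETL$ axiom sets coincide verbatim---we get $\vdash_\YDEL\varphi^\circ$ and thus $\vdash_\YDEL\varphi$. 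The agreement on $\lsetl$ is the shared static logic whose completeness over restricted models is exactly the stage-two canonical-model construction in the proof of Theorem~\ref{theorem:rdetl-completeness}; routing it through soundness and Theorem~\ref{theorem:ydel-rdetl}(b) yields, for $\psi\in\lsetl$, $\vdash_\YDEL\psi$ iff $\models_\RDETL\psi$ iff $\vdash_\RDETL\psi$. With $(\dagger)$ in hand in both directions, soundness follows from its forward half together with $\RDETL$ soundness and the semantic bridge, so no separate axiom-by-axiom soundness check on $\restricted_\pt$ is required (though the direct check, using Lemma~\ref{lemma:flat-bisim} for the $[Y]$-axiom and Corollary~\ref{corollary:ydel-closure} for closure, is an available alternative).
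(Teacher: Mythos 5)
Your proposal is correct in substance, but it takes a genuinely different route from the paper. The paper proves soundness directly against the $\oplus$-semantics on $\restricted_\pt$, detailing only the two non-routine cases---the axiom $[U,s][Y]\varphi\iff(\pre^U(s)\imp\varphi)$, verified semantically via the bisimulation $w\mapsto(w,\flat)$ of Lemma~\ref{lemma:flat-bisim}, and the rule (UN), via Corollary~\ref{corollary:ydel-closure}---and then proves completeness by rerunning the canonical-model/unraveling/trimming/filtration argument of Theorem~\ref{theorem:rdetl-completeness}. You instead make the syntactic correspondence $(\dagger)$, $\vdash_\YDEL\varphi$ iff $\vdash_\RDETL\varphi^\sharp$, the centerpiece and derive the theorem from it together with Theorem~\ref{theorem:ydel-rdetl}(b) and Theorem~\ref{theorem:rdetl-completeness}. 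This exactly inverts the paper's logical order: the paper states $(\dagger)$ as a corollary \emph{after} the theorem and derives it \emph{from} soundness and completeness, whereas you prove $(\dagger)$ first, its forward half by translating derivations, with the derived lemma $\vdash_\RDETL[U^\sharp,\flat]\chi\iff\chi$ serving as the proof-theoretic counterpart of Lemma~\ref{lemma:flat-bisim} and Theorem~\ref{theorem:past-state}\ref{item:ps:a}. What your route buys: no axiom-by-axiom check against the $\oplus$-semantics is needed, and the $\sharp$-translation carries all the conceptual weight. What it does not buy: the backward half of $(\dagger)$ still requires completeness of the shared static fragment $\lsetl$ over $\restricted_\pt$, which is precisely the stage-two canonical-model construction of Theorem~\ref{theorem:rdetl-completeness}, so the hard work is imported rather than avoided. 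Two steps deserve flags. First, your remark that $\vdash_\RDETL\varphi^\circ$ gives $\vdash_\YDEL\varphi^\circ$ ``because the axiom sets coincide verbatim on $\lsetl$'' is, taken alone, a non sequitur: an $\RDETL$-derivation of a static formula may detour through $\lrdetl$-formulas and actions that are not $\sharp$-images, so coincidence of axioms on $\lsetl$ does not transfer derivations; you do repair this in the next sentence by routing through soundness and static-fragment completeness ($\vdash_\RDETL\varphi^\circ\Rightarrow\models_\RDETL\varphi^\circ\Rightarrow\vdash_\YDEL\varphi^\circ$), and that semantic detour must be taken as the official argument. Second, combining the Reduction Theorem with your push-through lemma requires replacement of provable equivalents under $[U^\sharp,\flat]$ (to pass from $\vdash_\RDETL\varphi^\sharp\iff\chi_0$ to $\vdash_\RDETL[U^\sharp,\flat]\varphi^\sharp\iff[U^\sharp,\flat]\chi_0$); this is derivable from U$\lnot$, U$\land$, and classical logic in the standard DEL way, but it should be stated explicitly---or else use the semantic alternative you mention (validity plus Theorem~\ref{theorem:rdetl-completeness}), which is unproblematic since $\RDETL$-completeness was established independently of $\YDEL$, so no circularity arises.
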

\begin{proof}
  Soundness for most of the axioms is straightforward. As such, we
  will only go through the proofs for the $[Y]$-reduction axiom and
  (UN).

  \begin{itemize}
    
  \item Soundness for $[U,s][Y]\varphi \iff (\pre^U(s)\imp\varphi)$.

    We first prove the left-to-right direction of the
    equivalence. Suppose $M,w \models [U,s][Y] \varphi$ and $M,w
    \models \pre^U(s)$. This implies that for every $(v,t) \leadsto^{M
      \oplus U} (w,s)$, we have $M \oplus U, (v,t) \models
    \varphi$. By definition of $M\oplus U$, we have $(w, \flat)
    \leadsto^{M \oplus U} (w,s)$, and so $M \oplus U (w,\flat) \models
    \varphi$. By Lemma~\ref{lemma:flat-bisim}, $M,w \models \varphi$.

    Now we prove the right-to-left direction.  Suppose $M,w \models
    \pre^U(s)$ and $M,w \models \varphi$ (the case where
    $M,w\not\models \pre^U(s)$ is immediate).  By definition of
    $M\oplus U$ and uniqueness of the past, $(w,t) \leadsto^{M \oplus
      U} (w,s)$ implies $t=\flat$. By the fact that $M,w \models
    \varphi$ and Lemma~\ref{lemma:flat-bisim}, we have $M \oplus U,
    (w,\flat) \models \varphi$. Hence $M,w \models [U,s][Y] \varphi$.

  \item Soundness for (UN) follows from the fact that restricted
    models are closed under the operation $M\mapsto M\oplus U$
    (Corollary~\ref{corollary:ydel-closure}).

  \end{itemize}
  Completeness follows a similar argument as was used for $\RDETL$
  (Theorem~\ref{theorem:rdetl-completeness}).
\end{proof}

\begin{corollary}
  For each $\varphi\in\ldetl$, we have:
  \[
  \vdash_\YDEL\varphi \qquad\text{iff}\qquad
  \vdash_\RDETL\varphi^\sharp\enspace.
  \]
\end{corollary}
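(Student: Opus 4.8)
The plan is to derive the corollary as a direct composition of the two completeness metatheorems with the semantic bridge supplied by Theorem~\ref{theorem:ydel-rdetl}. First I would record two bookkeeping observations. The translation $\varphi^\sharp$ of any $\lydel$-formula is always an $\lrdetl$-formula: this is the content of Lemma~\ref{lemma:usharp-rdetl} together with the remark that the image of $\sharp$ lies in $\lrdetl\cup\Actm(\lrdetl)$, and it is exactly what licenses applying $\RDETL$ soundness and completeness (Theorem~\ref{theorem:rdetl-completeness}) to $\varphi^\sharp$. Second, $\restricted_\pt$ and $\restricted_*$ (with $\pt$ expanding to $*$) denote the same class of pointed restricted models, so the two validity relations $\models_\YDEL$ and $\models_\RDETL$ quantify over the same pointed models.

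The core of the argument is then the chain of equivalences
\[
\vdash_\YDEL\varphi
\ \Longleftrightarrow\ \models_\YDEL\varphi
\ \Longleftrightarrow\ \models_\RDETL\varphi^\sharp
\ \Longleftrightarrow\ \vdash_\RDETL\varphi^\sharp .
\]
The first biconditional is soundness and completeness of $\YDEL$ with respect to $\restricted_\pt$ (Theorem~\ref{theorem:ydel-completeness}); unwinding the definition of validity, $\models_\YDEL\varphi$ asserts that $M,w\models_\YDEL\varphi$ for every $(M,w)\in\restricted_\pt$. The middle biconditional is obtained by applying Theorem~\ref{theorem:ydel-rdetl}\emph{(b)} pointwise: for each fixed $(M,w)\in\restricted_*$ we have $M,w\models_\YDEL\varphi$ iff $M,w\models_\RDETL\varphi^\sharp$, and quantifying over all such pointed models converts $\YDEL$-validity into $\RDETL$-validity, i.e.\ $\models_\RDETL\varphi^\sharp$. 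The final biconditional is soundness and completeness of $\RDETL$ (Theorem~\ref{theorem:rdetl-completeness}), applicable precisely because $\varphi^\sharp\in\lrdetl$ by the first observation above.

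I expect essentially no technical obstacle, since all the substantive work has already been carried out in the cited theorems; the corollary is a transitivity argument at the meta-level. The only points demanding care are the two bookkeeping observations flagged above---that $\varphi^\sharp$ lands in $\lrdetl$ so that the $\RDETL$ metatheorem is genuinely available, and that the $\YDEL$ and $\RDETL$ validity relations are taken over the same class $\restricted_\pt=\restricted_*$---together with the routine step of moving the universal quantifier over pointed models across the pointwise equivalence of Theorem~\ref{theorem:ydel-rdetl}\emph{(b)}. Given these, composing Theorems~\ref{theorem:ydel-completeness}, \ref{theorem:ydel-rdetl}, and \ref{theorem:rdetl-completeness} yields the claim.
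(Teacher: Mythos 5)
Your proposal is correct and matches the paper's own proof, which is exactly the composition of Theorems~\ref{theorem:ydel-completeness}, \ref{theorem:ydel-rdetl}, and \ref{theorem:rdetl-completeness} via the chain $\vdash_\YDEL\varphi \Leftrightarrow \models_\YDEL\varphi \Leftrightarrow \models_\RDETL\varphi^\sharp \Leftrightarrow \vdash_\RDETL\varphi^\sharp$. Your two bookkeeping observations (that $\varphi^\sharp$ lands in $\lrdetl$, and that both validity relations quantify over the same class $\restricted_\pt$) are exactly the implicit prerequisites the paper leaves unstated, so your write-up is simply a more explicit rendering of the same argument.
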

\begin{proof}
  By Theorems~\ref{theorem:ydel-completeness},
  \ref{theorem:ydel-rdetl}, and
  \ref{theorem:rdetl-completeness}.
\end{proof}

\section{Conclusion}

We have presented \emph{Dynamic Epistemic Temporal Logic} ($\DETL$), a
general framework for reasoning about transformations on Kripke models
with a designated timekeeping relation $\leadsfrom$.  Our ``temporal''
action models are a generalization of the atemporal action models
familiar from Dynamic Epistemic Logic.  We showed by way of a number
of examples how temporal action models can be used to reason about and
control the flow of time.  We also highlighted some key design choices
that allow this framework to avoid conceptual complications relating
to time and that enable us to define actions that preserve a complete
copy of the past state of affairs.  This leads to one natural choice
for understanding time: the time of a world is the depth of that world
(i.e., the maximum number of ``backward'' temporal steps one can take
from that world, whenever this maximum exists).

Kripke models with a designated timekeeping relation are essentially
the models of Epistemic Temporal Logic.  Therefore, one way of
understanding our work is as follows: we extend the domain of action
model operations from those on (atemporal) Kripke models to (what are
essentially) the models of Epistemic Temporal Logic.  
We showed that a number of properties that may arise in the latter 
models---such as Persistence of Facts, Perfect Recall, and Synchronicity---are
preserved under the application of temporal action models that themselves satisfy certain
related properties.  
This makes it possible to use our $\DETL$
framework to develop Dynamic Epistemic Logic-style theories of
temporal Kripke models. Such logics can be used to reason about
objective changes in time along with the agents' basic and
higher-order knowledge and beliefs about changes in this structure.
As an example, we showed how the $\DETL$ approach can be used to
define the logic $\RDETL$ of ``restricted'' Dynamic Epistemic Temporal
Logic, which is essentially the Dynamic Epistemic Logic of synchronous
actions with the ``yesterday'' temporal operator $[Y]$.  We proved
that $\RDETL$ reasoning captures the reasoning of $\YDEL$, the first
Dynamic Epistemic Logic of synchronous time with the yesterday modal.

The $\DETL$ approach is not, however, limited to synchronous systems.
We presented one example where a synchronous model is transformed into
an asynchronous one, leaving one agent sure that two clock ticks
occurred, and the other uncertain as to whether it was one or two.  We
contrasted this with a synchronous variant in which the agents'
knowledge about atemporal information is the same, but the knowledge
change is compressed into a single clock tick that is common knowledge.  Here
we see that the difference is easily discernable by a simple
examination of the temporal action models involved.  In essence, our
theory extends the types of knowledge change describable by
atemporal action models to the temporal setting, which gives us a
great deal of control as to the relationship between how much time
passes and what the agents perceive of this passage.  And we of
course also inherit many features (and drawbacks) of the atemporal
action model approach.

In closing, we mention a recent study of time in Dynamic Epistemic
Logic that looks at asynchronous systems \cite{deglowwit11}.  The
basic idea is that an atemporal action model operates on a temporal
Kripke model in such a way that an agent experiences a single clock
tick if her knowledge of atemporal information changes, but
will otherwise be uncertain as to whether the clock ticked.  So, for
example, if agent $a$ does not know $p$, a public announcement of $p$
will transform her knowledge in a synchronous manner: the clock will
tick once, she will learn $p$, and she will know that the clock ticked
once.  But if the public announcement of $p$ then occurs again, the
clock will tick but, since she already knows $p$ and hence her
atemporal knowledge will not change, she will be uncertain as to
whether the clock ticked once or not at all.  The result is an
asynchronous situation.

Though we have shown (by way of an example) that $\DETL$ can reason
about some asynchronous updates, we have not proved that it can reason
about every such update.  Nor have we shown that it can reason about a
certain class of asynchronous updates that can be independently
identified according to some desirable properties it satisfies.  In
particular, it is not clear if there is a $\DETL$ action model for all the
updates that can be produced by the framework of \cite{deglowwit11}.  
Moreover, the latter approach is based on ``protocols'' constraining the sequences of actions
that can occur, something we have left out of the present study for
simplicity.  Another complication is that the asynchronous updates of
\cite{deglowwit11} essentially insert agent arrows $\to_a$ based on
whether a certain knowledge condition is satisfied, whereas our
temporal action models do not allow us to conditionally insert arrows.
This suggests that there may be connections with ``arrow update
logics'' \cite{KooRen11:RSL,KooRen11:TARK} that allow such conditions
on arrows.  In particular, it has been shown that generalized
arrow updates are equivalent to atemporal action models in terms of
update expressivity \cite{KooRen11:TARK}.  Therefore, an arrow update
version of $\DETL$ might suggest a natural way to represent
asynchronous updates like those of \cite{deglowwit11}, and this may
turn out to be equivalent in update expressivity to our present
approach, just as in the atemporal case.  If this is so, then it may
be the case that ``conditional'' arrow changes are already within the
scope of our current approach, albeit indirectly.

Regardless, we believe that $\DETL$ presents a viable option for
developing Dynamic Epistemic Logic-style theories of Epistemic
Temporal Logic.  While this paper (and its early predecessor
\cite{RenSacYap09}) present the first steps of this study, there is
clearly still much more work to be done.


\bibliographystyle{plain}
\bibliography{rsy-detl}

\end{document}